\tikzset{
mybrace/.style={decorate,decoration={brace,aspect=#1,amplitude=10pt}}
}
\tikzset{
mybrace2/.style={decorate,decoration={brace,aspect=#1,amplitude=20pt}}
}
\newcommand{\neutralize}[1]{\expandafter\let\csname c@#1\endcsname\count@}
\newtheorem{thrm}{Theorem}
\newtheorem{prop*}{Proposition}
\newtheorem{proposition}{Proposition}
\newtheorem{cor}{Corollary}
\newtheorem{thm*}{Theorem}
\newtheorem{lemma}{Lemma}
\theoremstyle{definition}
\newtheorem{defin}{Definition}
\newtheoremstyle{named}{}{}{\itshape}{}{\bfseries}{.}{.5em}{\thmnote{#3}#1}
\theoremstyle{named}
\newtheorem*{namedtheorem}{}
\newenvironment{thrmbis}[1]
  {%
   \neutralize{thrm}\phantomsection
   \begin{thrm}}
  {\end{thrm}}
\title{Innovation and Strategic Network Formation}
\author{Krishna Dasaratha\thanks{Yale University. Email: krishnadasaratha@gmail.com. I am deeply grateful to Drew Fudenberg, Benjamin Golub, Matthew Rabin, and Tomasz Strzalecki for their guidance and support. I would also like to thank Andrea Galeotti and three anonymous referees for their very helpful suggestions and Daron Acemoglu, Mohammad Akbarpour, Leonie Baumann, Yochai Benkler, Laura Doval, Edward Glaeser, Kevin He, Nir Hak, Scott Kominers, Shengwu Li, Jonathan Libgober, George Mailath, Michael Rose, Stefanie Stantcheva, Eduard Talam\`{a}s, Omer Tamuz, Rakesh Vohra, and Alexander Wolitzky for valuable comments and discussion.}}
\date{\today}
\begin{document}

\maketitle

\begin{center}\textbf{Abstract}
\end{center}

We study a model of innovation with a large number of firms that create new technologies by combining several discrete ideas. These ideas are created via private investment and spread between firms. Firms face a choice between secrecy, which protects existing intellectual property, and openness, which facilitates learning from others. Their decisions determine interaction rates between firms, and these interaction rates enter our model as link probabilities in a learning network. Higher interaction rates impose both positive and negative externalities, as there is more learning but also more competition. We show that the equilibrium learning network is at a critical threshold between sparse and dense networks. At equilibrium, the positive externality from interaction dominates: the innovation rate and welfare would be dramatically higher if the network were denser. So there are large returns to increasing interaction rates above the critical threshold. Nevertheless, several natural types of interventions fail to move the equilibrium away from criticality. One effective policy solution is to introduce informational intermediaries, such as public innovators who do not have incentives to be secretive. These intermediaries can facilitate a high-innovation equilibrium by transmitting ideas from one private firm to another.
\pagenumbering{gobble}
\newpage
\pagenumbering{arabic}
\setcounter{page}{1}

\setstretch{1.5}

\section{Introduction}

A growing body of empirical research suggests that interactions between inventors are an important part of innovation.\footnote{The benefits from interactions between inventors and movement of inventors have been quantified empirically by \cite*{akcigit2018dancing}, \cite*{kerr2008ethnic}, \cite*{samila2011venture}, among others.} New technologies are often produced by combining individual insights with learning from peers, which confers large benefits on firms and inventors engaged in such learning. When highly-connected clusters of firms emerge in a location, as in the technology industry in Silicon Valley, inventors in these areas are much more productive.

But frequent collaboration and learning are not assured even when inventors in a given industry co-locate (\citealp*{saxenian1996regional} gives a well-known example). Rather, interaction patterns depend on firms' decisions, such as how much to encourage their employees to interact with employees from other firms. These interactions let a firm learn from other companies and inventors. There are also downsides for the firm, as employees may share valuable information. A more secretive approach allows firms to prevent potential competition by protecting intellectual property. In making these types of decisions, firms and inventors must choose between openness and secrecy.

We develop a theory of firms' endogenous decisions about how much to interact with other firms, and the consequences for information flows and the rate of innovation. Firms' choices determine the probabilities of links in a \emph{learning network}; we thus contribute to a theoretical literature on network formation.\footnote{An important feature of these choices is that firms combine different ideas to create new technologies (see also \citealp*{acemoglu2020endogenous} and \citealp*{chen2021capability} for related combinatorial production functions). Strategic complementaries play a crucial role in many network games (e.g., \citealp*{bramoulle2014strategic} and \citealp*{ballester2006who}), and in our setting complementarities arise endogenously from this process of combining ideas.} We show that at equilibrium, the  learning network has a special structure that would be unlikely to arise exogenously: it is at a \emph{critical threshold} between sparse and dense networks. This implies extreme inefficiencies arise at equilibrium, and we analyze the welfare and policy implications.

We study a framework where each firm chooses two intensities: how open to be as well as how much to invest in R\&D. The choices of levels of openness determine interaction rates between firms, and the probability that one firm learns from another is equal to the interaction rate between the two firms. If a given firm is more open, that firm  is more likely to learn \emph{ideas} from  other firms---but other firms are more likely to learn its ideas. These ideas are valuable because they can be combined to make new \emph{technologies}, which are finite sets of ideas (as in  \citealp*{weitzman1998recombinant}). Firms can generate profits by producing technologies, but the profits from a technology are erased by competition if another firm also knows the component ideas in that technology.

Our first contribution, which is methodological, is to develop a theory of endogenous formation of random networks in the context of our economic application. Learning opportunities are random events, and their realizations determine a learning network. We therefore consider link formation decisions with uncertainty in the matching process, while the leading approach in the literature on network-formation games focuses on deterministic models that allow agents to choose particular links (\citealp*{jackson1996strategic} and \citealp*{bala2000noncooperative}).  Since we take actions to be continuous choices that translate to interaction rates, optimal behavior satisfies first-order conditions rather than a high-dimensional system of combinatorial inequalities.

A key feature of our model is that ideas can spread several steps through this network: when one firm learns from another, the information transferred can include ideas learned from a third firm. We refer to this as indirect learning.\footnote{By contrast, existing work on strategic formation of random networks largely focuses on direct connections (see, e.g., \citealp*{currarini2009economic}).} Under indirect learning, firms' incentives depend on the global structure of the network. Each firm would like to learn many ideas, since then the firm could combine these ideas to produce a large number of new technologies, and much of this learning can be indirect.

Analyzing the global structure of the network leads to our second contribution, which is to establish the \emph{criticality of equilibrium}.  When there are many firms, learning outcomes depend dramatically on whether the learning network is sparsely connected or densely connected. If firms' interaction rates are below a critical threshold, the learning network consists of many small clusters of firms who learn few ideas. Above the threshold, the learning network has a {giant component} asymptotically: a large group of firms who learn a large number of ideas and can incorporate these ideas into many new technologies. To determine where equilibrium lies with respect to this threshold, we analyze an individual firm's decision problem in each of these two domains, i.e. when other firms form a sparse or dense network.

The main result is that the equilibrium interaction rates are at the critical threshold between sparse and dense networks. Firms would deviate to interact more if the network were likely to be sparse, and deviate to interact less if the network were likely to be dense. Intuitively, in sparse networks firms would increase interaction rates to fill central positions in the network, known in sociology as `structural holes', which enable the firm to combine ideas learned via different interactions.\footnote{The concept of structural holes, introduced by \cite*{burt1992structural}, refers to network positions allowing agents to combine information from different connections or spread information between groups.} As others interact more, these structural holes disappear, and indeed firms tend to learn the same ideas repeatedly from different interactions. So the incentives to be more open are weaker relative to the incentives to be secretive.

We next consider the implications of the main result for welfare and policy. Since equilibrium is at the critical threshold, a giant component would emerge if all firms shifted interaction rates slightly above equilibrium levels. Firms learn relatively few ideas at the threshold, but could learn many more with a bit more interaction. More learning has benefits but also leads to more competition, which may discourage private investment. We show there are still unboundedly large improvements in innovation and welfare (as the number of firms grows large) from interventions that increase interaction rates above equilibrium levels. A consequence is that increasing interaction rates is a first-order concern in designing policy. By contrast, policies targeting  decisions about private investment rather than interaction, such as subsidies to R\&D, have minimal effect at equilibrium---but can be valuable if paired with interventions to increase openness.

Intervening to improve equilibrium learning networks is difficult, and policies such as broadly subsidizing interaction will be undermined by firms' equilibrium responses. We discuss one type of policy change that does induce more productive interaction patterns, which is to introduce \textit{public innovators} who do not have incentives to be secretive. For example, governments could fund academic researchers who are especially willing to interact with other researchers, including those in industry. The key is that public innovators can serve as informational intermediaries, transmitting ideas between private firms. They play a valuable role even after considering the equilibrium response of the profit-maximizing firms, who may adjust to be more secretive.

We next explore which features of the baseline model are needed to obtain a critical equilibrium. One assumption is that learning probabilities are symmetric across pairs of firms. We show that equilibrium remains critical even when firms have different propensities to learn from others, which allows some firms to be better at protecting ideas than others. The key feature driving the main result is that there is a margin along which firms can acquire incoming links at the cost of a higher probability of outgoing links.

The criticality of equilibrium is robust to alternate specifications of the benefits from learning from others but more sensitive to the costs of outgoing links. The baseline model assumes firms' profits are additive across technologies, but equilibrium remains critical if there are increasing or slightly decreasing returns to producing many technologies. Indeed, the crucial property is that payoffs are convex in the number of ideas learned by a firm. Equilibrium outcomes do depend, however, on how much firms stand to lose from outgoing links.  In particular, the results described above assume zero profits in competitive markets. If profits under competition are instead positive (e.g., because of a first-mover advantage or markets with collusion between firms), then incentives toward secrecy will be weaker and so equilibria will be above the critical threshold. These results give testable predictions about the relationship between market structure and outcomes such as the innovation rate.

 
Our final results ask how formal intellectual property rights change the incentives to interact. Consider the consequences of granting patents to a positive fraction of ideas, e.g., allowing hardware but not software ideas to be patented. Patents mitigate firms' incentives to be secretive, but can also discourage exchange of ideas. Firms with patents are more open but are also less desirable partners in interactions (at least when ideas are only transmitted directly). The resulting adverse selection in interaction can deter firms from collaborating with others. We show that patent rights can therefore prevent any productive interactions at equilibrium. If indirect learning is important, firms with patents will be informational intermediaries, like the public innovators above. In this case there are benefits to allowing patents, but it turns out that the optimal policy is often to only allow patents for a very small fraction of ideas.

%

At a technical level, this paper develops tools for studying incentives in random network settings. These tools are most applicable to analyzing decisions in network models with complementarities between indirect connections. Classical results in graph theory characterize the component structure of large random networks, and thus in our context the number of ideas firms will learn (\citealp*{karp1990transitive} and \citealp*{luczak1990phase}). But due to the complementarities between ideas, firms' incentives also depend on \emph{how} these ideas are learned, e.g., via many interactions or a few interactions. To capture these complementarities, we prove a key lemma relating a firm's equilibrium action to the extent to which technologies combine ideas from distinct interactions. An additional challenge is that to understand incentives, it is not enough to analyze ``leading terms''. Vanishing-probability events and lower-order terms in link probabilities could substantially affect payoffs. Our analysis therefore requires careful treatment of the graph branching process governing the number of ideas learned from each interaction.


\subsection{Related Literature}

This paper relates to research in network theory, especially network formation, and to models of innovation.

At a methodological level, we develop a theory of strategic network formation with probabilistic links. A large literature since \cite*{jackson1996strategic} and \cite*{bala2000noncooperative} considers endogenous network formation assuming that agents can choose their links exactly.\footnote{The pairwise stability solution concept from \cite*{jackson1996strategic} and variants have been applied to network formation in many settings, including innovation (\citealp*{goyal2001r}, \citealp*{konig2011recombinant}).} Because equilibrium is then characterized by a large system of inequalities, these models illustrate key externalities in special cases but remain largely or entirely intractable in many others. By instead considering agents making a single choice about openness under uncertainty, we obtain a smooth model of link formation that can be solved via basic optimization techniques combined with analyses of random graphs.

Under this random-network model of network formation, incentives to form links depend on the `phase transitions' between sparse and dense networks.\footnote{\cite*{golub2010strategic} also study network formation with phase transitions, and allow payoffs to depend on distance one and two connections. An important feature of our model is that firms' decisions depend on the global network structure rather than only local connections.} Economic models involving  phase transitions have been recently explored in the context of diffusion processes by \cite*{campbell2013word}, \cite*{akbarpour2018diffusion}, and \cite*{sadler2020diffusion}. These models let adoption and/or seeding decisions depend on component structure in an underlying network. We instead study equilibria of a game in which agents endogenously make decisions about how much to interact with others, and find there is a subtle interplay between strategic incentives and the global network structure.

An alternate approach to smooth network formation is to consider weighted networks, so that each link has an intensity (as in \citealp*{cabrales2011social}, \citealp*{baumann2021model}, and \citealp*{griffith2019continuous}).\footnote{Relatedly, \cite{erol2020civil} consider a continuum model where agents endogenously and randomly form links with a positive fraction of peers.} In some settings with unweighted links, this approach yields a deterministic approximation of a game on the underlying random network. But for discrete processes such as the diffusion of an idea, the random and deterministic models can be very different. Indeed, we show that there is a critical threshold corresponding to important discontinuities in network structure and outcomes that do not arise in deterministic models such as \cite*{cabrales2011social}.\footnote{\cite{berliant2008knowledge} study a related deterministic model of knowledge creation with individuals choosing to producing knowledge alone or with a partner. Their analysis focuses on interaction patterns in small disconnected groups, while we find social networks can be more connected in important ways.}

In existing literature on innovation, approaches incorporating interactions between firms generally model these interactions as either mechanical spillovers or learning via imitation. A common approach is to choose a convenient functional form for spillovers, usually motivated by tractability within a macroeconomic (e.g., \citealp*{kortum1997research}) or network-theory (\citealp*{konig2012efficiency}) framework.\footnote{In simulations, \cite*{baum2010network} study the formation of innovation networks via a mechanical process.} By microfounding these spillovers, which arise endogenously within the innovative process, we can study how spillovers respond to policy interventions.

In a different approach, which relies on a quality-ladders framework, interactions give firms a chance to catch up as innovation proceeds vertically through improvements in the quality of existing technologies (e.g., \citealp*{perla2014equilibrium}, \citealp*{akcigit2018dancing}, and \citealp*{konig2016innovation}). We instead explicitly model innovation horizontally as a process of combining distinct ideas, which serve as building blocks for new technologies. Related models appear in \cite*{weitzman1998recombinant} and \cite*{acemoglu2020endogenous}, which focus on the evolution of the total amount of innovation over time and do not involve learning or informational spillovers between firms. We find that when new technologies can be created in this way, changes in interaction patterns can have much larger consequences for the rate of innovation than in quality-ladders models.\footnote{By assuming a continuum of firms, macroeconomic models of imitation often implicitly restrict to network structures without a giant component.}

\section{Model}\label{sec:model}

We first describe our model formally and provide an example. We will then discuss interpretations of the model and its assumptions.

\subsection{Basic Setup}

The model includes two key concepts: ideas and technologies. Ideas are the components in technologies, and each technology combines several ideas. Firms will obtain profits from selling new technologies, and must acquire ideas to produce these technologies.

There are $n>1$ firms $1,\hdots,n$. Each firm $i$ can potentially discover a distinct \textbf{idea}, also denoted by $i$. We assume each firm can discover a single idea for simplicity, and Appendix~\ref{sec:size} allows firms to potentially discover multiple ideas. We let $I \subset \{1,\hdots,n\}$ be the set of ideas that are discovered.

Each firm $i$ chooses a probability $p_i \in [0,1)$ of discovering this idea and pays investment cost $c(p_i)$. We will assume that $c$ is continuously differentiable, increasing, and convex with $c(0)=0$ and $\lim_{p \rightarrow 1^-} c(p)=\infty$. The realizations of discoveries are independent.

A \textbf{technology} $t =\{i_1,\hdots,i_k\}$ is a set of $k$ ideas $i_1,\hdots,i_k \in I$, where $k>1$ is an exogenous parameter capturing the complexity of technologies.\footnote{In the baseline model, the parameter $k$ is the same for all firms.} Each idea $i \in t$ must be discovered by the corresponding firm to be included in a technology. There are therefore $\binom{n}{k}$ potential technologies, and a firm $i$ can produce more than one technology. An equivalent interpretation, which we describe in Section~\ref{sec:discussion}, is that firms can explore many potential technologies and only a small fraction are feasible to bring to market.

Each firm $i$ chooses a level of \textbf{openness} $q_i\in [0,1]$. Given choices $q_i$ and $q_j$, the interaction rate between $i$ and $j$ is $\iota(q_i,q_j)$. We will assume the multiplicative interaction rate $$\iota(q_i,q_j)=q_iq_j$$
except when a more general interaction rate is explicitly stated (in Theorem~\ref{thm:interaction}). The interaction rate determines the probability that firm $i$ learns from firm $j$ and vice versa, as described below.

The timing of the model is simultaneous: firms choose actions $p_i$ and $q_i$ and then all learning occurs. We denote the vectors of actions by $(\mathbf{p},\mathbf{q})$. When actions are symmetric, we will refer to $p_i$ by $p$ and $q_i$ by $q$.

Given actions $\mathbf{p}$ and $\mathbf{q}$, we denote the set of ideas that firm $i$ learns from others by $I_i(\mathbf{p},\mathbf{q}) \subset I$. This is a random set depending on realizations of learning and discoveries. We now describe how learning occurs.

With probability $\iota(q_i,q_j),$ firm $i$ learns directly from firm $j$. In this case, firm $i$ learns idea $j$ if $j\in I$. If firm $i$ learns directly from firm $j$, then with probability $\delta \in [0,1]$, firm $i$ also learns indirectly through firm $j$. In this case, firm $i$ also learns all ideas in $I_j(\mathbf{p},\mathbf{q})$. All realizations of direct and indirect learning are independent, and in particular, firm $i$ can learn from firm $j$ without $j$ learning from $i$.

When $\delta=0$ there is only \textbf{direct learning}, while when $\delta>0$ \textbf{indirect learning} can also occur. When $\delta>0$ we define a directed network, which we call the indirect-learning network, with nodes $1,\hdots,n$ and a link from node $j$ to node $i$ if firm $i$ learns indirectly through firm $j$.

A firm $i$ receives payoff $1$ from each \textbf{proprietary technology} $t$. A technology $t$ is proprietary for firm $i$ if (1) $i \in t$ and (2) $i$ is the unique firm such that $j \in \{i\} \cup I_i(\mathbf{p},\mathbf{q}) $ for all $j \in t$. In words, the technology contains firm $i$'s idea and firm $i$ is the unique firm that knows all ideas in the technology, so firm $i$ has a monopoly over the technology.

If $t$ is not a proprietary technology for firm $i$, then firm $i$ receives payoff $0$ from the technology $t$. In Section \ref{sec:generalprofits}, we will consider more general payoff structures in which (1) payoffs are not additive across technologies and (2) firms can receive non-zero payoffs $f(m)$ if $m>0$ other firms know all ideas contained in a technology.

\subsection{Payoffs}

Given actions $(\mathbf{p},\mathbf{q})$ and a firm $i$, we define $T_i(\mathbf{p},\mathbf{q})$ to be the set of technologies $t$ such that (1) $i \in t$ and (2) firm $i$ knows all ideas $j\in t$. The \textbf{proprietary technologies} $PT_i(\mathbf{p},\mathbf{q})  \subset T_i(\mathbf{p},\mathbf{q})$ for $i$ are then the subset of technologies $t \in T_i(\mathbf{p},\mathbf{q})$ such that no other firm knows all ideas in $t$. Note that these sets are random objects depending on link realizations. The expected payoff to firm $i$ is
$$U_i(\mathbf{p},\mathbf{q})=\mathbb{E}\left[|PT_i(\mathbf{p},\mathbf{q})|\right] -c(p_i).$$

To further illustrate payoffs, we write the cardinality of $PT_i(\mathbf{p},\mathbf{q})$ explicitly when $\delta=1$. Recall that $I_i(\mathbf{p},\mathbf{q})$ is the set of ideas learned by firm $i$ given actions $(\mathbf{p},\mathbf{q})$. Like $PT_i(\mathbf{p},\mathbf{q})$, this is also a random object.

When $\delta = 1$, the expected payoffs to firm $i$ are
$$U_i(\mathbf{p},\mathbf{q})=p_i \cdot \mathbb{E}\left[\binom{|I_i(\mathbf{p},\mathbf{q})|}{k-1}\right] \cdot \prod_{j \neq i}(1-\iota(q_i,q_j)) - c(p_i).$$
A technology $t$ that $i$ profits from consists of $i$'s private idea, which is developed with probability $p_i$, and a choice of $(k-1)$ other ideas known to $i$. The firm $j$ faces competition if and only if some firm learns all of $i$'s ideas, and the probability that this does not occur is $\prod_{j \neq i}(1-\iota(q_i,q_j))$. Finally, the private investment cost is $c(p_i)$.

In general, a firm can face competition for a technology $t$ in two ways. First, a firm $j$ can learn all of firm $i$'s ideas via indirect learning. Second, a firm $j$ can learn $i$'s private idea directly and then the other ideas in the technology $t$ from links with firms other than $i$. The probability of the second possibility is more difficult to express in closed form, and in general depends on the technology $t$. We will show that when there is not too much interaction, most competition comes via the first channel.

\subsection{Example}

To illustrate the mechanics of the model, we describe a simple example with $n=4$ firms and complexity $k=3$. Suppose that firms choose some actions $(\mathbf{p},\mathbf{q})$. As an example, we consider the particular realizations such that (1) ideas are discovered by firms in $I = \{1,3,4\}$ and (2) firm $1$ learns indirectly through firm $2$ and directly from firm $3$, firm 3 learns indirectly through firm $1$, and firm $3$ learns directly from firm 4.

The network and ideas are shown in Figure \ref{fig:example1}. Black circles correspond to firms with ideas $i \in I$, i.e., firms that discover ideas, while white circles correspond to firms with ideas $i \notin I$, i.e., firms that do not discover ideas. Solid arrows denote indirect learning links, while dashed arrows indicate only direct learning occurred.

\begin{figure}
\begin{center}
\includegraphics[scale=.4, trim = 0cm 2cm 0cm 2cm]{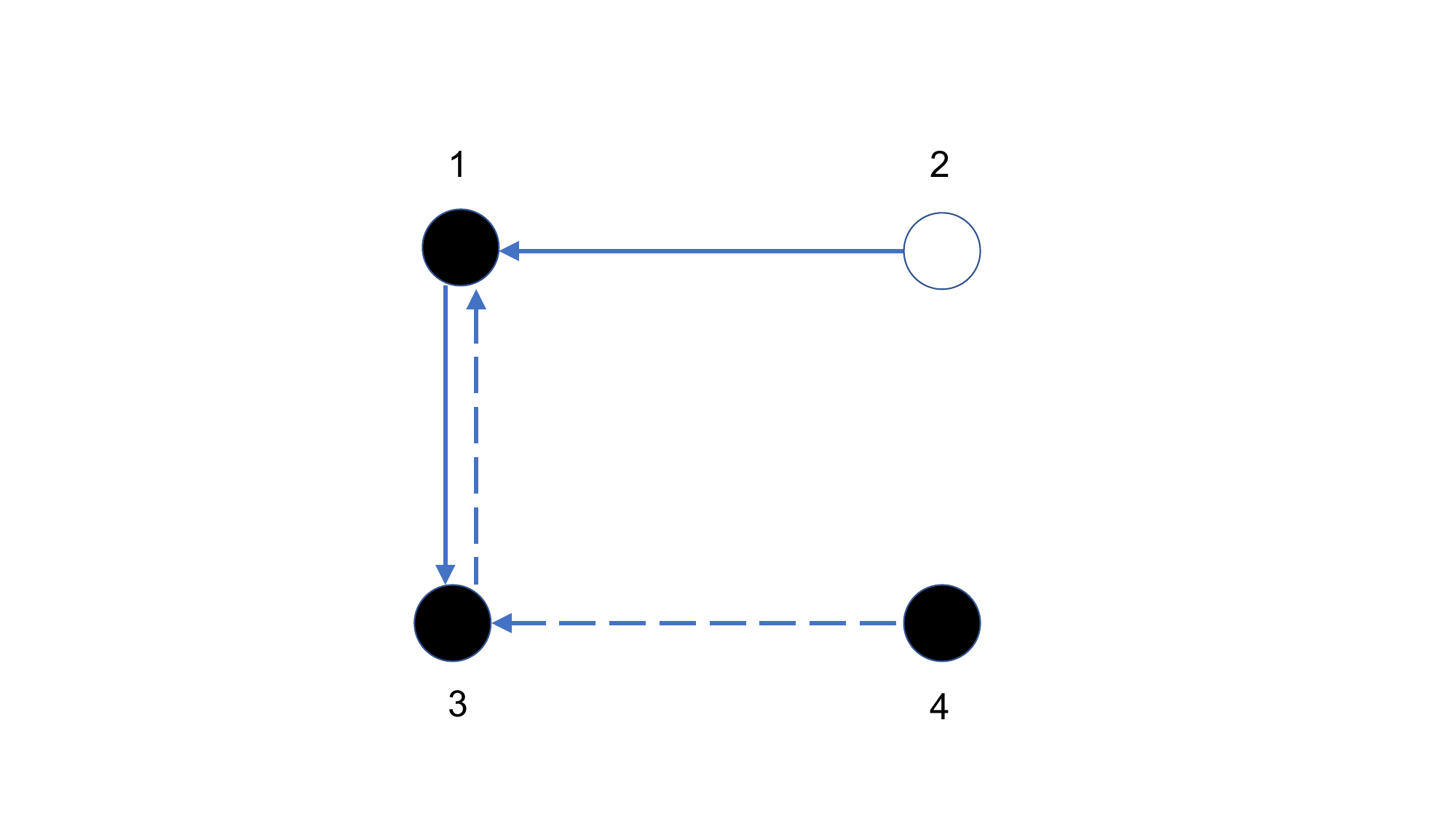}

\caption{Network with four firms and $k=3$. Black circles are firms that discover ideas while white circles do not discover ideas. Arrows are in the direction of information flow, and dashed arrows indicate direct learning while solid arrows indicate indirect learning. The only technology produced is $t=\{1,3,4\}$ and firm $3$ receives monopoly profit.}\label{fig:example1}
\end{center}
\end{figure}

Since $k=3$, the unique technology $t$ consisting of ideas in $I$ is $t = \{1,3,4\}$. The realizations of the sets $I_i(\mathbf{p},\mathbf{q})$ of ideas learned from others are
$$I_1(\mathbf{p},\mathbf{q})= \{3\}, I_2(\mathbf{p},\mathbf{q}) = \emptyset, I_3(\mathbf{p},\mathbf{q})=\{1,4\},I_4(\mathbf{p},\mathbf{q})=\emptyset.$$
Because firm $3$ is the unique firm such that $t \subset I_i(\mathbf{p},\mathbf{q}) \cup \{i\}$ and we have $3 \in t$, firm $3$ produces the technology $t$ and receives monopoly profit of $1$ for that technology. There are no profits from any other technologies.

\begin{figure}
\begin{center}
\includegraphics[scale=.4, trim = 0cm 2cm 0cm 2cm]{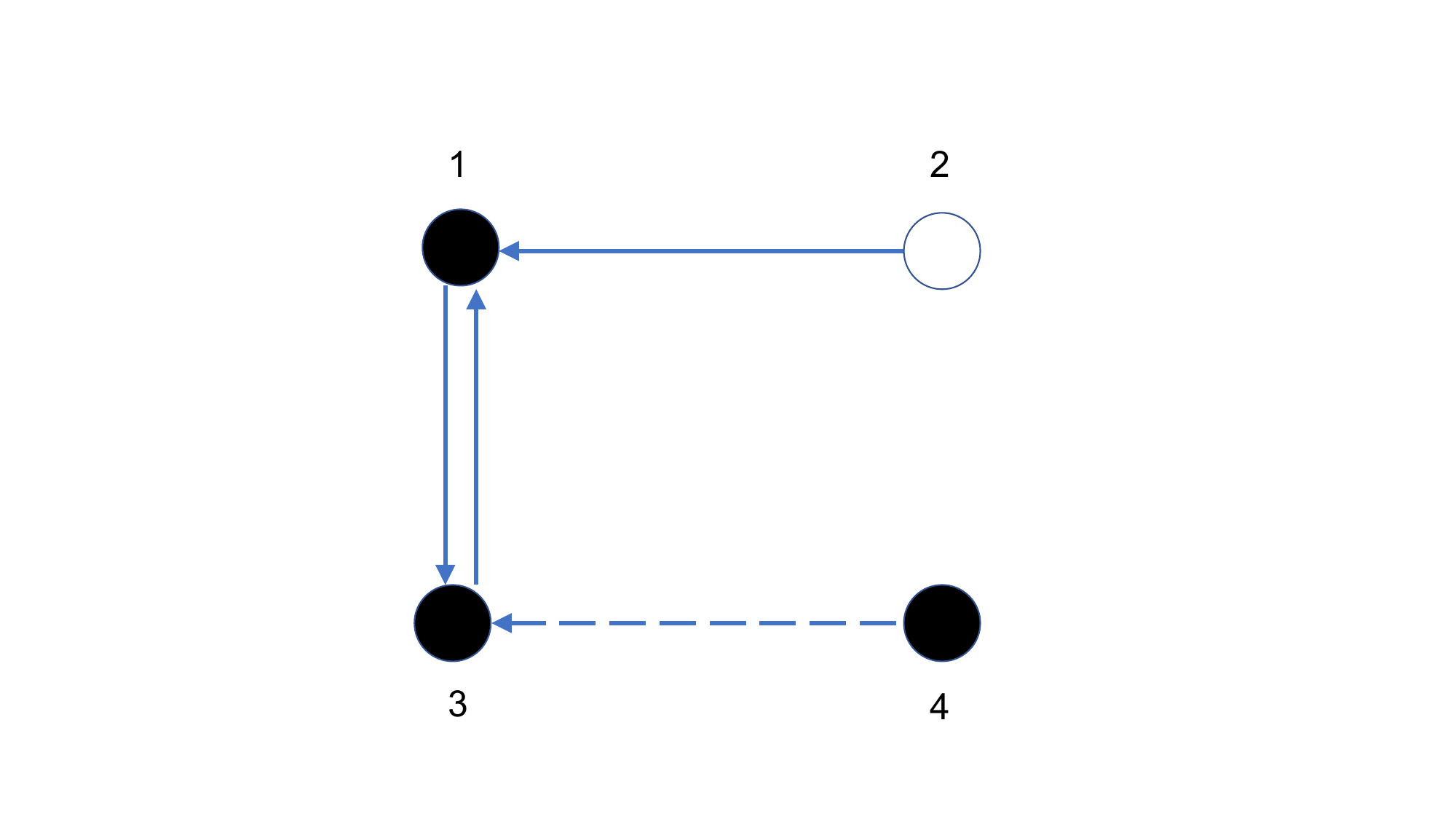}

\caption{Network with four firms and $k=3$. Black circles are firms that discover ideas while white circles do not discover ideas. Arrows are in the direction of information flow, and dashed arrows indicate direct learning while solid arrows indicate indirect learning. Firm $1$ now learns indirectly from firm $3$, unlike in Figure~\ref{fig:example1}. The only technology produced is $t=\{1,3,4\}$, and there are no profits because firms $1$ and $3$ both produce $t$.}\label{fig:example2}
\end{center}
\end{figure}

Suppose instead that firm $1$ also learns indirectly through firm $3$, as shown in Figure \ref{fig:example2}. Then we have
$$I_1(\mathbf{p},\mathbf{q})= \{3,4\}, I_2(\mathbf{p},\mathbf{q}) = \emptyset, I_3(\mathbf{p},\mathbf{q})=\{1,4\},I_4(\mathbf{p},\mathbf{q})=\emptyset.$$
The only potential technology remains $t=\{1,3,4\}$. We now have $t \subset I_i(\mathbf{p},\mathbf{q}) \cup \{i\}$ for both firm $1$ and firm $3$, so both receive the competitive profit of zero for that technology. There are also no profits from other technologies.

\subsection{Interpretation and Discussion}\label{sec:discussion}

Before expressing expected payoffs of firms and defining equilibrium, we discuss interpretation and assumptions in the model.

\textbf{Actions:} Firm actions are choices $(p_i,q_i)$. The first component $p_i$ corresponds to a level of investment in R\&D. A small probability of a discovery is cheap, while probabilities close to one are very expensive.

The second component $q_i$ corresponds to a level of openness or secrecy in interactions with other firms. As one example, consider a technology company's decision about whether to locate in an area with many other technology firms. Locating near other firms will lead to more casual interactions between the employees of the firm making the choice and employees of other firms (e.g., at bars and restaurants), and information can be shared in either direction in these interactions.\footnote{A large body of work describes the role of geographical proximity in driving innovation (e.g., \citealp*{storper2004buzz}). \cite*{kelly2009technological} considers the impact of geography on information-sharing and describes a phase transition in exogenously determined networks; our approach differs in that we model firm decisions including factors such as location as an endogenous choice.} In addition to a firm's choice of location, the action $q_i$ could include decisions such as  whether to send employees to conferences and how much disclose ongoing R\&D to employees.

An important feature of the model is that increasing $q_i$ increases the probability that firm $i$ learns from other firms but also increases the probability that other firms learn from $i$.\footnote{\cite{stein2008conversations} gives a microfoundation for bilateral communication in the context of innovation.} The baseline model assumes that learning probabilities are symmetric: firm $i$ learns from firm $j$ with the same probability that firm $j$ learns from firm $i$. In Section~\ref{sec:secrecy}, we allow firms to have heterogeneous propensities to learn across firms and find this symmetric structure does not drive results.

The downside to interaction for a firm $i$ is the increased probability of outgoing links, not an exogenous link-formation cost. Because the costs of links are an endogenous feature of the model, our equilibrium characterization does not depend on functional forms of costs, as it would with exogenous link costs separate from the innovation process.\footnote{\cite*{acemoglu2017privacy} consider a similar link cost in a setting where the benefits depend only on direct connections and the network-formation game is deterministic. Outgoing links are undesirable in their model because of a primitive preference for privacy.}

\textbf{Formal and Informal Interactions:} The model is meant to primarily describe informal interactions between employees or firms, rather than more formal arrangements such as licensing agreements or joint R\&D ventures. As such, our results are most applicable to industries where formal property rights are imperfectly enforced (Section~\ref{sec:patent} discusses the interplay between informal interactions and more formal property rights). Because information transmitted via informal interactions can often spread several steps, an analysis considering global network structure is particularly relevant.

In Appendix~\ref{sec:size}, we compare the payoffs to firms with different numbers of private ideas. This analysis can also be interpreted as measuring the value of formal contracting arrangements allowing multiple firms to share ideas frictionlessly. We find that as the number of firms grows large, the benefits to such an arrangement are a vanishing fraction of a firm's expected profits.

\textbf{Interaction Rate:} The multiplicative interaction rate $\iota(q_i,q_j)=q_iq_j$ has the feature that firm $i$'s probability of learning from another firm  and that firm's probability of learning from $i$ are both proportional to $q_i$. Thus, this is the (unique up to rescaling) interaction rate that arises from a random matching process in which all agents choose a search intensity and the probability of learning in each direction is proportional to that intensity.

We will show in Section~\ref{sec:equilibrium} that under a symmetry assumption, the main result does not reply on the multiplicative functional form, and extends to any $\widetilde{\iota}:[0,1]\times[0,1] \rightarrow [0,1]$ satisfying several mild properties.

\textbf{Learning Network:} A useful assumption is that if firm $i$ learns indirectly through firm $j$, then firm $i$ learns \emph{all} ideas known to $j$. This ensures that there is a well-defined learning network, and this network is a central object in our analysis. If indirect learning were not perfectly correlated across ideas, there would be a separate learning network for each idea.
%

\textbf{Firm Profits:} The positive payoffs from producing proprietary technologies correspond to monopoly payoffs, which we normalize to $1$. Formally, all technologies give the same monopoly profits and these profits are deterministic. It would be equivalent to take monopoly profits to be randomly drawn from any distribution with finite mean, as long as firms have no information about the realizations a priori. For example, only a small constant fraction of technologies could actually be profitable enough to produce.

 If multiple firms know all ideas contained in $t$, then there is a competitive market and firms receive zero profits. This baseline payoff structure, which we generalize in Section \ref{sec:competition}, corresponds to Bertrand competition.

Our setup requires that monopolist firms must have privately developed one of the ideas in a technology to produce that technology, but competitors need not. To start a new market, some expertise and/or confidence in the quality of the relevant idea is needed. Once a market exists, however, entrants do not require this expertise, perhaps because relevant details can be obtained from the competitor's technology.


\section{Equilibrium}\label{sec:equilibrium}

In this section, we define and then characterize equilibrium in our model. The characterization first briefly describes investment equilibria under direct learning ($\delta=0$). The remainder of the section shows that equilibria are at a critical threshold under indirect learning ($\delta>0$) and discusses implications for innovation and policy.

Two assumptions that simplify the initial analysis are that firms are homogeneous (which we will relax in several ways, including Section \ref{sec:secrecy}) and that profits are equal to the number of proprietary technologies (which we relax in Section~\ref{sec:generalprofits}).

\subsection{Solution Concept}

We begin by defining our solution concept:
\begin{defin}
An \textbf{equilibrium} $(\mathbf{p}^*, \mathbf{q}^*)$ is a pure-strategy Nash equilibrium. An equilibrium $(\mathbf{p}^*,\mathbf{q}^*)$ is an \textbf{investment equilibrium} if $p_i^* > 0$ for all $i$.
\end{defin}

Because all choices $p_i$ and $q_i$ are probabilities of discoveries or interactions, we restrict to pure strategies.

If $p_i=0$ for all $i$, then any $\mathbf{q}$ will give an equilibrium: if no other firms are investing, there is no reason to invest and so payoffs are zero. It is easy to see these trivial equilibria always exist, and we will focus on investment equilibria.

For some of our results, it will also be useful to make the stronger assumption that private investment is non-vanishing asymptotically. We consider a sequence of equilibria as the number of firms $n \rightarrow \infty$.
\begin{defin}
A sequence of equilibria $(\mathbf{p}^*, \mathbf{q}^*)$ has \textbf{non-vanishing investment} if $$\liminf_n \min_i p_i^* > 0.$$
\end{defin}
Depending on $c(\cdot)$, there may be equilibria at which all firms choose very low levels of private investment because others are investing very little. The definition excludes these partial coordination failures as well.

\subsection{Direct Learning}\label{sec:directbrief}

We briefly summarize results with $\delta=0$ here, and give a full analysis in Appendix \ref{sec:direct}. In this case, ideas can spread at most one step.

There exists a symmetric investment equilibrium for $n$ large, and at any sequence of symmetric investment equilibria the interaction rate is
$$\iota(q^*,q^*) \approx \left( \frac{k-1}{n} \right)^{\frac1k}.$$
Since the interaction rate is of order $n^{-\frac1k}$, the probability that a generic firm knows all the ideas in a given technology is of order $\frac{1}{n}$. It follows that the probability that there exists competition on a given technology is constant.

For $n$ large, each firm learns from a large number of other firms with high probability. We will see that interaction rates are much lower in the indirect-learning case. With only direct learning much more interaction is needed to generate a substantial risk of competition, so the interaction rate must be higher for potential competition to meaningfully deter openness.

%

\subsection{Main Result}

Our main focus is the indirect learning case ($\delta>0$) in which ideas can spread multiple steps. We now show that when $\delta>0$, equilibrium networks are at the critical threshold asymptotically.

We begin by defining this critical threshold. Let the number of firms $n\rightarrow \infty$ and consider outcomes under a sequence of symmetric actions. We say that an event occurs a.a.s. (asymptotically almost surely) if the probability of this event converges to $1$ as $n \rightarrow \infty$. To simplify notation, we often omit the index $n$ (e.g., from the actions $(p_i,q_i)$.)

\begin{defin}
A sequence of symmetric actions with openness $\mathbf{q}$ is:
\begin{itemize}
\item \textbf{Subcritical} if $\limsup_n \iota({q},{q}) \delta n < 1$
\item \textbf{Critical} if $\lim_n\iota({q},{q})  \delta n = 1$
\item \textbf{Supercritical} if $\liminf_n \iota({q},{q}) \delta n > 1$
\end{itemize}
\end{defin}

The expected number of firms with links to $i$ in the indirect-learning network is $$\iota(q,q)\delta (n-1),$$ so the three cases distinguish networks where each firm learns indirectly less than once, approximately once, and more than once in expectation. In the subcritical case, it follows that the expected number of firms that learn a given idea is a finite constant. In the supercritical case, there is a positive probability that a given idea is learned by a large number of firms (i.e., a number growing linearly in $n$).

This intuition is formalized by results from the theory of random directed graphs (\citealp*{karp1990transitive} and \citealp*{luczak1990phase}). Adapting their results to this setting, we have the following result.  A \textbf{component} of a directed network  is a strongly-connected component, i.e., a maximal set of nodes such that there is a path from any node in the set to any other.

\begin{lemma}[Theorem 1 of \cite*{luczak1990phase}]\label{lem:luczak} Suppose $\mathbf{q}$ is symmetric.

(i) If the indirect-learning network is subcritical, then a.a.s. every component has size $O(\log n)$.

(ii) If the indirect-learning network is supercritical, then a.a.s. there is a unique component of size at least $\widetilde{\alpha} n$ for a constant $\widetilde{\alpha} \in (0,1)$ depending on $\lim_n \iota(q,q) \delta n$, and all other components have size $O(\log n)$.
\end{lemma}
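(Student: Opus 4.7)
The plan is essentially to recognize that the indirect-learning network on $n$ nodes is, by construction, an Erd\H{o}s--R\'enyi random directed graph $D(n,p_n)$ with independent edges and edge probability $p_n = \iota(q,q)\delta = q^2\delta$, and then to invoke \cite*{luczak1990phase} directly. So rather than proving anything from scratch, the work is to verify that our model matches the hypotheses of that theorem and to translate its conclusion into our notation.

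The first step is to check the edge distribution: for any ordered pair $(j,i)$, the event that there is a link from $j$ to $i$ in the indirect-learning network occurs exactly when firm $i$ learns directly from firm $j$ (probability $\iota(q_i,q_j) = q^2$ under symmetry) \emph{and} the subsequent indirect-learning coin comes up heads (probability $\delta$), so this link is present with probability $q^2\delta$. The model explicitly states that all direct- and indirect-learning realizations are independent across pairs and directions, which gives independence across the $n(n-1)$ ordered pairs. Thus the indirect-learning digraph is distributed as $D(n,p_n)$ with $p_n = \iota(q,q)\delta$, and the expected out-degree (or in-degree) of any vertex is $(n-1)p_n$, which is asymptotically $\iota(q,q)\delta n$.

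The second step is to match up the phase-transition parameter. Łuczak's theorem on random digraphs partitions the behavior of strongly connected components by the value of $np_n$ relative to $1$; under symmetry this equals $\iota(q,q)\delta n$ up to a vanishing factor, so our definitions of subcritical ($\limsup \iota(q,q)\delta n < 1$) and supercritical ($\liminf \iota(q,q)\delta n > 1$) coincide with the hypotheses of parts of Łuczak's theorem that give $O(\log n)$ components and a unique linear-sized SCC, respectively. Plugging in yields statement (i) directly, and statement (ii) with $\widetilde\alpha$ equal to the survival probability of the associated Galton--Watson branching process with Poisson$(\lim_n \iota(q,q)\delta n)$ offspring distribution, which indeed lies in $(0,1)$ since the mean exceeds $1$.

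The only mild subtlety, and the ``hard part'' insofar as there is one, is that $p_n = \iota(q,q)\delta$ may depend on $n$ through $q = q_n$, so one has to use a version of Łuczak's result that allows the edge probability to vary with $n$ as long as the scaled mean $np_n$ stays bounded away from $1$ in the appropriate direction; this is exactly the regime Łuczak treats, so no additional argument is needed beyond citing the theorem. Nothing about the discovery probabilities $\mathbf p$ enters, because the indirect-learning network is defined purely from the $q$'s and the $\delta$-coins.
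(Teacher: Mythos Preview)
Your proposal is correct and matches the paper's approach exactly: the paper does not prove this lemma but simply cites it as Theorem~1 of \cite*{luczak1990phase}, and your write-up supplies precisely the verification that the indirect-learning network is distributed as $D(n,p_n)$ with $p_n=\iota(q,q)\delta$ so that {\L}uczak's theorem applies. One small quibble: for the giant strongly connected component in $D(n,c/n)$ the asymptotic fraction is $(1-\rho)^2$ rather than $1-\rho$ (where $\rho$ is the extinction probability of the Poisson$(c)$ branching process), but since the lemma only asserts existence of some constant $\widetilde\alpha\in(0,1)$ this does not affect the argument.
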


These asymptotic results each imply that large finite graphs have the component structures described with high probability. It follows from the lemma that in a subcritical sequence of equilibria, all firms learn at most $O(\log n)$ ideas a.a.s. In a supercritical sequence of equilibria, there is a positive fraction of firms learning a constant fraction of all ideas a.a.s. At a critical equilibrium, the number of ideas learned lies between the subcritical and supercritical cases.

To discuss asymmetric strategies and later heterogeneity in firms, we now generalize the notion of criticality to arbitrary strategies. Consider the matrix $(\iota(q_i,q_j)\delta )_{ij}$. The entry $(i,j)$ is equal to the probability that firm $i$ learns indirectly from firm $j$. Let $\lambda$ be the spectral radius of this matrix, i.e., the largest eigenvalue.

\begin{defin}\label{def:crit}
A sequence of actions with openness $\mathbf{q}$ is:
\begin{itemize}
\item \textbf{Subcritical} if $\limsup_n \lambda < 1$
\item \textbf{Critical} if $\lim_n\lambda = 1$
\item \textbf{Supercritical} if $\liminf_n \lambda > 1$
\end{itemize}
\end{defin}

We will see that, as in Lemma~\ref{lem:luczak}, the critical threshold corresponds to the emergence of a giant component. To show this, we will combine the results of \cite*{bloznelis2012birth}  with analysis of multi-type branching processes.

Our existence result establishes that there are equilibria with non-zero investment and communication. Our characterization result shows that asymptotically, equilibrium is on the threshold between sparse and dense networks:
\begin{thrm}\label{critical}
For $n$ sufficiently large, there exists a symmetric investment equilibrium. Any sequence of investment equilibria is critical.
\end{thrm}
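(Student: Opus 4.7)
For existence, I would apply Brouwer's fixed-point theorem to the symmetric best-response map $F:[0,\bar p]\times[0,1]\to[0,\bar p]\times[0,1]$, where $F(p,q)$ is a representative firm's best response when all other firms play $(p,q)$. Convexity of $c$ with $c'(1^-)=\infty$ bounds investment away from $1$, and continuity of $U_i$ together with strict concavity in $p_i$ make $F$ continuous after tie-breaking in the $q$-coordinate if necessary. A fixed point thus exists. To ensure the fixed point is an investment equilibrium rather than a trivial one, I would verify that whenever others play $(p,q)$ with $p>0$ and $q$ moderate, a firm's marginal payoff from discovery at $p_i=0$ is strictly positive (since at least some technology can be produced with positive probability) while $c'(0)=0$; hence the best response has $p>0$, and a continuity argument selects an interior fixed point for $n$ sufficiently large.

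To rule out supercritical equilibria, suppose $\lambda=\iota(q^*,q^*)\delta n$ satisfies $\liminf_n\lambda>1$. By Lemma~\ref{lem:luczak} a giant component arises a.a.s., and the no-competition factor appearing in the payoff (approximately $\prod_{j\neq i}(1-\iota(q^*,q^*))\approx e^{-\lambda}$) is bounded strictly below $1$. A marginal decrease in $q_i$ raises this factor at a first-order rate that is $\Theta(1)$, delivering a definite payoff gain. The offsetting drop in $\mathbb{E}\bigl[\binom{|I_i|}{k-1}\bigr]$ is of smaller order, because firm $i$ remains adjacent to the giant component with positive probability under a small perturbation of its own action. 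Hence $\partial U_i/\partial q_i<0$ at a supercritical profile, contradicting equilibrium.

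To rule out subcritical equilibria, suppose $\limsup_n\lambda<1$. By Lemma~\ref{lem:luczak}, every component has size $O(\log n)$, so $|I_i|=O(\log n)$ and the no-competition factor tends to $1$. A marginal increase $dq$ in $q_i$ adds expected direct-interaction mass $q^*\,dq$ with each of the $(n-1)$ other firms; each such interaction becomes an indirect-learning edge with probability $\delta$ and supplies the out-component of the counterparty, which has expected size of order $1/(1-\lambda)$ in the subcritical Galton-Watson approximation. The total expected gain in $|I_i|$ is therefore of order $\lambda\,dq/\bigl((1-\lambda)q^*\bigr)$, bounded below as $n\to\infty$. Multiplied by the combinatorial factor $\binom{|I_i|}{k-2}$ (which for typical $|I_i|$ in the subcritical regime is of order at least a constant when $k=2$ and otherwise grows in $|I_i|$), the marginal benefit exceeds the marginal competition cost, which is of order at most $\mathrm{poly}(\log n)/n$. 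Thus $\partial U_i/\partial q_i>0$ at a subcritical profile, again contradicting equilibrium.

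The main obstacle lies in the subcritical marginal-benefit computation: ideas learned through different interactions are not independent, and branching processes starting from distinct roots can intersect, so the ``novelty'' of ideas obtained from a new interaction must be controlled to first order in $dq$. This is the step where a careful lemma on how technologies combine ideas from distinct interactions—as flagged in the introduction—plays the central role. A secondary subtlety is that when $\delta<1$ the no-competition probability is not literally $\prod_j(1-\iota(q_i,q_j))$, since competitors can also assemble ideas through chains of direct learning; bounding these correction terms uniformly in $n$, alongside the branching-process approximations on both sides of the FOC, will require the bulk of the technical work.
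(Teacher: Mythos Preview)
Your supercritical and subcritical arguments share the same underlying error: you treat the marginal cost and marginal benefit of openness as being of different orders, when in fact they are of the \emph{same} order in both regimes. In the subcritical case you assert that the no-competition factor tends to $1$ and that the marginal competition cost is $\mathrm{poly}(\log n)/n$. This is false: the probability that some firm learns indirectly from $i$ is approximately $1-e^{-\lambda}$ with $\lambda=\delta\iota(q^*,q^*)n$, which is bounded away from zero whenever $\lambda$ is (and subcriticality only says $\limsup\lambda<1$, not $\lambda\to 0$). Hence the marginal cost of increasing $q_i$---the increase in outgoing-link probability times the forfeited profits $\mathbb{E}\bigl[\binom{|I_i|}{k-1}\bigr]$---is of exactly the same order as the marginal benefit. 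The paper resolves the comparison not by a magnitude bound but by observing that both sides of the first-order condition depend on the \emph{same} distribution of ideas learned per link; the ratio collapses to $\delta\iota(q^*,q^*)n\sim\mathbb{E}_{t\in PT_i}[\tau(t)]$, and $\tau(t)\ge 1$ then forces $\lambda\ge 1$. You correctly sense that a lemma about how technologies combine ideas from distinct links is central, but its role is this structural cancellation, not merely controlling overlaps between branching trees. In the supercritical case you claim the drop in $\mathbb{E}\bigl[\binom{|I_i|}{k-1}\bigr]$ from lowering $q_i$ is ``of smaller order'' than the gain in the no-competition factor; again false, since the derivative of the connection probability $1-(1-\delta\iota(q_i,q^*))^{\alpha n}$ in $q_i$ is of the same order as the log-derivative of the no-competition term. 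The paper carries out the first-order condition explicitly and shows it forces $1-\alpha=(1+\delta\alpha)^{-1}$, i.e., $\alpha(\alpha\delta+1-\delta)=0$, which has no positive solution for $\delta\le 1$. There is no shortcut around this computation.

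Separately, the theorem covers \emph{all} investment equilibria, not only symmetric ones. The paper's extension to asymmetric profiles requires a uniform bound on actions, a multi-type branching bound showing exponential tails for $|I_i|$ under heterogeneous link probabilities, and results on inhomogeneous random graphs to control the giant component; your proposal does not address this part of the statement.
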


Theorem~\ref{critical} makes a sharp prediction about equilibrium. The theorem assumes symmetric firms and payoffs that are linear in the number of monopoly technologies produced by a firm. We will show that equilibrium remains critical with heterogeneity in firms (Section \ref{sec:secrecy}) and non-linear payoffs (Section \ref{sec:concavity}), but does depend on the structure of competition (Section \ref{sec:competition}).

At a sequence of symmetric investment equilibria, the theorem implies that $\iota(q^*,q^*) \rightarrow \frac{1}{\delta n},$ and in particular symmetric investment equilibria are asymptotically unique. While it is easy to see there must exist a symmetric equilibrium, a priori there need not be an equilibrium with non-zero interaction and investment. In fact, we show that there exists a sequence of symmetric equilibria with non-vanishing investment. The existence result relies on analysis of firms' best responses in each region to show a fixed point theorem applies. 

We are able to drop the assumption of symmetric strategies, which is standard in settings involving random networks (e.g., \citealp*{currarini2009economic},  \citealp*{golub2010strategic}, and \citealp*{sadler2020diffusion}), and show any equilibrium is at the critical threshold. Asymmetric equilibria could feature firms with $\iota(q_i^*,q_i^*)$ above and below $\frac{1}{\delta n}$.

The proof of Theorem~\ref{critical} builds on existing mathematical results on large random graphs, and generalizes them to allow complementarities between ideas and endogenous link probabilities. The first obstacle to applying existing results is that the combinatorial structure of technologies generates complementarities between ideas, so payoffs and incentives do not simply depend on the expected number of ideas learned. A second issue is that link probabilities are endogenous, so lower-order terms in link probabilities and vanishing-probability events can matter asymptotically. We now discuss the key ideas in the proof, including how we address these challenges.

\begin{proof}[Proof Intuition] We describe the basic idea of the proof in the case $\delta=1$, and the general argument is similar. We also begin by discussing symmetric strategies.

The first-order condition for the action $q_i$ says that at any best response, the expected cost to firm $i$ of allowing a firm $j$ to learn from $i$ is equal to the expected benefit from learning from an additional firm $j$. A key feature is that the cost and benefit both depend on the distribution of the number of ideas learned from a given link. The proof exploits this symmetry between costs and benefits to solve for $q^*n$. We are able to do so because of the endogenous downside to outgoing links, which depends on the number of ideas that firm $i$ learns.
 
We use the first-order condition at a symmetric equilibrium to obtain an expression for $q^*n$ in terms of the number of incoming links used to learn the ideas in an average proprietary technology. Consider a technology $t$ such that $i$ produces $t$ and gets monopoly profits. This technology is a combination of ideas learned from different links. For example, if $k=4$, an example technology could consist of $i$'s private idea, two ideas learned indirectly from firm $j$, and one idea learned directly from firm $j''$. In this example, the technology would combine ideas from three different links.

More generally behavior will depend on the number of links utilized in learning the ideas in a technology $t$. We refer to this number of links as $\tau(t)$, so that $\tau(t) = 3$ in the example in the previous paragraph. The key tool, which we state in the subcritical region, is:
\begin{lemma}\label{lem:tauFOC}
Along any sequence of symmetric investment equilibria with $\limsup \delta \iota(q^*,q^*)n < 1$,
$$\delta \iota(q^*,q^*)n \sim \mathbb{E}_{t \in PT_i(\mathbf{p}^*,\mathbf{q}^*)}[\tau(t)]$$
for all $i$.\footnote{The distribution over technologies $t$ is defined in Appendix~\ref{sec:mainproof}.}
\end{lemma}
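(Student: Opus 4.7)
The plan is to derive the equivalence by combining the first-order condition~(\ref{eq:exactFOC}) with an explicit combinatorial expansion of $\partial_{q_i}\mathbb{E}[\binom{|I_i|}{k-1}]$. Let $E(q_i) = \mathbb{E}[\binom{|I_i(\mathbf{p}^*, q_i, q^*_{-i})|}{k-1}]$ and $Q(q_i) = \prod_{j' \ne i}(1 - \delta q_iq^*_{j'})$, so that $Q$ asymptotically captures the probability of no competitor in the subcritical regime: the dominant mechanism for competition is a single indirect link $j' \to i$ with probability $\delta q^{*2}$, while for a firm $j' \ne i$ to separately re-learn each of the $k-1$ non-$i$ ideas in $t$ requires probability $O((q^{*2})^{k-1})$, which is negligible under $\delta q^{*2}n < 1$. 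Taking the logarithmic FOC yields $\partial_{q_i}\log E(q^*) \sim (n-1)\delta q^*$.

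The next step is to expand $\partial_{q_i}E$ by decomposing learning along Steiner subtrees. Write $\binom{|I_i|}{k-1} = \sum_t W_t$, where $W_t$ indicates that $t\setminus\{i\} \subseteq I_i$, so that $E = \sum_t P(W_t=1)$. In the subcritical regime Lemma~\ref{lem:luczak}(i) guarantees that $i$'s learning component is a.a.s.\ a tree, so whenever $W_t=1$ there is a unique directed Steiner subtree $\mathcal{P}_t$ rooted at $i$ covering the firms from which the ideas of $t\setminus\{i\}$ are sourced. Writing $P(W_t=1) \sim \sum_{\mathcal{P}} P(\mathcal{P})$ as a sum over admissible subtree shapes, each $P(\mathcal{P})$ factors as a product of independent Bernoulli probabilities of the form $q_iq_j(1-\delta)$ or $q_iq_j\delta$ along its edges; logarithmic differentiation in $q_i$ brings down one factor of $1/q_i$ for each edge contributing to $\mathcal{P}$, giving the count $\tau(\mathcal{P})$, which agrees with $\tau(t)$ on the event $W_t=1$.

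Aggregating: multiplying by $q^*$ and summing over $t$ gives
\[
q^* \cdot \partial_{q_i} E \sim \mathbb{E}\Big[\sum_{t \in T}\tau(t)\Big],
\]
where $T=\{t : W_t=1\}$ is the random set of potential technologies for $i$. Combining with the rewritten FOC yields
\[
\delta \iota(q^*,q^*)(n-1) = q^* \cdot (n-1)\delta q^* \sim \frac{\mathbb{E}[\sum_{t\in T}\tau(t)]}{\mathbb{E}[|T|]} = \mathbb{E}_{t\in T}[\tau(t)],
\]
interpreted as a size-biased expectation over potential techs. To pass from $T$ to $PT_i$, observe that $\tau(t)$ is a function only of $i$'s outgoing learning subtree while the event separating $PT_i \subseteq T$ depends only on incoming paths to $i$; by independence of the two sides of the learning network at $i$, conditioning on no-competition does not alter the distribution of $\tau$, and the ratio of expectations is preserved up to $o(1)$.

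The main obstacle is justifying the Steiner-tree decomposition and the associated edge-counting. One must verify that realizations containing cycles or alternative paths to the same idea contribute $o(1)$ relative to the leading order, which follows from subcriticality together with standard branching-process tail bounds for the exploration of $i$'s out-neighborhood. A related subtlety is tracking both direct-only and indirect Bernoullis consistently across mixed-type edges, so that $\tau(t)$ counts exactly the same set of link events at $i$ that the $q_i$-log-derivative selects in each admissible $\mathcal{P}$. With these controls, the truncated branching process contributes only lower-order corrections that are absorbed into the $\sim$ relation, and the claim follows.
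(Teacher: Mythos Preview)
Your strategy is the same as the paper's: combine the first-order condition (the paper's Lemma~\ref{lem:basicFOC}) with a decomposition of $\mathbb{E}\bigl[\binom{|I_i|}{k-1}\bigr]$ according to how the $k-1$ ideas are allocated across $i$'s direct links, so that differentiating in $q_i$ produces a $\tau$-weighted sum. The paper carries this out algebraically, summing over compositions $\gamma=(\gamma_1,\dots,\gamma_{l(\gamma)})$ of $k-1$ and i.i.d.\ ``ideas-per-link'' variables $X_j$, rather than via Steiner subtrees; the two decompositions are equivalent, and the paper's version makes the ratio $\binom{n-2}{l(\gamma)-1}/\binom{n-1}{l(\gamma)}\sim l(\gamma)/n$ explicit.

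Two points to tighten. First, your sentence ``$P(\mathcal{P})$ factors as a product of Bernoulli probabilities of the form $q_iq_j(1-\delta)$ or $q_iq_j\delta$ along its edges'' is wrong as written: only the edges of $\mathcal{P}$ incident to $i$ carry a factor of $q_i$; deeper edges carry $q_{j}^*q_{j'}^*$. The log-derivative therefore counts exactly the edges at $i$, which is why you obtain $\tau(\mathcal{P})$---your final paragraph shows you know this, but the intermediate description does not match. Second, Lemma~\ref{lem:luczak}(i) bounds component \emph{sizes}, not cycle structure; to justify that the out-neighborhood of $i$ is a.a.s.\ a tree and that the overlap events contribute $o(1)$, the paper uses the exponential tail bound of Lemma~\ref{lem:expdecay} together with a direct estimate showing that the probability two distinct neighbors of $i$ reach a common firm is $O(1/n)$. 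Your appeal to ``standard branching-process tail bounds'' is the right instinct, but you should invoke Lemma~\ref{lem:expdecay} explicitly rather than Lemma~\ref{lem:luczak}.
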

Lemma~\ref{lem:tauFOC} says that the expected number of other firms from whom $i$ learns is equal to the expected value of $\tau(t)$ for a random proprietary technology $t$. We give a brief intuition for the lemma. If $\tau(t)$ is higher, then there are stronger complementarities between links, because produced technologies combine ideas from more links. In this case, if a firm has a few existing links, an additional link will be more valuable than an existing link due to these complementarities. Since additional links are relatively more valuable, firms are willing to interact more.

Since $\tau(t)$ is always at least one, Lemma~\ref{lem:tauFOC} implies that $\lim_n \iota(q^*,q^*)n \geq 1,$ so there cannot be a subcritical equilibrium. 

In the supercritical region, almost all proprietary technologies $t\in PT_i(\mathbf{p}^*,\mathbf{q}^*)$ are created by combining a private idea with $(k-1)$ ideas learned from observing the giant component. In particular, payoffs are determined up to lower order terms by whether firm $i$ has a link that provides a connection to the giant component. Given such a link, additional links add little value. Thus there are not complementarities between links; indeed, links are substitutes due to the potential redundancies.

But because firms have more to lose from an outgoing link in the supercritical region, complementarities between links are needed to sustain high interaction rates. Since these complementarities are not present, there is not a supercritical equilibrium either. We check this intuition formally by a computation.

Extending results to asymmetric equilibria presents several additional technical obstacles. One is that existing mathematical results, e.g., \cite*{bloznelis2012birth}, prove the component structure has certain properties asymptotically almost surely. But this does not remove the possibility that vanishing-probability events distort incentives in an unknown direction. To rule this out, we show that an arbitrary subcritical sequence of equilibria, the asymptotic probability $\lim_{n \rightarrow \infty} \mathbb{P}[|I_i(\mathbf{p},\mathbf{q})|=y]$ that firm $i$ learns $y$ ideas decays exponentially in $y$. The proof bounds $|I_i(\mathbf{p},\mathbf{q})|$ above with the number of nodes in a multi-type Poisson branching process and then analyzes this branching process.
\end{proof}

The analysis of the symmetric equilibria extends to more general interaction rates. The proof of Theorem~\ref{critical} also shows:\begin{thrmbis}{critical}\label{thm:interaction}
Suppose the interaction rate is a strictly increasing and continuously differentiable function $\widetilde{\iota}:[0,1]\times [0,1] \rightarrow [0,1]$ 
satisfying $\widetilde{\iota}(q,q') = \widetilde{\iota}(q',q)$ for all $q$ and $q'$ and $\widetilde{\iota}(q,0) = 0$ for all $q$.\footnote{The same result holds, with minor modifications to the proof, for the additive interaction rate $\widetilde{\iota}(q_i,q_j)=q_i+q_j$. We assume $\widetilde{\iota}(q,0)=0$ to avoid a technicality: for the additive interaction rate, the first-order condition for $q_i$ may not have a solution if other firms choose sufficiently large $q_j$.} Then for $n$ sufficiently large, there exists a symmetric investment equilibrium. Any sequence of symmetric investment equilibria is critical.
\end{thrmbis}

\subsection{Discovery Rate and Policy Implications}

We next discuss consequences of Theorem~\ref{critical} for innovation and welfare. There are large gains to exogenously increasing interaction, but designing policies to realize these gains is subtle.

We first define a measure of the innovation rate, which is the fraction of possible technologies that are produced. Recall that $T_i(\mathbf{p},\mathbf{q})$ is the set of technologies known to firm $i$, so the set of potential technologies that are produced is the union
$\bigcup_i T_i(\mathbf{p},\mathbf{q}).$ This union is a subset of the $\binom{n}{k}$ possible technologies.
\begin{defin}
Given actions $(\mathbf{p},\mathbf{q})$, the \textbf{discovery rate} 
$$D(\mathbf{p},\mathbf{q})= \frac{\mathbb{E}[|\bigcup_i T_i(\mathbf{p},\mathbf{q})|]}{\binom{n}{k}}$$
is the expected fraction of potential technologies that are produced.
\end{defin}

Given any subcritical or critical sequence of actions, each firm learns $o(n)$ ideas asymptotically almost surely, so the discovery rate converges to zero. Along any sequence of supercritical actions, there are a positive fraction of firms learning some fraction $\alpha$ of ideas, so the discovery rate is non-vanishing.

Theorem~\ref{critical} therefore lets us characterize the discovery rate at and near equilibrium:
\begin{cor}\label{cor:open}
Let $(\mathbf{p}^*,\mathbf{q}^*)$ be a sequence of equilibria with non-vanishing investment. Then the discovery rate  vanishes along this sequence: $\lim_n D(\mathbf{p}^*,\mathbf{q}^*)=0$. For any $\epsilon>0$, the discovery rate with openness $(\mathbf{p}^*,(1+\epsilon)\mathbf{q}^*)$  is non-vanishing: $\liminf_n D(\mathbf{p}^*,(1+\epsilon)\mathbf{q}^*) >0$.
\end{cor}

The discovery rate is vanishing under equilibrium interaction patterns, but would be non-vanishing with slightly more interaction. An immediate consequence is that increasing openness by \emph{any} multiplicative factor has a very large effect on payoffs asymptotically:
$$\lim_n \frac{D(\mathbf{p}^*,(1+\epsilon)\mathbf{q}^*)}{{D}(\mathbf{p}^*,\mathbf{q}^*)} =\infty.$$

Corollary~\ref{cor:open} relates to \cite*{saxenian1996regional}'s study of the Route 128 and Silicon Valley technology industries, which found that Silicon Valley had much more open firms and grew faster. In the terminology of our model, Route 128's secrecy corresponds to equilibrium behavior. But institutional features of Silicon Valley (including non-enforcement of non-compete clauses and common ownership of firms by venture capital firms) may have constrained firms' actions to prevent high levels of secrecy (subcritical or critical choices of $q_i$). Such constraints would imply a much higher innovation rate.

A natural question is whether these gains can be realized via policy interventions other than directly restricting firms' strategy spaces. One common policy approach is to subsidize private research and development. We find that increasing R\&D spending has relatively little effect on its own, but can be effective given a collaborative culture:
\begin{cor}\label{cor:investment}
Let $(\mathbf{p}^*,\mathbf{q}^*)$ be a sequence of equilibria with non-vanishing investment. Then the derivative of the equilibrium discovery rate in private investment vanishes along this sequence:   $\lim_n \frac{\partial D(\mathbf{p}^*+x\mathbf{1},\mathbf{q}^*)}{\partial x}(0)=0$. For any $\epsilon>0$, the derivative of the discovery rate with openness $(1+\epsilon)\mathbf{q}^*$ in private investment is non-vanishing: $\liminf_n \frac{\partial D(\mathbf{p}^*+x\mathbf{1},(1+\epsilon)\mathbf{q}^*)}{\partial x}(0) >0$.
\end{cor}
Under equilibrium interaction patterns, a higher level of R\&D does not increase the discovery rate. But if openness is already above equilibrium levels, then increasing R\&D will have a much larger impact on the number of ideas discovered.

To summarize the implications of Corollaries~\ref{cor:open} and~\ref{cor:investment}, at or near equilibrium outcomes, there are large gains to policies (e.g., non-enforcement of non-compete clauses, establishing innovation clusters) that encourage or require more interaction between firms and thus shift outcomes to the supercritical region. Policies to increase private investment (e.g., subsidies for R\&D), however, will not shift outcomes to the supercritical region, and thus have much smaller benefits at equilibrium. But once outcomes are in the supercritical region, policies to increase private investment will have large benefits.

\begin{table} 
\begin{center}
\begin{tabular}{|c|c|c|c|}
 \hline
  &  \textbf{Subcritical} & \textbf{Critical}  & \textbf{Supercritical}  \\ 
  \hline

\textbf{Best Response $q_i$} & High & Intermediate & Low \\ 
 \hline

\textbf{Discovery Rate} & Vanishing & Vanishing & Non-vanishing \\ 
 \hline
 
\textbf{Increasing $q$} & Large Benefit & Large Benefit &  Ambiguous \\ 
 \hline
 
\textbf{Increasing $p$} & Small Benefit & Intermediate & Large Benefit  \\ 
\hline
\end{tabular}
\caption{Best responses and policy implications when firms choose symmetric strategies $(\mathbf{p},\mathbf{q})$ in the subcritical, critical, and supercritical regions.}\label{fig:phasetable}
\end{center}

\end{table}

\subsection{Public Innovators}\label{sec:public}

Corollary~\ref{cor:open} showed there are large gains to increasing interaction rates above equilibrium levels. This section shows that these gains can be realized via targeted interventions that change interaction patterns. 

We now show that introducing \textbf{public innovators} who are not concerned with secrecy leads to learning and innovation at the same rate as in the supercritical region. In particular, there exists a giant component of the learning network containing these public innovators. Public innovators could correspond to academics, government researchers,  open-source software developers, or other researchers with incentives or motivations other than profiting from producing and selling technologies.

A public innovator $i$ pays investment cost $c(p_i)$ and receives a payoff of one for each technology $t$ such that: (1) $i \in t$ and (2) $j \in \{i\} \cup I_i(\mathbf{p},\mathbf{q}) $ for all $j \in t$. We will rely on the fact that for public innovators there is no downside to interactions, but not on the exact incentive structure.

All firms have the same incentives as in the baseline model, and public innovators and firms interact as in the baseline model. We now call an equilibrium symmetric if all public innovators choose the same action and the same holds for all private firms.

\begin{proposition}\label{prop:public}
Suppose a non-vanishing share of agents are public innovators. Then there exists a sequence of symmetric equilibria with non-vanishing investment, and at any sequence of equilibria with non-vanishing investment the discovery rate is non-vanishing: $\liminf_n D(\mathbf{p^*},\mathbf{q}^*)>0$.
\end{proposition}

The proposition says that at equilibrium, a positive fraction of possible technologies are discovered. The ratio between the equilibrium discovery rates with and without public innovators grows unboundedly large as $n\rightarrow \infty$. The proof shows that a giant component forms around the public innovators. This holds for any positive share of public innovators, and indeed could be extended to a slowly vanishing share of public innovators.

Proposition~\ref{prop:public} assumes that firms cannot direct interactions toward public innovators or private firms. In Appendix~\ref{sec:publicdirected}, we show the same result holds when interactions can be directed toward public innovators or private firms. Because public innovators are more likely to be in the giant component, private firms are willing to interact with them.

Public innovators are valuable primarily as informational intermediaries rather than for their private ideas. Because public innovators do not face costs to interaction, they will choose $q_i=1$ at equilibrium. Therefore, public innovators can learn many ideas via interactions and transmit these ideas to other public innovators or to private firms (e.g, academics learning ideas from conferences and collaborations and then consulting for private industry). Conversely, the proposition would remain unchanged if all public innovators instead choose $p_i=0$ and $q_i=1$.

Empirical research on collaboration between academia and industry supports the value of academic researchers as informational intermediaries between firms. \cite*{azoulay2012diffusion} study movement of star academics, and find that moves increase patent-to-patent and patent-to-article citations locally. Federal funding for universities is also linked to higher entrepreneurship locally \citep*{tartari2021more}.

\subsection{Welfare}

The preceding analysis focused on the discovery rate as an outcome measure. We now extend the results to a more general measure of welfare that allows for consumer and producer surplus. We take a reduced-form approach, assuming fixed consumer surplus from each  technology sold by a monopolist and each technology sold in a competitive market.

For each $i$, we define the \textbf{competitive technologies} $CT_i(\mathbf{p},\mathbf{q})\subset T_i(\mathbf{p},\mathbf{q})$ for $i$ to be the set of technologies $t \in T_i(\mathbf{p},\mathbf{q})$ such that at least one other firm knows all ideas in $t$. Thus each technology $t\in T_i(\mathbf{p},\mathbf{q})$ is either proprietary or competitive.

We define $$W(\mathbf{p},\mathbf{q}) =( w_{PT} +1)\cdot  \left|\bigcup_i PT_i(\mathbf{p},\mathbf{q})\right| + w_{CT} \cdot \left|\bigcup_i CT_i(\mathbf{p},\mathbf{q})\right| -\sum_{i=1}^n c(p_i),$$
where the weights $w_{PT}$ and $w_{CT}$ are positive. Welfare is the sum of producer surplus (monopoly profits minus R\&D costs) and  consumer surplus ($w_{PT}$ from each proprietary technology and $w_{CT}$ from each competitive technology).

\begin{proposition}\label{prop:welfare}
Fix a sequence of actions $(\mathbf{p},\mathbf{q})$ with $\liminf_n \min_i p_i>0$ and non-negative expected profits for each firm $i$. The following are equivalent:

(1) There exists  $\widetilde{\alpha} \in (0,1)$ such that there is a unique component of the indirect-learning network of size at least $\widetilde{\alpha} n$ a.a.s.,

(2) The discovery rate is non-vanishing: $\liminf_n D(\mathbf{p},\mathbf{q})>0$, and

(3) There exists $C>0$ such that $\mathbb{E}[W(\mathbf{p},\mathbf{q})] > Cn^{k-1}$ for $n$ sufficiently large.

\end{proposition}
When there is a giant component, a positive fraction of ideas are discovered and welfare is within a constant factor of the social optimum. In the subcritical and critical regions, a vanishing share of ideas are discovered and welfare is only a vanishing share of the socially optimal level. The basic idea behind the proof is that consumer surplus grows at the same rate as the number of technologies discovered.

An immediate consequence is that Corollaries~\ref{cor:open} and~\ref{cor:investment} and Proposition~\ref{prop:public} imply similar statements about any welfare measure $W(\mathbf{p},\mathbf{q})$. For example, equilibrium  welfare $W(\mathbf{p}^*,\mathbf{q}^*)$ grows at rate $o(n^{k-1})$ in the baseline model but grows at a rate proportional to $n^{k-1}$ with a positive share of public innovators. Therefore, the ratio between equilibrium welfare with and without public innovators grows unboundedly large as $n\rightarrow \infty$.

\section{Asymmetric Learning Probabilities}\label{sec:secrecy}

The baseline model assumes that information flows are symmetric across pairs of firms. In practice, firms may have heterogeneous probabilities of learning from others, even given a fixed interaction rate. We next show that equilibrium remains critical with heterogeneous propensities to learn.

Suppose that firms have propensities to learn $\beta_i \in (0,1)$, where there are finitely many propensities to learn and the fraction of firms with each propensity $\beta_i$ converges as the number of firms grows large. Firm $i$ now directly learns from firm $j$ with probability
$$\beta_i \iota(q_i,q_j).$$ Learning otherwise occurs as in the baseline model, including indirect learning.

It is straightforward to extend Definition~\ref{def:crit} to allow heterogeneous secrecy. We now let $\lambda$ be the spectral radius of the matrix $(\beta_i\iota(q_i,q_j)\delta )_{ij}$. As before entry $(i,j)$ is equal to the probability that firm $i$ learns indirectly from firm $j$. Given the modified definition of $\lambda$, Definition~\ref{def:crit} continues to define a critical threshold and this threshold again corresponds to the emergence of a giant component.

\begin{thrm}\label{secrecy}
Suppose firms have propensities to learn $\boldsymbol{\beta}$. There exists an investment equilibrium for $n$ large, and any sequence of investment equilibria is critical.
\end{thrm}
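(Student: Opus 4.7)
The plan is to extend the first-order-condition analysis of Theorem~\ref{critical} to accommodate heterogeneous propensities. A useful observation is that with the multiplicative form $\iota(q_i,q_j)=q_iq_j$, the mean matrix $M_{ij}=\beta_i q_i^* q_j^* \delta$ has rank one. Writing $M = uv^\top$ with $u_i = \delta \beta_i q_i^*$ and $v_j = q_j^*$, the spectral radius equals $v^\top u = \delta\sum_i \beta_i (q_i^*)^2$, so the assertion ``any investment equilibrium is critical'' reduces to the scalar statement $\delta\sum_i \beta_i (q_i^*)^2 \to 1$.

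First I would derive the first-order condition in $q_i$ for each firm $i$. Since the probability that $i$ learns from $j$ is $\beta_i q_iq_j$ while the probability that $j$ learns from $i$ is $\beta_j q_iq_j$, raising $q_i$ produces incoming links scaled by $\beta_i$ and outgoing links scaled by $\beta_j$. Conditioning on firm $i$ discovering its private idea and on no firm having yet learned from $i$ (so profits are nontrivial), the FOC equates a marginal cost of the form $\sum_{j\neq i} \beta_j q_j^* \cdot [\text{monopoly-erasure cost}]$ to a marginal benefit of the form $\beta_i\sum_{j\neq i} q_j^* \cdot [\text{learning benefit}]$. A heterogeneous analogue of Lemma~\ref{lem:tauFOC} then says, at any subcritical equilibrium, that the per-firm product $\beta_i \sum_j q_i^* q_j^* \delta$ asymptotically equals $\mathbb{E}_{t\in PT_i(\mathbf{p}^*,\mathbf{q}^*)}[\tau(t)]$, with the same intuition: proprietary technologies combine ideas from $\tau(t)$ distinct links, and the first-order indifference between a marginal outgoing link and a marginal incoming link translates directly into this expected link-count identity.

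Since $\tau(t)\geq 1$, the per-firm identity forces $\beta_i \sum_j q_i^* q_j^* \delta \geq 1 - o(1)$, which aggregated gives $\delta\sum_i \beta_i (q_i^*)^2 \geq 1 - o(1)$, ruling out subcritical equilibria. To rule out supercritical equilibria I would use the analogue of equation~(\ref{eq:supercriticalpayoffs}): once a giant component exists, almost every proprietary technology is obtained by combining $i$'s private idea with ideas learned through a single link to the component, so additional links become strict substitutes, $\tau(t)\to 1$, and the FOC forces each $q_i^*$ strictly down. For existence, the best-response correspondence on $[0,\bar p]\times[0,1]$ (with $\bar p<1$ chosen using $\lim_{p\to 1^-} c(p)=\infty$) is nonempty, convex-valued and upper-hemicontinuous, so Kakutani delivers an equilibrium; a direct check that $\partial U_i/\partial p_i|_{p_i=0}>0$ when opponents act at approximately critical levels upgrades this to an investment equilibrium.

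The main obstacle is replacing the single-type Poisson branching-process bound on $|I_i(\mathbf{p},\mathbf{q})|$ (used in the symmetric case) with a multi-type analogue. The uniform bound $\beta_i\in[\underline{\beta},1)$ keeps the entry-wise ratios of $M$ bounded, so Perron--Frobenius applies to the mean matrix and yields exponential decay of $\mathbb{P}[|I_i(\mathbf{p},\mathbf{q})|=y]$ in $y$ whenever $\lambda<1$, with rate controlled by $\lambda$. This tail bound is what ensures that the vanishing-probability events cannot distort the FOC, and it is precisely the step where the asymmetric setting departs from the proof of Theorem~\ref{critical}: once in place, the FOC argument above combines with the rank-one structure of $M$ to pin $\delta\sum_i\beta_i(q_i^*)^2$ at $1$ in the limit, completing the characterization.
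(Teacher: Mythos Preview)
Your overall plan---per-firm first-order identities to bound the spectral radius, plus Kakutani for existence---matches the paper, but two steps need correction. In the subcritical case, the scalar you identify is the row sum $\delta\beta_i q_i^*\sum_j q_j^*$, whereas the correct one is the column sum $\delta q_i^*\sum_j\beta_j q_j^*$. When both sides of the first-order condition are expanded, every term of the expected-profits expression already carries $l(\gamma)$ factors of $\beta_i$ (one per existing incoming link), and the derivative in $q_i$ carries the same number; so $\beta_i$ cancels term by term, and what survives on the cost side is the rate at which \emph{others} learn from $i$, namely $\sum_j\beta_j q_j^*$. The paper emphasizes exactly this cancellation: its economic content is that $q_i^*$ is approximately independent of the firm's own $\beta_i$ in the subcritical regime, whereas your identity would make $q_i^*$ inversely proportional to $\beta_i$. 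The conclusion $\lambda\geq 1$ survives once the identity is fixed, since all column sums $\geq 1$ also bounds the spectral radius from below.

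For the supercritical case, ``$\tau(t)\to 1$ so the FOC forces $q_i^*$ down'' is not an argument: the $\tau$-identity is derived only subcritically and does not carry over once a giant component exists, because redundancies between links break the combinatorial expansion underlying it. The paper instead writes the supercritical first-order condition explicitly in terms of $\alpha_i$ (the probability $i$ reaches the giant component) and the weighted average $\overline{\beta}=\sum_j\beta_j q_j^*/\sum_j q_j^*$, obtains $\delta\sum_{j\neq i}\iota(q_i^*,q_j^*)\leq\log(1+\alpha\beta_i/\overline{\beta})/(\alpha\beta_i)+o(1)$, and then uses $\log(1+x)<x$ to show each firm's expected out-degree $\sum_j\delta\beta_j\iota(q_i^*,q_j^*)$ is at most $1+o(1)$. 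That bounds the column sums and hence $\lambda$ by $1$, yielding the contradiction. You need a quantitative argument of this form; the substitutability intuition alone does not pin down the threshold.
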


Equilibria remain critical even when the directed link probabilities are asymmetric across pairs. The characterization result extends immediately to the case in which $\beta_i$ are chosen endogenously at a cost $\widetilde{c}_i(\beta_i)$, which can vary across firms.\footnote{This choice can be made simultaneously with or prior to the choice of $q_i$.} In this case, firms can now control the likelihood of learning along two dimensions. First, higher interaction rates allow a firm to learn more from from others at the expense of a higher probability of its ideas leaking. Second, firms can pay an exogenous cost to increase the probability of learning from others at a given interaction rate, and some firms may be able to do so more cheaply than others.

The proof of Theorem~\ref{secrecy} shows that decisions with asymmetric learning probabilities are similar to decisions in the baseline model. Recall that at equilibrium, the first-order condition for the openness $q_i$ relates the value of the ideas already known to firm $i$ with the value of increasing the interaction rate. Fixing $q_i$, a higher $\beta_i$ increases  both sides of this first-order condition because firms with higher propensities to learn have already learned more ideas but also will learn more from an additional interaction.

At potential equilibria in the subcritical region, these two forces cancel out and a firm's optimal choice of openness $q^*_i$ is approximately independent of that firm's propensity to learn $\beta_i$. The proof of Theorem~\ref{critical} used the symmetry between the value of existing links and an additional link to solve for $q^*n$. Even if each of these links only realizes with probability $\beta_i$, the symmetry persists and so $q^*n$ is unchanged.

The two opposing effects would not entirely cancel in the supercritical region, because there are potential redundancies between multiple links to the giant component. These redundancies matter more for firms with higher $\beta_i$. Nevertheless, we can bound the average interaction rate when all firms choose $q_i$ to respond optimally to the giant component size.

\section{Benefits and Costs of Links}\label{sec:generalprofits}

In Sections \ref{sec:model} and \ref{sec:equilibrium}, we studied equilibrium when expected payoffs were
$$U_i(\mathbf{p},\mathbf{q})=\mathbb{E}\left[|PT_i(\mathbf{p},\mathbf{q})|\right] -c(p_i).$$
Firms' utility functions had two properties: 
\begin{enumerate}
\item Payoffs are linear in the number of proprietary technologies, and
\item Payoffs do not depend on technologies for which the firm faces competition.
\end{enumerate}

The first assumption determines the benefits from incoming links, while the second determines the costs of outgoing links.

We now relax each of these assumptions. We find that equilibrium remains critical when the returns to producing more technologies are increasing. More generally, we show that equilibrium is critical in a setting where profits are a convex function of the number of ideas learned. Changing the profit structure in competitive markets, however, leads to supercritical or subcritical equilibria. So outcomes depend on the specification of the costs of outgoing links, but are less sensitive to the specification of the gains from learning.

\subsection{Concavity of Profits}\label{sec:concavity}

The baseline model assumed that a firm's profits are linear in the number of proprietary technologies. We now extend the main result to allow more general firm profits.

In practice, there may be increasing or decreasing returns to producing more technologies. Suppose that the payoffs to firm $i$ are instead
$$|PT_i(\mathbf{p},\mathbf{q})|^{\rho} - c(p_i),$$
where $\rho > 0$.

The baseline model is the case $\rho=1$. When $\rho>1$, there are increasing returns to controlling more monopolies. When $\rho < 1$, there are decreasing returns to controlling more monopolies. Note that these increasing or decreasing returns to scale are not determined by the innovative process, but rather by production costs or other market conditions.

\begin{proposition}\label{prop:concavity}
There exists $\underline{\rho} \leq 1$ such that for any $\rho \geq  \underline{\rho}$, any sequence of symmetric investment equilibria is critical. When $k>2$, we have $\underline{\rho} < 1$.
\end{proposition}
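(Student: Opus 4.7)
The plan is to follow the proof of Theorem~\ref{critical}, adapting the first-order condition to the modified payoff $|PT_i(\mathbf{p},\mathbf{q})|^\rho$. As in the original proof, I would show that no sequence of symmetric investment equilibria can be subcritical or supercritical. The key structural observation is that the Theorem~\ref{critical} argument relies only on the per-firm profit being a convex function of the number of ideas learned $|I_i|$. Raising to a power $\rho$ preserves this convexity as long as $\rho(k-1) \geq 1$, since $\binom{x}{k-1}^\rho \sim c\cdot x^{\rho(k-1)}$ for large $x$. This identifies the natural candidate $\underline{\rho} = 1/(k-1)$, which equals $1$ when $k = 2$ and is strictly less than $1$ whenever $k > 2$, matching both claims of the proposition.

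First I would derive the analogue of the first-order condition. Working in the $\delta = 1$ case (the general $\delta$ case follows the same template, as in Theorem~\ref{critical}) and conditioning on firm $i$ discovering its private idea and on no firm having learned from $i$, the FOC for $q_i$ at a symmetric investment equilibrium becomes
\[\mathbb{E}\left[\binom{|I_i|}{k-1}^\rho\right]\cdot \frac{(n-1)q^*}{1-(q^*)^2} \;=\; \frac{\partial}{\partial q_i}\mathbb{E}\left[\binom{|I_i|}{k-1}^\rho\right]\bigg|_{q_i=q^*}.\]
Decomposing the right-hand side by conditioning on presence/absence of each potential incoming link, and letting $X_j$ denote the number of ideas firm $i$ would gain from the link $j\to i$ and $I_i^{-j}$ the ideas it would learn without that link, the FOC reduces to
\[\mathbb{E}\left[\binom{|I_i^{-j}|+X_j}{k-1}^{\rho}-\binom{|I_i^{-j}|}{k-1}^{\rho}\right]\;\sim\;\frac{1}{\iota(q^*,q^*)(n-1)}\cdot\mathbb{E}\left[\binom{|I_i|}{k-1}^\rho\right]\]
up to vanishing corrections.

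For the subcritical case, I would exploit the convexity inequality $f(a+b)-f(a)\geq \tfrac{b}{a+b}f(a+b)$, valid for any convex $f$ with $f(0)=0$. Applied with $f(x)=\binom{x}{k-1}^\rho$, $a=|I_i^{-j}|$, and $b=X_j$, this gives a lower bound on the telescoping difference in the FOC. Aggregating over the $n-1$ potential incoming links and using that in the subcritical region the realized links contribute disjointly (so that $\sum_j \mathbb{1}_j X_j = |I_i|$ up to negligible overlaps), I would conclude that the FOC forces the expected number of incoming links $\delta\iota(q^*,q^*)n$ to be at least $1$ up to $o(1)$ corrections. This is exactly the statement that subcritical equilibria do not exist, and it recovers the essence of Lemma~\ref{lem:tauFOC} with the role of $\tau(t)$ replaced by a $\rho$-weighted quantity that is still bounded below by $1$ once $f$ is convex. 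For the supercritical case, the original argument extends with essentially no change: a firm's profits are dominated by whether it has a link into the giant component, additional links are near-substitutes, and the endogenous outgoing-link cost cannot be sustained; raising payoffs to any positive power $\rho$ is monotone and preserves this conclusion.

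The main obstacle is the subcritical step, specifically verifying the convexity-based aggregation carefully. Three technical points need attention. First, convexity of $\binom{x}{k-1}^\rho$ must hold in a sense usable for integer $|I_i|$, which one can handle by comparing to the smooth envelope $x^{\rho(k-1)}$ and absorbing the low-$|I_i|$ regime (where $\binom{|I_i|}{k-1}=0$) into bounded error terms. Second, the branching-process approximation to $|I_i|$ must be controlled so that the identity $\sum_j\mathbb{1}_j X_j \approx |I_i|$ holds with negligible error in moments, leveraging the exponentially decaying tails of $|I_i|$ established in the proof of Theorem~\ref{critical}. Third, the boundary $\rho(k-1)=1$ is delicate, and a clean proof may require a strict inequality $\rho > 1/(k-1)$; this is harmless for the proposition, which only asks for the existence of some $\underline{\rho}\leq 1$ with $\underline{\rho}<1$ when $k>2$.
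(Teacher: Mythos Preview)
Your overall plan—rule out subcritical and supercritical symmetric equilibria by adapting the first-order condition from Theorem~\ref{critical}, using convexity of $x\mapsto\binom{x}{k-1}^\rho$—is exactly the paper's strategy, and your treatment of the supercritical case is essentially the same as theirs (they likewise observe that the optimal $q_i$ in that regime is asymptotically independent of $\rho$).

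The gap is in the subcritical step, specifically your identification and justification of $\underline{\rho}$. You argue that $\binom{x}{k-1}^\rho\sim c\,x^{\rho(k-1)}$ is convex once $\rho(k-1)\ge 1$ and propose $\underline{\rho}=1/(k-1)$. But convexity of the leading term does not imply convexity of $\binom{x}{k-1}^\rho$; for $k=3$ and $\rho=1/2$ one has $f(y)=\sqrt{y(y-1)/2}$, whose second differences are strictly negative for all $y\ge 2$. More importantly, your proposed fix—treating the ``low-$|I_i|$ regime'' as bounded error absorbed by the asymptotic envelope—cannot work here: in the subcritical region $|I_i|$ has exponentially decaying tails (Lemma~\ref{lem:expdecay}) and is $O(1)$ in distribution, so the small-$|I_i|$ contribution is not an error term but the dominant one. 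The convexity must hold at finite values, not just asymptotically.

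The paper closes this gap with a direct computation (their Lemma~B1): writing out the second derivative of $\big(\prod_{j=0}^{k-2}(y-j)\big)^\rho$ and checking its sign for $y\ge k-1$, they show that for $k>2$ there exists some $\underline{\rho}<1$ (not the value $1/(k-1)$; e.g.\ for $k=3$ the computation gives $\underline{\rho}=5/9$) above which convexity holds on the relevant range. With convexity in hand they use the inequality $f(|I_i|+\widetilde X_2)-f(|I_i|)\ge f(\widetilde X_2)\ge f(X_1)$ (superadditivity plus first-order stochastic dominance of the incoming-link gain over the outgoing-link loss), which is a slightly different device than your star-shaped inequality $f(a+b)-f(a)\ge \tfrac{b}{a+b}f(a+b)$ but leads to the same contradiction with subcriticality. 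Your aggregation-over-links route can be made to work, but you still need a non-asymptotic convexity statement to feed into it; that is the missing piece.
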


The proposition shows that the prediction of critical equilibria is not knife-edge with respect to $\rho$. In particular, increasing returns to scale cannot move interactions above the critical threshold. As long as $k\neq 2$, slightly decreasing returns to scale will not move interactions below the critical threshold either.

Consider a firm $i$ that does not face competition. We show that under the conditions of the proposition, the firm's profits are convex in $|I_i(\mathbf{p},\mathbf{q})|$. As a result, learning additional ideas is more appealing relative to protecting existing ideas, so openness will not decrease below the critical region. Checking convexity is delicate when $\rho < 1$, because in this case firm profits are the composition of the binomial coefficient $\binom{|I_i(\mathbf{p},\mathbf{q})|}{k-1}$, which is convex, and the polynomial, $|PT_i(\mathbf{p},\mathbf{q})|^{\rho}$, which is concave.

We also show that for any $\rho$, openness will not increase enough to push equilibrium into the supercritical region either. At a potential supercritical sequence of symmetric investment equilibria, profits are driven by the event that firm $i$ learns $\alpha n$ ideas from the giant component and produces approximately $(p^*)^k\binom{\alpha n}{k-1}$ proprietary technologies. Firm $i$ chooses $q_i$ to maximize the probability of this event. Asymptotically the optimal $q_i$ is independent of the payoffs from this event since these payoffs are very large, and therefore the optimal $q_i$ is independent of $\rho$. Given this, the calculation is the same as in the case $\rho=1$ (Theorem~\ref{critical}), where there is no supercritical sequence of investment equilibria.

More generally, the proof shows that our criticality result relies on two features of the payoff function. First, payoffs for a firm $i$ that does not face competition are convex in the number of ideas $|I_i(\mathbf{p},\mathbf{q})|$ learned by $i$. Second, payoffs grow at a polynomial rate in the number of ideas learned by $i$.

We can state this formally when $\delta=1$, so that when firm $i$ learns from $j$ it will learn all ideas known to firm $i$. In this case, we let the profits for firm $i$ be $\phi(|I_j(\mathbf{p},\mathbf{q}|)$ when firm $i$ discovers its private idea ($i \in I$) and no firm learns from $i$, and $0$ otherwise. We will assume that $\phi(\cdot)$ is strictly increasing and continuously differentiable.

\begin{proposition}\label{prop:convexgeneral}
 Suppose $\delta=1$ and payoffs when no firm learns from $i$ and $i\in I$ are equal to $\phi(|I_i(\mathbf{p},\mathbf{q}|)$, where $\phi(x)$ is convex and $\frac{\phi(x_j)}{\phi(x_j')}\rightarrow 1$ along any sequence of $(x_j,x_j')$ such that $\frac{x_j}{x_j'}\rightarrow 1$. Then any sequence of symmetric equilibria  with non-vanishing investment is critical.
\end{proposition}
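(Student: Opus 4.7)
The plan is to adapt the proof architecture of Theorem~\ref{critical}, substituting the combinatorial payoff $\binom{|I_i|}{k-1}$ with the general convex $\phi(|I_i|)$. At $\delta=1$, the event that no firm learns from $i$ depends only on $i$'s outgoing links while $|I_i|$ depends only on the chain entering $i$, so these are independent and payoffs factor as $U_i = p_i\,\mathbb{E}[\phi(|I_i|)]\prod_{j\neq i}(1-\iota(q_i,q_j)) - c(p_i)$. Differentiating in $q_i$ at the symmetric equilibrium and dividing through gives
\[
\frac{\partial \mathbb{E}[\phi(|I_i|)]/\partial q_i\big|_{q_i=q^*}}{\mathbb{E}[\phi(|I_i|)]} \;=\; \frac{(n-1)q^*}{1-(q^*)^2}.
\]
Using the subtree decomposition from the proof of Theorem~\ref{critical}, $|I_i|\approx \sum_{j\neq i} B_j Z_j$ with $B_j\sim\mathrm{Bern}(q_iq^*)$ and $Z_j$ i.i.d.\ descendant-subtree sizes, the numerator equals $(n-1)q^*\,\mathbb{E}[\phi(|I_i^{-1}|+Z)-\phi(|I_i^{-1}|)]$ by symmetry across $j$. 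Writing $|I_i|=|I_i^{-1}|+BZ$ with $B\sim\mathrm{Bern}((q^*)^2)$ and substituting, the FOC reduces to the requirement that the marginal value of one additional link matches (asymptotically) the total expected payoff: $\mathbb{E}[\phi(|I_i^{-1}|+Z)]\sim 2\,\mathbb{E}[\phi(|I_i^{-1}|)]$ as $(q^*)^2\to 0$.

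To rule out a subcritical sequence ($c:=\lim(n-1)(q^*)^2<1$), I would prove the key lemma that for convex $\phi$ (with $\phi(0)=0$ WLOG) and compound-Poisson $S:=|I_i^{-1}|=\sum_{\ell=1}^L Z_\ell$ with $L\sim\mathrm{Poisson}(c)$, one has $\mathbb{E}[\phi(S+Z)]/\mathbb{E}[\phi(S)]\geq 1+1/c$. The proof uses Slivnyak's formula for Poisson point processes to write $c\,\mathbb{E}[\phi(S+Z)-\phi(S)] = \mathbb{E}\bigl[L\phi(S)-\sum_\ell \phi(S-Z_\ell)\bigr]$, then applies the pointwise convexity inequality $\phi(S-Z_\ell)\leq (1-Z_\ell/S)\phi(S)$ (which follows from $\phi(0)=0$ and convexity via $\phi(x)\leq (x/S)\phi(S)$ for $0\leq x\leq S$) to obtain $\sum_\ell\phi(S-Z_\ell)\leq (L-1)\phi(S)$, hence $c\,(C-A)\geq A$ where $A=\mathbb{E}[\phi(S)]$ and $C=\mathbb{E}[\phi(S+Z)]$. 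Since $1+1/c>2$ whenever $c<1$, this contradicts the reduced FOC. Lower-order corrections from cycles and overlapping subtrees in the true random graph that the compound-Poisson picture ignores are controlled by the sub-exponential tail bound on $Z$ already established in the proof of Theorem~\ref{critical}.

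To rule out a supercritical sequence ($c>1$), I would use the hypothesis that $\phi(x)/\phi(x')\to 1$ whenever $x/x'\to 1$. On a supercritical sequence, firm $i$ reaches the giant component with asymptotic probability $\alpha\in(0,1)$ and then learns $\alpha n(1+o(1))$ ideas, so $\mathbb{E}[\phi(|I_i|)]\sim \alpha\,\phi(\alpha n)$. The marginal value of an additional link is driven by the event that it provides a new connection to the giant (asymptotically contributing $(1-\alpha)\phi(\alpha n)$ in expectation per link), while on the event that $i$ already reaches the giant the marginal contribution is $o(\phi(\alpha n))$ by the ratio hypothesis. Thus the ratio of marginal to average payoff tends to $1-\alpha<1$, contradicting the FOC requirement that this ratio tend to $1$. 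The main obstacle is the subcritical lemma: proving $C/A\geq 1+1/c$ not only in the idealized compound-Poisson limit but uniformly in the true random-graph distribution of $|I_i^{-1}|$, which requires carrying the sub-exponential tail estimate on $Z$ through the convexity argument to justify the necessary dominated convergence.
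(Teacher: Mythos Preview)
Your approach is sound, but for the subcritical step it is considerably more elaborate than necessary, and the extra machinery is exactly where you flag the obstacle.

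For the supercritical case you and the paper do essentially the same thing: the ratio hypothesis on $\phi$ ensures that $\phi(p^*\alpha n + o(n)) = \phi(p^*\alpha n)(1+o(1))$, so the leading-order payoff depends on $q_i$ only through the probability of reaching the giant component, and the first-order condition reduces to the one already analyzed in Theorem~\ref{critical}, which has no supercritical solution. Your phrasing in terms of the marginal-to-average ratio tending to $1-\alpha<1$ is equivalent.

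For the subcritical case the paper avoids your Slivnyak/compound-Poisson lemma entirely. It argues directly on the true random-graph distribution (following the proof of Proposition~\ref{prop:concavity}): let $\widetilde{X}_2$ be the number of ideas carried by one additional link, i.e.\ $|I_j(\mathbf{p}^*,\mathbf{q}^*)|+\mathbf{1}_{j\in I}$ for a uniformly random $j$. Since $\delta=1$, $\widetilde{X}_2$ strictly first-order stochastically dominates $X_1:=|I_i(\mathbf{p}^*,\mathbf{q}^*)|$. Convexity of $\phi$ (together with $\phi(0)=0$) gives the increasing-differences inequality
\[
\phi\bigl(|I_i|+\widetilde{X}_2\bigr)-\phi\bigl(|I_i|\bigr)\;\geq\;\phi(\widetilde{X}_2)-\phi(0)\;=\;\phi(\widetilde{X}_2),
\]
and taking expectations yields $\mathbb{E}[\phi(|I_i|+\widetilde{X}_2)-\phi(|I_i|)]>\mathbb{E}[\phi(X_1)]$ for $n$ large. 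This is exactly the strict inequality that contradicts your reduced first-order condition $C-A\sim A$ in the subcritical regime. The overlap correction $|I_i\cap I_j|$ separating $X_2$ from $\widetilde{X}_2$ is $o(1)$ in expectation by the same argument as in Lemma~\ref{lem:basicFOC}, so no idealized branching-process limit is needed.

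Your route buys a sharper quantitative bound ($C/A\geq 1+1/c$, tight for linear $\phi$), but at the cost of having to transfer the inequality from the compound-Poisson idealization back to the finite random graph---precisely the step you identify as the main obstacle. The paper's increasing-differences argument sidesteps that transfer completely. Both arguments implicitly use $\phi(0)=0$; your ``WLOG'' is not quite literal (shifting $\phi$ by a constant changes the $q_i$-optimization through the factor $\prod_{j}(1-\iota(q_i,q_j^*))$), but $\phi(0)=0$ is the natural normalization since a firm with no learned ideas produces no technologies when $k\geq 2$.
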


The assumption that $\frac{\phi(x_j)}{\phi(x_j')}\rightarrow 1$
 along any sequence of $(x_j,x_j')$ such that $\frac{x_j}{x_j'}\rightarrow 1$ bounds the rate of growth of $\phi(\cdot)$.
 In particular, this assumption holds if $\phi(x)=Cx^d + O(x^{d-1})$
 for any $C >0$ and any real $d\geq 1$. If payoffs instead grow at an exponential rate in the number of ideas, then a supercritical equilibrium is possible because an additional idea may be very valuable (see \citealp*{acemoglu2020endogenous} for a related effect).

A special case is that firms can produce technologies of multiple complexities $k$, perhaps with different payoffs for technologies of different complexities. The growth condition will hold as long as the allowed complexities are bounded (independent of $n$). A consequence of the proposition in this case is that equilibrium profits are driven by technologies of the highest feasible complexity for $n$ large. At a critical equilibrium most profits come from rare events where a firm learns many ideas, and when this occurs the firm can produce many more technologies of higher complexities. This suggests that if firms can choose the complexity $k$ of their products, firms will produce more complex technologies (at least in large markets).

\subsection{Profits Under Competition}\label{sec:competition}

We found in Theorem~\ref{critical} that equilibrium lies on the critical threshold. This result is robust to different payoffs structures for monopolist firms. We now show that Theorem~\ref{critical} does depend on the structure of competition, and altering the payoffs from competitive markets can lead to supercritical or subcritical outcomes.

To generalize the payoff from technologies, we will now assume that firm $i$ receives payoffs $f(m)$ from a technology $t$ such that $i \in t$, firm $i$ learns all other ideas in $t$, and $m$ other firms learn all ideas in $t$. We assume $f(\cdot)$ is weakly decreasing and maintain the normalization $f(0)=1$.

A simple case is $f(m)=a<1$ for all $m > 0$. The parameter $a$ represents the profits in competitive markets for firms $i$ such that the idea $i$ is included in the technology. These profits could correspond to a first-mover advantage or a higher quality product due to the firm's expertise. For example, firms whose idea is included in a technology could bring the product to market first and obtain positive profits before competitors arrive (see \citealp*{boldrin2008perfectly} for details). Alternately, these firms could produce higher quality technologies and make positive profits due to this product differentiation.
 The analysis in previous sections corresponded to the case $a=0$.

We can also allow $f(m)<0$, which could correspond to a fixed cost of production that must be paid before competition is known. We assume that firms make a single decision about whether to produce the technologies that they learn.\footnote{If firms can condition their production decision on the flow of ideas, the analysis becomes more complicated.}

\begin{proposition}\label{prop:competition}
(i) If $0 < f(1)< 1$ and $f(m)\geq 0$ for all $m$, then any sequence of symmetric investment equilibria is supercritical.

(ii) If $f(m) < 0$ for all $m > 0$, then any sequence of symmetric investment equilibria is subcritical.
\end{proposition}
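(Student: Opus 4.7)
The plan is to adapt the first-order-condition analysis from Theorem~\ref{critical} by tracking how the generalized payoff function $f(\cdot)$ modifies the marginal cost and benefit of changing $q_i$ at a symmetric investment equilibrium. In the baseline, the FOC equated the expected value of firm $i$'s existing proprietary technologies (each worth $1$) against the marginal gain in expected proprietary technologies from learning via one additional incoming link; the per-link symmetry of these effects produced criticality via Lemma~\ref{lem:tauFOC}. With general $f$, the per-technology loss from one more outgoing link becomes $1 - f(1)$ rather than $1$, while the per-technology benefit from learning is weighted by an expected value $\mathbb{E}[f(m_t)\mid\text{produce}]$ rather than $1$. My plan is to carry out a parallel FOC computation that isolates this rescaling and then match it against the threshold condition defining criticality.

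For part (i), the assumption $f(1) > 0$ strictly reduces the per-technology loss from an outgoing link to $1 - f(1) < 1$, while the benefit per marginal incoming link remains of the same order as in the baseline. Substituting into the analogue of equation~(\ref{eq:exactFOC}) and applying the symmetry argument used to derive Lemma~\ref{lem:tauFOC}, the balance forces $\delta \iota(q^*,q^*)n$ to a limit strictly greater than $1$; hence no subcritical or critical sequence of symmetric investment equilibria can be sustained, and any such sequence must be supercritical. For existence for $n$ large, I would mirror the fixed-point construction from the proof of Theorem~\ref{critical}: best-response correspondences remain continuous in $(p,q)$ on a compact subset of the strategy space, and the bound $f(1) < 1$ keeps the marginal cost of outgoing links bounded away from zero, preventing best responses from running to the boundary $q_i = 1$.

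For part (ii), the cost of an outgoing link per produced technology becomes $1 - f(1) > 1$, strictly above the baseline. The complication is that the firm can strategically withdraw from producing technologies on which it anticipates sufficient competition, so payoffs depend on an endogenous production rule $\pi^*$. By an envelope argument, optimality of $\pi^*$ implies that the marginal effect of $q_i$ on firm $i$'s expected utility acts only through the link-probability distribution and not through the rule itself; this reduces the analysis to a modified FOC whose per-link cost is strictly larger than in the baseline. The modified FOC then forces $\delta \iota(q^*,q^*)n$ to an asymptotic limit strictly below $1$, yielding subcriticality. To exclude the degenerate alternative in which firms invest but produce nothing, I would invoke the investment hypothesis $p^*>0$: such a profile would be strictly dominated by $p_i = 0$ since $c$ is strictly increasing with $c(0)=0$.

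The main obstacle is the envelope step in part (ii): the production rule $\pi^*$ depends on the firm's realized information set in a complex way, so controlling the effect of a marginal change in $q_i$ requires care. I would handle this using the exponential-decay bound on the distribution of firm $i$'s out-component established in the proof of Theorem~\ref{critical}. That bound controls the tail of $m_t$ across produced technologies, which ensures both that the expected per-link loss strictly dominates the baseline (even accounting for endogenous withdrawal) and that the interchange of expectation, sum over technologies, and derivative in $q_i$ is rigorously justified. Given this bound, the deviation from any critical or supercritical candidate to a slightly smaller $q_i = (1-\eta)q^*$ yields a strict first-order improvement, ruling out such equilibria.
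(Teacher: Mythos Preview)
Your general strategy—rescaling the first-order condition by the per-technology loss $1-f(1)$ versus the expected gain weight $\mathbb{E}[f(m_t)]$—captures the right intuition, and for ruling out \emph{strictly subcritical} equilibria in part (i) it is essentially the paper's approach. But there is a genuine gap at the critical boundary, and this is where the paper does most of the new work.

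The machinery behind Lemma~\ref{lem:tauFOC} (which you invoke as ``the symmetry argument'') is only valid when $\limsup_n \delta\iota(q^*,q^*)n < 1$: its proof relies on the exponential tail bound on $|I_i|$ (Lemma~\ref{lem:expdecay}), which fails at criticality. So your argument as stated excludes strictly subcritical equilibria but does not by itself exclude a critical sequence. To close this gap the paper proves a separate key result, Lemma~\ref{lem:criticaltau}: along any critical sequence of symmetric actions, $\mathbb{E}_{t\in PT_i}[\tau(t)]\to 1$. The proof is not a rescaling of Lemma~\ref{lem:tauFOC}; it uses two coupling arguments (comparison to slightly subcritical actions to show $\mathbb{E}[|I_i|]\to\infty$, and comparison to slightly supercritical actions to show via a sub-lemma that $\mathbb{P}[|I_i|>\omega(n)]\to 0$), and then argues that almost all proprietary technologies combine $i$'s private idea with $k-1$ ideas from a single rare large link. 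Only once $\mathbb{E}[\tau(t)]\to 1$ is established does the strict inequality $1-f(1)<1$ (part (i)) or $1-f(1)>1$ (part (ii)) tip $\partial U_i/\partial q_i$ strictly away from zero at criticality. Your proposal does not supply this step, and the exponential-decay bound you plan to invoke is precisely the tool that is unavailable at the threshold.

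For part (ii), two further issues. First, the model stipulates that firms make a \emph{single} decision about whether to produce all the technologies they learn; there is no per-technology withdrawal rule $\pi^*$, so the envelope argument you outline is neither needed nor available. Second, you do not address the supercritical case. The paper handles it by a direct giant-component computation: conditioning on whether $m>0$ (where increasing $q_i$ only hurts) or $m=0$ (where the comparison to the baseline $f(m)=\mathbf{1}_{m=0}$ case already treated in Theorem~\ref{critical} shows the marginal outgoing-link loss is at least as large while the marginal incoming-link gain is unchanged). Your appeal to the exponential-decay bound cannot substitute for this, since that bound does not hold supercritically either.
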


We can restate the proposition in terms of the discovery rate, and find that there is more innovation when there are positive profits under competition than in the Bertrand case:
\begin{cor}\label{cor:competition}
(i) If $0 < f(1)< 1$ and $f(m)\geq 0$ for all $m$, then the discovery rate  is non-vanishing along any sequence of symmetric equilibria with non-vanishing investment: $\liminf_n D(\mathbf{p}^*,\mathbf{q}^*) >0$.

(ii) If $f(m) < 0$ for all $m > 0$, then the discovery rate vanishes along any sequence of symmetric equilibria with non-vanishing investment: $\lim_n D(\mathbf{p}^*,\mathbf{q}^*)=0$  . 
\end{cor}

Part (i) of the proposition says that if the potential downside to enabling competitors is not as large, then firms will be more willing to interact. This pushes the equilibrium from the critical threshold into the supercritical region. Proposition~\ref{prop:competition}(i) introduces an additional force to classic debates on whether firms are more innovative in more competitive markets (see \citealp*{cohen1989empirical} for a survey). While much of this literature considers how competition changes firms' private incentives to conduct R\&D, the proposition considers its effect on interaction and learning between firms.

Part (ii) of the proposition says that increasing the costs of competition discourages interaction, and pushes the equilibrium to the subcritical region. There need not be an investment equilibrium if payoffs under competition are sufficiently negative.

The proof of (ii) is more involved, as we must characterize payoffs at a potential critical sequence of equilibria. The key lemma shows that at the critical threshold, we have $\mathbb{E}_{t \in PT_i(\mathbf{p},\mathbf{q})}[\tau(t)]\rightarrow 1$, i.e., most of a firm $i$'s proprietary technologies only include ideas learned from one other firm. To prove this lemma, we use a pair of coupling arguments to show that most profits come from rare events in which a single link (indirectly) lets a firm learn many ideas. 


Proposition~\ref{prop:competition} describes how the equilibrium learning network depends on competition qualitatively, but we can also relate equilibrium outcomes to the innovation rate within the supercritical region. We now show that increasing the payoffs from competitive outcomes will increase the innovation rate even within the supercritical region.

Consider a function $g(m)$ such that $g(m) \geq 0$ for all $m$, $g(m)$ is weakly decreasing in $m$, and $g(m)\rightarrow 0$ as $m\rightarrow \infty$. Set $f_x(0)=1$ and $f_x(m) = xg(m)$ for $m\geq 1$, so that monopoly payoffs are constant and competitive profits are weakly increasing in $x$. For example, higher values of $x$ can correspond to a larger first-mover advantage.

\begin{proposition}\label{prop:competitionquant}
There exists $x^*>0$ such that for $x \in [0,x^*]$, with payoffs $f_x(m)$:

(i) There exists an investment equilibrium for $n$ large;

(ii) The limit of the discovery rate $\lim_n D(\mathbf{p}^*,\mathbf{q}^*)$ along any sequence of symmetric equilibria with non-vanishing investment $(\mathbf{p}^*,\mathbf{q}^*)$ exists and does not depend on the sequence of equilibria;

(iii) The limit $\lim_n D(\mathbf{p}^*,\mathbf{q}^*)$ is weakly increasing in $x$.

\end{proposition}
When payoffs are higher in competitive markets, the upside to learning from others is potentially higher and the downside of information leakage is lower. So individuals are willing to interact more, and thus the discovery rate is higher, as $a$ increases. The proposition shows this comparative static locally, because for $x$ small we can leverage our analysis of equilibrium under Bertrand payoffs to show that an investment equilibrium exists and is well-behaved.

\section{Patent Rights}\label{sec:patent}

In the baseline model, technologies could only be protected via secrecy. We now consider the possibility that a positive fraction of firms receive patents on their ideas. As motivation, suppose that firms discover different types of ideas and patent law determines which types are patentable. For example, \cite*{bessen2007empirical} discuss the boundaries of patent law in the software industry and how those boundaries have changed over time.

\subsection{Equilibrium}

Suppose a fraction $b \in (0,1)$, of firms receive a \textbf{patent} on their private ideas. In this case, other firms cannot use this private idea, either as monopolists or competitors. Formally, a firm $i$ receives payoff $1$ from each technology $t$ such that (1) idea $i \in t$; (2) firm $i$ knows all $j \in t$ and no other $j \in t$ receive patents; and (3) either $i$ receives a patent or $i$ is the unique firm that knows all $j \in t$. Else the firm receives payoff $0$ from the technology $t$.

To focus on how patents relate to informal interactions, we analyze model patents in a very simple way. In particular, the model will not include imperfect patent rights and/or licensing of patents, and licensing can increase the innovation rate.

A firm now chooses either a level of openness $q_i(0)$, which is the action without a patent, or $q_i(1)$, which is the action with a patent. We will refer to the choices at symmetric equilibria as $q^*(0)$ and $q^*(1)$.

For the first part of the following result ($\delta=0$), we will also assume that each firm $i$ pays cost $\epsilon > 0$ for each realized link. The purpose of this cost is to break near-indifferences in favor of lower interaction rates. We observe in Appendix \ref{sec:direct} that without patents, a small link cost $\epsilon$ has little effect on the equilibrium.

\begin{proposition}\label{prop:patent} Suppose a fraction $b \in (0,1)$ of firms receive patents. If $\delta=0$, then  with $k=2$ and any link cost $\epsilon > 0$, there does not exist an investment equilibrium for $n$ large. If $\delta>0$, then $\iota(q^*(0),q^*(0))$ is $o(1/n)$ along any sequence of symmetric investment equilibria.
\end{proposition}

With only direct learning ($\delta=0$), the proposition says that positive investment cannot be sustained at equilibrium for $n$ large. This is because of an adverse-selection effect that discourages social interactions.

Because firms receiving patents have no need for secrecy, firms with patents choose very high interaction rates $q_i(1)$. Thus, most interactions are with firms with patents. On the other hand, firms with patents are undesirable to interact with because their ideas cannot be used by others. Because of this adverse selection in the matching process, firms without patents will have much lower expected profits than in the model without patents. When $k=2$ and there is an arbitrarily small cost to links, this has the effect of shutting down all interaction and investment.

This contrasts with our results on direct learning with no patent rights (Section \ref{sec:directbrief} and Appendix \ref{sec:direct}), where there is an investment equilibrium with substantial interaction. With $k > 2$ and patent rights, the adverse selection effect persists but no longer prevents any equilibrium investment. In this case, the interaction rate between firms without patents is much lower asymptotically than in the result with no patent rights (Appendix~\ref{sec:patentapp}).

The direct learning result also suggests a more general adverse-selection effect in strategic network formation. Suppose that agents with lower link formation costs are also less valuable partners for connections. If agents cannot discriminate in their link formation decisions, the composition of the pool of potential partners will discourage connections.

With indirect learning, firms with patents still do not provide private ideas to others, but can now serve as informational intermediaries (like the public innovators in Section~\ref{sec:public}). A giant component forms around the firms with patents. Firms without patents now have some interactions with firms without patents, who can transmit ideas from other firms without patents. Interactions between pairs of firms without patents, however, are rare.

Proposition~\ref{prop:patent} relates to literature on firms' choices between formal and informal intellectual property protections, particularly patents versus secrecy (e.g., \citealp*{anton2004little} and \citealp*{kultti2006simultaneous}). We focus not on the choice between formal and intellectual property rights but on the interplay between the two.\footnote{A second contrast is to theoretical findings on patents and follow-up innovation (e.g., \citealp*{scotchmer1991standing}, \citealp*{scotchmer1990novelty}, \citealp*{bessen2009sequential}). This literature investigates when granting patent rights for an idea decreases follow-up innovations involving that idea. In our random-interactions setting, patent rights can not only decrease follow-up innovations involving patented ideas but also decrease follow-up innovations involving \textit{other} unprotected ideas.}

The proposition assumes that firms cannot direct interaction toward or away from firms with patents, but we could also allow firms to choose separate interaction rates for peers with and without patents. Under direct learning ($\delta=0$), firms without patents would only interact with each other and the adverse selection effect would disappear.\footnote{The analysis of equilibrium interaction patterns among these firms would be the same as in Appendix~\ref{sec:direct}.} Under indirect learning, firms with patents can still play a similar role as informational intermediaries when directed interaction is permitted. The analysis is very similar to Appendix~\ref{sec:publicdirected}, so we omit details here.

\subsection{Discovery Rate}

We can use Theorem~\ref{critical} and Proposition~\ref{prop:patent} to ask when patent rights produce more innovation and what the optimal share $b^*$ of patentable ideas would be. In the direct-learning case, the proposition gives conditions under which patents decrease innovation and welfare.

In the indirect-learning case, the equilibrium discovery rate and social welfare are higher (for $n$ large) with interior patent rights $b \in (0,1)$ than without patent rights because firms with patents are valuable as intermediaries. Under indirect learning, we can ask what value of $b$ maximizes the discovery rate. Any positive $b$ provides the benefits of information intermediaries, and so there is a tradeoff between the higher private profits obtained by firms with patents and the social benefits provided by firms without patents, whose ideas can be used by others. For high $k$ the optimal value of $b$ converges to zero as $n$ grows large. 

This is easiest to see when $\delta=1$. In this case, we can compute that the share of firms without patents whose ideas are learned by the giant component is 
$e^{-bq^*(0)n} \approx \frac12.$ The number of technologies produced is
\begin{equation}\label{eq:discoverypatents} (p^*)^{k}\binom{\frac12 (1-b) n}{k-1}(b+\frac12\cdot(1-b))n+o(n^{k-1}),\end{equation}
because approximately $b+\frac12\cdot(1-b)$ firms learn from the giant component and each learns approximately $\frac12 (1-b) n$ unpatented ideas. The leading term in expression (\ref{eq:discoverypatents}) is maximized by $b=0$, so the patent share $b^*$ maximizing the discovery rate converges to zero.

\section{Conclusion}

We have studied strategic network formation in large random graphs in the context of an economic application to innovation and social learning. The model is particularly suited to analysis of informal interactions, e.g., between employees of firms, which cannot be fully governed by formal contracts. We find that in these settings, if there are many firms and ideas can travel multiple steps, the global structure of the learning network has stark consequences for incentives and payoffs. In particular, expected payoffs and welfare are much higher when there is enough interaction to support a giant component.

The techniques developed in this paper could be extended or applied in several directions. For example, we have studied a static model, but in a dynamic version of the model past events would create new incentives toward openness or secrecy. More broadly, while we have focused on a network-formation game with a tradeoff between secrecy and learning, we have developed more widely applicable tools for questions in network economics, particularly those with complementarities between connections. Outside of network formation, the same techniques can also be applied to problems such as optimal seeding for diffusion processes or search and matching models where indirect connections are important.

\setstretch{1.3}

\bibliographystyle{ecta}
\bibliography{Innovation}
\setstretch{1.5}

\appendix

\section{Proof of Theorem~\ref{critical}}\label{sec:mainproof}

We will first prove Theorem~\ref{critical}. We begin with two lemmas, which bound equilibrium actions and the number of ideas learned at equilibrium. We then prove the theorem for symmetric equilibria. The symmetric case illustrates the main ideas of the proof. We complete the proof by extending the analysis of the symmetric case to show that any sequence of investment equilibria, which need not be symmetric, is critical.

In all appendices, we say that $f(n) \sim g(n)$ if $f(n)/g(n) \rightarrow 1$ as $n \rightarrow \infty$.

\subsection{Preliminary Lemmas}

Our first lemma, which bounds agents' actions uniformly, will be useful for the subcritical and supercritical cases. Given $\mathbf{q}$, we let $\overline{q}=\max_iq_i$ and $\underline{q}=\min_iq_i$ be the maximum and minimum choices of openness.

\begin{lemma}\label{lem:actionbound}
Consider any sequence of investment equilibria $(\mathbf{p}^*,\mathbf{q}^*)$. There exists a constant $\overline{C}$ such that $\iota(\overline{q},\overline{q})n \leq \overline{C}$ for all $n$.
\end{lemma}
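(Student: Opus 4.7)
I plan to argue by contradiction: suppose $\iota(\overline q,\overline q)n=\overline q^{\,2}n\to\infty$ along a subsequence, and let $i^{*}$ be a firm with $q_{i^{*}}^{*}=\overline q$. The argument will rest on one independence observation. The random set $I_{i^{*}}$ is determined by the directed edges $i^{*}\to j$ (together with their indirect upgrades and the rest of the graph not involving $i^{*}$), while the event $A=\{\text{no firm learns indirectly from }i^{*}\}$ is determined by the edges $j\to i^{*}$ and their indirect upgrades; these two collections of random variables are drawn independently in the model. Since any firm that learns indirectly from $i^{*}$ knows every idea $i^{*}$ knows, $|PT_{i^{*}}|=0$ on $A^{c}$, and so, with $Q_{i^{*}}=\sum_{j\ne i^{*}}q_j^{*}$,
$$\mathbb E[|PT_{i^{*}}|]\le p_{i^{*}}^{*}\,\mathbb E\!\left[\tbinom{|I_{i^{*}}|}{k-1}\right]\prod_{j\ne i^{*}}\bigl(1-\delta\overline q q_j^{*}\bigr)\le p_{i^{*}}^{*}\,\mathbb E\!\left[\tbinom{|I_{i^{*}}|}{k-1}\right] e^{-\delta\overline q Q_{i^{*}}}.$$

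The first step would be to use the deviation $q_i'=\overline q/2$ to force $\delta\overline q Q_{i^{*}}$ to be bounded by some constant $C$. Halving $q_{i^{*}}$ reduces $\mathbb E\bigl[\binom{|I_{i^{*}}|}{k-1}\bigr]$ by at most a factor of $2^{k-1}$: writing $I_{i^{*}}=\bigcup_{j:\,X_j=1}S_j$ where $X_j\sim\mathrm{Bern}(\overline q q_j^{*})$ is independent of $S_j$, the elementary inequality $1-\prod_{j}(1-x_j/2)\ge\tfrac12(1-\prod_{j}(1-x_j))$ applied to each of the $k-1$ factors in $\sum_{S}\mathbb E\!\bigl[\prod_{\ell\in S}\mathbf 1_{\ell\in I_{i^{*}}}\bigr]$ yields the $2^{k-1}$ bound. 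Meanwhile the monopoly-probability factor multiplies by at least $\exp(\delta\overline q Q_{i^{*}}/2)$ (using $(1-x/2)/(1-x)>e^{x/2}$ term by term). So if $\delta\overline q Q_{i^{*}}\to\infty$, the ratio $U_{i^{*}}(\overline q/2)/U_{i^{*}}(\overline q)$ would diverge, once one checks (along the lines of Lemma~\ref{lem:basicFOC}) that at the sparser deviation competition via channels other than indirect learning from $i^{*}$ contributes only a lower-order correction. Running the same best-response comparison firm-by-firm gives $\delta q_j^{*}Q_j\le C$ for every $j$, where $Q_j=\sum_{j'\ne j}q_{j'}^{*}$.

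To convert $\overline q Q_{i^{*}}\le C/\delta$ into an upper bound on $\overline q^{\,2}n$, I would establish a matching lower bound of order $\sqrt n$ on $Q_{i^{*}}$. Set $T=\sum_{j}q_j^{*}$, so $Q_j=T-q_j^{*}$. In an investment equilibrium every firm has $p_j^{*}>0$; a firm with $q_j^{*}=0$ would learn nothing, have $|PT_j|=0$, and therefore $U_j=-c(p_j^{*})<0$, which is worse than the deviation $(0,0)$, so $q_j^{*}>0$ for all $j$. The plan is to extend Lemma~\ref{lem:tauFOC} to asymmetric strategies: the upper bound $\delta q_j^{*}Q_j\le C$ from Step 1, together with an asymmetric analogue of Lemma~\ref{lem:expdecay}, places the local neighborhood of each firm in the regime where the first-order condition from Lemma~\ref{lem:basicFOC} yields $\delta q_j^{*}Q_j\sim \mathbb E_{t\in PT_j}[\tau(t)]\ge 1$. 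Summing $\delta q_j^{*}(T-q_j^{*})\ge 1$ over $j$ gives $\delta(T^{2}-\sum_j(q_j^{*})^{2})\ge n$, so $T\ge(1+o(1))\sqrt{n/\delta}$, and hence $Q_{i^{*}}=T-\overline q\ge(1-o(1))\sqrt{n/\delta}$. Combined with $\overline q Q_{i^{*}}\le C/\delta$, this gives $\overline q^{\,2}n\le C^{2}/\delta+o(1)$, contradicting $\overline q^{\,2}n\to\infty$.

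The main obstacle is carrying out the asymmetric extensions of Lemmas~\ref{lem:expdecay}, \ref{lem:basicFOC}, and \ref{lem:tauFOC} in tandem with Step 1: Steps 1 and 2 have to be run simultaneously, because the upper bound $\delta q_j^{*}Q_j\le C$ from Step 1 is what places each firm in the sparse regime needed for the $\tau$-based lower bound of Step 2, and that lower bound is what finally pins down $\overline q^{\,2}n$.
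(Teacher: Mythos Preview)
Your Step~1 essentially matches the paper's first case: if the maximal firm's expected number of direct links diverges, the monopoly probability decays exponentially while the technology count grows only polynomially, so that firm would cut its openness. The gap is in Step~2. You want a lower bound $\delta q_j^*Q_j\ge 1+o(1)$ for every firm via an asymmetric version of Lemma~\ref{lem:tauFOC}, and then sum. But that characterization rests on exponential decay of $\mathbb P[|I_j|=y]$ (Lemma~\ref{lem:expdecay} or its asymmetric analogue Lemma~\ref{lem:expbound}) and on competition not via indirect learning from $j$ being lower order (Lemma~\ref{lem:basicFOC}), and both of these require the indirect-learning network to be subcritical. Step~1 only shows that each row sum $\delta q_j^*Q_j$ of $(\delta q_iq_j)_{ij}$ is bounded by some constant $C$; your own halving comparison gives $C$ on the order of $2(k-1)\log 2>1$, and nothing prevents the spectral radius from exceeding~$1$. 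In that regime a giant component forms, the branching-process tail bounds fail, and the first-order-condition reduction collapses. The circularity is explicit in the paper's organization: the asymmetric $\tau$-identity, equation~\eqref{eq:asymaction}, is proved \emph{after} and \emph{using} the present lemma, via Lemma~\ref{lem:expbound}, whose hypothesis is precisely that $\iota(q_i,q_i)n$ is uniformly bounded.

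The paper handles the residual case ($\overline q\sqrt n\to\infty$ with $\overline qQ_{i^*}$ bounded) by a short direct argument instead. From $\sum_{j\ne i^*}q_j^*\le C/\overline q$ one picks the firm $i$ with the smallest $q_i^*$, so that $q_i^*\le C/((n-1)\overline q)$ and $Q_i\le\overline q+C/\overline q$, whence $q_i^*Q_i\to 0$. That firm learns essentially nothing and can strictly improve by raising $q_i$, contradicting equilibrium. No branching-process machinery or subcriticality hypothesis is needed.
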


\begin{proof}[Proof of Lemma~\ref{lem:actionbound}]
Suppose not. Relabelling firms, we can assume that $q_1^* = \overline{q}$ for each $n$. Passing to a subsequence if necessary, we can take $q_1^*\sqrt{n} \rightarrow \infty$ as $n\rightarrow \infty$.

We first assume for the sake of contradiction that the expected number of times firm $1$ learns directly is unbounded. Passing to a subsequence we can assume that the expected number of times firm $1$ learns directly converges to infinity, i.e., $\sum_{j\neq 1}\iota(q_1^*,q_j^*)\rightarrow \infty$.

We now fix $n$ large and consider the payoffs to firm $1$ after deviating to choose $q_1$. By the Chernoff bound, the probability that no firm learns indirectly from firm $1$ decays exponentially in $q_1$.

We claim that $\binom{|I_i(\mathbf{p}^*,(q_1,q_{-1}^*))|}{k-1}$ grows at most at a polynomial rate in $q_1$. Let $X$ be a random variable equal to the number of ideas learned by learning from a random firm $j$, each chosen with probability proportional to $q_j$. For $\mathbf{q}$ such that $\sum_{j\neq 1}\iota(q_1,q_j^*)$ is an integer $m$, the random variable $\binom{|I_i(\mathbf{p}^*,(q_1,q_{-i}^*))|}{k-1}$ is first-order stochastically dominated by $\binom{\sum_{j=1}^m X_j}{k-1}$, where $X_1,\hdots,X_m$ are i.i.d. random variables distributed as $X$. This in turn within a constant multiple of $\left(\sum_{j=1}^m X_j\right)^{k-1}$. By Rosenthal's inequality \citep*{rosenthal1970subspaces}, the growth rate of the expectation of this sum of moments in $m$ is at most polynomial in $m$.

Therefore, the payoffs to firm $1$ conditional on no firm learning indirectly from $1$ are at most polynomial in $q_1$. The probability of this event decays exponentially, so for $n$ sufficiently large, firm $1$ could profitably deviate to a smaller choice of $q_1$. This gives a contradiction.

The remaining case is that $q_1^*\sqrt{n}\rightarrow \infty$ but $\sum_{j\neq 1}\iota(q_1,q_j^*)$ is bounded. Then there must exist a sequence of $i$ such that $q^*_i/q^*_1\rightarrow 0$, and for some such sequence of $i$ we have $\sum_{j \neq i} \iota(q_i^*,q_j^*) \rightarrow 0$. This gives a contradiction since then for $n$ large, firm $i$ could profitably deviate to choose $q_i = (\sum_{j \neq i} q_j^*)^{-1}.$ This complete the proof of the lemma.
\end{proof}

\begin{lemma}\label{lem:expbound}
Consider a subcritical sequence of actions such that $\iota(q_i,q_i)n$ is bounded above uniformly. Then
$$\lim_{n \rightarrow \infty} \mathbb{P}[|I_i(\mathbf{p},\mathbf{q})|=y]$$
decreases at an exponential rate in $y$.
\end{lemma}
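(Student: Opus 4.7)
The plan is to mirror the proof of Lemma~\ref{lem:expdecay}, replacing the single-type Poisson branching process used there with a multi-type one, and then exploiting the rank-one structure of its offspring mean matrix to recover a scalar fixed-point equation for the generating function of the total progeny.

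First, I would reduce the statement to a tail bound on $|ID_i(\mathbf{q})|$, the set of firms reachable from $i$ in the indirect-learning network. As in Lemma~\ref{lem:expdecay}, $|I_i(\mathbf{p},\mathbf{q})|$ is first-order stochastically dominated by $|ID_i|$ plus the sum of $|ID_i|$ independent $\mathrm{Binom}(n,\iota(\bar q,\bar q))$ variables, corresponding to discovered ideas learned by one further direct step from the reachable firms. The uniform hypothesis $\iota(q_i,q_i)n\leq \bar C$ forces $n\iota(\bar q,\bar q)\leq \bar C$, so each summand has bounded mean and Chernoff converts an exponential tail for $|ID_i|$ into one for $|I_i|$.

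Second, I would dominate $|ID_i|$ by the total progeny $T_i$ of a multi-type Poisson branching process on type space $\{1,\dots,n\}$, where a type-$j$ individual has an independent $\mathrm{Poisson}(\delta q_j q_{j'})$ number of type-$j'$ offspring. The domination of $\mathrm{Binom}(n,p)$ by $\mathrm{Poisson}(pn)$ used in Lemma~\ref{lem:expdecay} (Theorem~1(f) of \cite*{klenke2010stochastic}) applies generation by generation in exactly the same way.

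Third, I would control $T_i$ via the rank-one structure of the mean matrix $M_{ij}=\delta q_i q_j$. For Poisson offspring, the probability generating functions $G_i(s)=\mathbb{E}[s^{T_i}]$ satisfy
\begin{equation*}
G_i(s)=s\exp\!\Bigl(\delta q_i\sum_j q_j(G_j(s)-1)\Bigr).
\end{equation*}
Setting $g(s)=\sum_j q_j G_j(s)$ and $c=\sum_j q_j$, this infinite system collapses to the scalar fixed-point equation
\begin{equation*}
g(s)=s\sum_j q_j \exp\!\bigl(\delta q_j(g(s)-c)\bigr).
\end{equation*}
Because each $q_j\leq \bar C/\sqrt{n}\to 0$, a second-order expansion of the exponential gives $g(s)-c\sim c(s-1)/(1-s\lambda)$ to leading order, where $\lambda=\delta\sum_j q_j^2$ is the spectral radius of $M$; the cubic remainder is controlled by $\sum_j q_j^3\leq \bar q\sum_j q_j^2\to 0$. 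Subcriticality, $\limsup_n \lambda<1$, lets me fix $s_0>1$ with $s_0\lambda<1$ uniformly in $n$, yielding a uniform bound on $g(s_0)$, hence on $G_i(s_0)=s_0\exp(\delta q_i(g(s_0)-c))$, using also $q_i\sqrt{n}\leq \bar C$. Markov then gives $\mathbb{P}[T_i\geq y]\leq G_i(s_0) s_0^{-y}$, an exponential tail uniform in $n$; combined with the first step, this yields the lemma.

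The main obstacle is the uniformity-in-$n$ analysis of the fixed-point equation in the third step: since the number of types grows with $n$ and the individual weights $q_j$ vanish, off-the-shelf subcritical multi-type branching process results do not directly apply. The rank-one structure of $M$ and the uniform bound $q_i\sqrt{n}\leq \bar C$ from Lemma~\ref{lem:actionbound} are exactly what reduce the infinite-dimensional system to a scalar equation whose contraction rate is governed by the spectral radius $\lambda$; making the second-order expansion quantitative is the bulk of the work, after which Markov closes the argument.
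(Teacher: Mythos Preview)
Your approach is sound and takes a genuinely different route from the paper. The paper does not exploit the rank-one structure of $M_{ij}=\delta q_iq_j$; instead it discretizes the action profile---rounding each $q_i$ up to one of at most $K$ values, which only increases the spectral radius but keeps it below one---and then runs a finite-type argument: Perron--Frobenius supplies a positive eigenvector $\mathbf v$, and a generation-by-generation Chernoff bound shows that the event $\{Z_j\leq T\mathbf v\text{ for all }j\leq T\}$ fails with probability exponentially small in $T$, because expected generation sizes contract by $\bar\lambda<1$. Your scalar reduction is more elegant for the specific multiplicative interaction rate (indeed, the rank-one structure means that after the root the process is equivalent to a single-type branching process whose offspring law is the $q_j/c$-mixture of $\mathrm{Poisson}(\delta q_jc)$, with mean exactly $\lambda$), whereas the paper's discretization would survive more general $\iota$. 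One imprecision worth fixing: $g(s_0)$ itself is not uniformly bounded---it is $O(\sqrt n)$, since $c=\sum_jq_j=O(\sqrt n)$---but what you actually need and what your expansion delivers is that $\bar q\,(g(s_0)-c)$ stays bounded, which is what controls $G_i(s_0)=s_0\exp\bigl(\delta q_i(g(s_0)-c)\bigr)$. To make the expansion step rigorous you should work with the depth-$N$ truncated pgf's (which are polynomials) and show the bound uniformly in $N$ before passing to the limit, using $\sum_jq_j^3\leq\bar q\lambda/\delta\to 0$ to control the quadratic remainder; this is exactly the ``bulk of the work'' you already flagged.
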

\begin{proof}[Proof of Lemma \ref{lem:expbound}]

Because the sequence of actions is subcritical, we can assume that the matrix $(\iota(q_i,q_j)\delta)_{ij}$ has spectral radius at most $\overline{\lambda}<1$ for all $n$ sufficiently large. Increasing $q_i$ for some $i$ and therefore also $\overline{\lambda}$, we can assume without loss of generality that there are at most $K$ choices of $q_i$ for each $n$. Here the number of distinct actions $K$ can depend on the initial upper bound $\overline{\lambda}$ and distribution of actions. We denote the number of firms choosing $q_i$ by $n(q_i)$.

We will bound the number of firms $j$ with a path from $j$ to $i$ in the indirect learning network above by the number of nodes in a multi-type branching process with the number of successors distributed as Poisson random variables. The types will correspond to the (at most $K$) choices of $q_i$.

For each firm $i$, the number of firms choosing $q_j$ that firm $i$ learns from indirectly is a binomial random variable with success probability $\delta \iota(q_i,q_j)$ and at most $n(q_j)$ trials. We claim that for any $C>1$, for $n$ sufficiently large such a random variable is first-order stochastically dominated by a Poisson random variable with parameter  $C\delta \iota(q_i,q_j)n(q_j)$. By Theorem 1(f) of \cite*{klenke2010stochastic}, this holds if $$(1-\delta \iota({q}_i,{q}_j))^n \geq e^{-C\iota(q_i,q_j)\delta  n}.$$The inequality is satisfied for $n$ sufficiently large since $C>1$.

Choose $1<C < 1/{\overline{\lambda}}$ and assume $n$ is large enough for the previous claim to hold. Then the number of firms that firms $j$ with a path from $j$ to $i$ in the indirect learning network is first-order stochastically dominated by the number of nodes in the multi-type branching process such that the number of successors of a node of type $q_i$ of each type $q_j$ is distributed as a a Poisson random variable with mean $C\delta \iota(q_i,q_j)n(q_j)$. Call this number of nodes $y'$.

We want to show that $y'<\infty$ with probability one and the probability that $y'=y$ decays exponentially in $y$. The (at most $K \times K$) matrix $C\delta \iota(q_i,q_j)n(q_j)$ has spectral radius at most $C\overline{\lambda}<1$ because  $(C\iota(q_i,q_j)\delta)_{ij}$ does. Therefore, by Theorem 2 of Section V.3 of \cite*{athreya1972branchingprocesses}, the probability that $y'=\infty$ is zero.

Let $Z_j$ be the number of nodes in the $j^{th}$ generation of the branching process. By Theorem 1 of Section V.3 of \cite*{athreya1972branchingprocesses}, the probability that $Z_j >0$ is of order at most $(C\overline{\lambda})^i$. So the probability that $Z_T>0$ decays exponentially in $T$.

Dropping nodes with zero probability of interaction if necessary, we can assume that all $q_i>0$. By the Perron-Frobenius theorem, there exists an eigenvector of $(\delta \iota(q_i,q_j)n(q_j))_{ij}$ with positive real entries and eigenvalue equal to the spectral radius of this matrix. We call this eigenvector $\mathbf{v}$.

We claim that the probability that there are more than $Tv_i$ nodes of some type $q_i$ in one of the generations $1,\hdots,T$ decays exponentially in $T$. There is one node in generation zero. Suppose that there are at most $Tv_i$ nodes of each type $q_i$ in generation $j$. Then by our construction of $\mathbf{v}$, the number of nodes of each type $q_i$ in generation $j+1$ is Poisson with mean at most $Tv_i C\overline{\lambda}$. A Poisson random variable of mean $Tv_i C\overline{\lambda}$ is the sum of $T$ Poisson random variables of mean $v_iC\overline{\lambda}$. So by the central limit theorem, the probability that there are at least $Tv_i$ such nodes decays exponentially in $T$, independent of $j$. This implies the claim.

Therefore, the probability that there are more than $Tv_i$ nodes decays exponentially in $T$. We have completed the proof that $y'<\infty$ with probability one and the probability that $y'=y$ decays exponentially in $y$. Finally, $|I_i(\mathbf{p},\mathbf{q})|$ is first-order stochastically dominated by the sum of $y'$ Bernoulli random variables with $n$ trials and success probability $\max_i \iota(q_i,q_i)$. The statement of the lemma now follows by the central limit theorem.
\end{proof}

\subsection{Symmetric Investment Equilibria}

We begin by describing the structure of this section.
\begin{enumerate}
\item We first show that there does not exist a subcritical sequence of symmetric investment equilibria. The proof assumes such a sequence exists for the sake of contradiction and characterizes equilibrium learning and behavior via two lemmas. The first lemma gives a first-order condition for the choice of $q_i$. This first-order condition is used to prove a second lemma, which shows that $\delta\iota(q^*,q^*)n$ is approximately equal to an expectation $\mathbb{E}[\tau(t)]$, where $\tau(t)$ is the number of links needed to learn the ideas in the technology $t$. Because $\tau(t) \geq 1$ for all $t$, this implies that $\liminf_n \delta \iota(q^*,q^*) n$ is at least one. But this contradicts the assumption that the sequence of equilibria is subcritical.

\item We then show that there does not exist a supercritical sequence of symmetric investment equilibria. We show that given a supercritical sequence of symmetric actions, the payoffs for firm $i$ can be computed (to first order) based on whether firm $i$ learns the ideas learned from the giant component and which firms learn from $i$. This is because the number of ideas that firm $i$ learns from outside the giant component is very likely to be small, so such ideas have a small effect on payoffs. We compute the highest-order term in the firm $i$'s payoffs explicitly and show that each firm $i$ would prefer to deviate to a lower choice of $q_i$ for $n$ large.

\item We finally show that there exists a symmetric investment equilibrium for $n$ large. The argument shows that there is a unique best response near the critical region. Given the preceding analysis, the intermediate value theorem then implies the best response function must have a fixed point.
\end{enumerate}

After passing to a subsequence if necessary, we can assume that any sequence of equilibria is either subcritical, critical, or supercritical.

\textbf{Subcritical Case}: 
We begin with a lemma expressing the first-order condition for $q_i$ at a subcritical sequence of investment equilibria.\footnote{Note that we do not require a symmetry assumption, so we can also apply this lemma in the asymmetric case.}
\begin{lemma}\label{lem:basicFOC}
Along any sequence of subcritical investment equilibria,
$$ \delta\sum_{j \neq i}\frac{\partial \iota(q_i,q_j^*)}{\partial q_i}(q_i^*) \cdot \mathbb{E}\left[\binom{|I_i(\mathbf{p}^*,\mathbf{q}^*)|}{k-1}\right] \sim \mathbb{E}\left[\frac{\partial \binom{|I_i(\mathbf{p}^*,(q_i,q_{-i}^*)|}{k-1}}{\partial q_i}(q_i^*)\right]$$
for each $i$.
\end{lemma}
\begin{proof}[Proof of Lemma \ref{lem:basicFOC}]

We first claim that because the sequence of equilibria is subcritical, competition that is not based on learning all of firm $i$'s ideas indirectly is lower order. More formally, let $T_i(\mathbf{p},\mathbf{q})$ be the set of technologies $t$ such that $i \in t$ and firm $i$ learns all other ideas $j \in t$.\footnote{Recall that $PT_i(q_i,q_{-i})\subset T_i(\mathbf{p},\mathbf{q})$ is the subset of technologies $t$ such that no other firm learns all ideas in $t$.}
The claim is that if there does not exist a link from firm $i$ to another firm $j$ in the indirect-learning network, the conditional probability $\mathbb{E}_{t \in T_i(\mathbf{p}^*,\mathbf{q}^*)}[\mathbf{1}_{t \in PT_i(\mathbf{p}^*,\mathbf{q}^*)}]$
that $t \in PT_i(\mathbf{p}^*,\mathbf{q}^*)$ for a technology $t \in T_i(\mathbf{p}^*,\mathbf{q}^*)$ chosen at random converges to one. Here, each technology $t$ with $i\in t$ is chosen with probability proportional to the probability that firm $i$ knows all ideas in $t$, i.e., $t \in T_i(\mathbf{p}^*,\mathbf{q}^*)$.
 
Suppose $t \in T_i(\mathbf{p}^*,\mathbf{q}^*)$ and no firm learns indirectly from $i$. Choose some $j \in t$ distinct from $i$. By Lemma~\ref{lem:expbound}, the probability that a given firm $j'$ learns $y \geq \underline{y}$ ideas decays exponentially in $y$ at a rate independent of $n$. By independence, the probability that firm $j'$ learns ideas $i$ and $j$ is at most $o(\frac{1}{n})$. Therefore, the probability that any firm $j'$ learns ideas $i$ and $j$ is at most $o(1)$. The technology $t \in PT_i(\mathbf{p},\mathbf{q})$ if there is no such $j'$ for any $j \in t$ distinct from $i$, so this proves the claim.

Thus, we can express the expected utility of player $i$ choosing $p_i \in [0,1)$ and $q_i \geq 0$ as
$$p_i \mathbb{E}\left[\binom{|I_i(\mathbf{p}^*,(q_i,q^*_{-i}))|}{k-1}\right]\prod_{j \neq i}(1-\delta \cdot \iota(q_i,q_j^*))-c(p_i)+o(1).$$
We first note that the optimal $q_i$ does not depend on $p_i$, but instead is chosen to maximize
$$\mathbb{E}\left[\binom{|I_i(\mathbf{p}^*,(q_i,q^*_{-i}))|}{k-1}\right]\prod_{j \neq i}(1-\delta \cdot \iota(q_i,q_j^*))+o(1).\footnote{Note that $I_i(\mathbf{p},\mathbf{q})$ does not depend on $p_i$.}$$
The first-order condition gives
$$\delta  \mathbb{E}\left[\binom{|I_i(\mathbf{p}^*,\mathbf{q}^*)|}{k-1}\right] \sum_{j \neq i}\frac{\partial \iota(q_i,q_j^*)}{\partial q_i}(q_i^*)(1-\delta \cdot \iota(q_i,q_j^*))^{-1} \sim \partial \mathbb{E}\left[ \frac{\binom{|I_i(\mathbf{p},(q_i,q_{-i}^*))|}{k-1}}{\partial q_i}(q_i^*)\right].$$

Finally, we have
$$\sum_{j \neq i}\frac{\partial \iota(q_i,q_j^*)}{\partial q_i}(q_i^*)(1-\delta \cdot \iota(q_i,q_j^*))^{-1} \rightarrow \sum_{j \neq i}\frac{\partial \iota(q_i,q_j^*)}{\partial q_i}(q_i^*)$$
because $\limsup \delta \iota(\overline{q},\overline{q}) n$ is bounded by Lemma~\ref{lem:actionbound}.
\end{proof}

Given $t \in PT_i(\mathbf{p}^*,\mathbf{q}^*),$ let $\tau(t)$ be the smallest number of (direct or indirect) links such that firm $i$ would still know all ideas $j \in t$ with only $\tau(t)$ of its links.

We next prove Lemma~\ref{lem:tauFOC}. In the statement, each technology $t \in PT_i(\mathbf{p}^*,\mathbf{q}^*)$ under a given realization of all random variables is chosen with probability proportional to the probability of that realization.
\begin{namedtheorem}[Lemma~\ref{lem:tauFOC}]
Along any sequence of symmetric investment equilibria with $\limsup \delta \iota(q^*,q^*)n < 1$,
$$\delta \iota(q^*,q^*)n \sim \mathbb{E}_{t \in PT_i(\mathbf{p}^*,\mathbf{q}^*)}[\tau(t)]$$
for all $i$.
\end{namedtheorem}

\begin{proof}[Proof of Lemma \ref{lem:tauFOC}]
We will apply Lemma~\ref{lem:basicFOC}, which gives
$${\delta}\cdot \frac{\partial \iota(q_i,q^*)}{\partial q_i}(q^*)\cdot \mathbb{E}\left[\binom{|I_i(\mathbf{p}^*,\mathbf{q}^*)|}{k-1}\right] \sim \frac{1}{n}\frac{\partial \mathbb{E}\left[\binom{|I_i(\mathbf{p}^*,(q_i,q_{-i}^*))|}{k-1}\right]}{\partial q_i}(q^*)$$
at a symmetric equilibrium.

Let $\Gamma$ be the set of weakly increasing tuples $\gamma = (\gamma_1,\hdots,\gamma_{l(\gamma)})$ of integers such that $\sum_{j=1}^{l(\gamma)} \gamma_j = k-1$. We write $l(\gamma)$ for the length of the tuple $\gamma$.

Let $X_j$ be i.i.d. random variables with distribution given by the number of ideas that firm $i$ would learn from a firm $j'$ conditional on learning directly from $j'$. That is, $X_j$ is distributed as the sum of a Bernoulli random variable with success probability $p^*$ (corresponding to direct learning) and a random variable distributed as $|I_{j'}(\mathbf{p}^*,\mathbf{q}^*)|$ with probability $\delta$ and zero otherwise.

Then we claim that
\begin{equation}\label{eq:fewintersections}\mathbb{E}\left[\binom{|I_i(\mathbf{p}^*,\mathbf{q}^*)|}{k-1}\right] = \sum_{\gamma \in \Gamma} \binom{n-1}{l(\gamma)} \iota(q^*,q^*)^{l(\gamma)}\mathbb{E}\left[ \prod_{j =1}^{l(\gamma)} \binom{X_j}{\gamma_j}\right] + o(1).\end{equation}
The right-hand side counts the expected number of choices of $k-1$ ideas learned (directly or indirectly) via different neighbors $j$, allowing for the same idea to be chosen multiple times via distinct neighbors. To show the claim, we must argue that the contribution from choices of $k-1$ ideas including such repetitions is $o(1)$. 

 By Lemma~\ref{lem:actionbound}, the probability that there are links from $j$ to $i$ in the indirect learning network for $l$ firms $j$ decays exponentially in $l$. On the other hand, the probability of learning the ideas in a component via multiple direct connections converges to zero because each component is $o(n)$ a.a.s. By Lemma~\ref{lem:expbound}, the contribution from the this vanishing probability event vanishes asymptotically. This gives the claim in equation (\ref{eq:fewintersections}).

We can express the right-hand side of Lemma~\ref{lem:basicFOC} similarly. Recall the right-hand side counts the number of additional sets of $k-1$ distinct ideas that would be known to $i$ if $i$ added a direct link to an additional random agent. We then have
\begin{equation}\label{eq:fewintersectionsderiv}\frac{1}{n} \mathbb{E}\left[ \frac{\partial \binom{I_i(\mathbf{p}^*,(q_i,q_{-i}^*))}{k-1} }{\partial q_i}(q^*)\right]=\sum_{\gamma \in \Gamma} \binom{n-2}{l(\gamma)-1} \iota(q^*,q^*)^{l(\gamma)-1} \mathbb{E}\left[\prod_{j =1}^{l(\gamma)} \binom{X_j}{\gamma_j}\right] + o(1).\end{equation}
The same argument shows that the contribution from choices of $k-1$ ideas at least one of which is learned via multiple links is $o(1)$.

Substituting equations (\ref{eq:fewintersections}) and (\ref{eq:fewintersectionsderiv}) into Lemma~\ref{lem:basicFOC}, we find that
$$\iota(q^*,q^*)\delta \sim \mathbb{E}\left[\frac{\binom{n-2}{l(\gamma)-1}}{\binom{n-1}{l(\gamma)}}\right],$$
where the expectation is taken over all $(k-1)$-element sets of ideas in $I_i(\mathbf{p}^*,\mathbf{q}^*)$, and for each such set, $l(\gamma)$ is the number of direct links on a path to at least one idea in the set. Thus,
$$\lim_{n \rightarrow \infty} \iota(q^*,q^*)\delta  n = \lim_{n \rightarrow \infty} \mathbb{E}[l(\gamma)]=  \lim_{n \rightarrow \infty} \mathbb{E}_{t \in PT_i(\mathbf{p}^*,\mathbf{q}^*)}[\tau(t)].\qedhere$$\end{proof}

We must have $\tau(t) \geq 1$ for all $t$. So by Lemma~\ref{lem:tauFOC}, we have $\liminf_n \delta\iota(q^*,q^*)n \geq 1$ at any subcritical sequence of symmetric investment equilibria. This contradicts the definition of a subcritical sequence of equilibria.

 \textbf{Supercritical Case}: Suppose there exists a supercritical sequence of symmetric investment equilibria. We have $\liminf \iota (q^*,q^*) \delta n  > 1$ along this sequence, and we can pass to a convergent subsequence under which $\lim \iota(q^*,q^*) n$ exists or is infinite.
 
Theorem 1 of \cite*{karp1990transitive} shows that a.a.s. the number of firms that all firms in the giant component learn from is $\alpha n+o(n)$ for a constant $\alpha$ increasing in $\lim \iota(q^*,q^*) n$ and that the number of agents outside the giant component observed by any agent is $o(n)$.

If firm $i$ chooses $q_i$, the probability of $i$ learning all ideas known to the giant component is $(1-(1-\delta \iota(q_i,q^*))^{\alpha(n-1)+o(n)})$. Conditional on this event, firm $i$ learns $p^*(\alpha n+o(n))$ ideas. Therefore, we have
$$\mathbb{E}\left[\binom{|I_i(\mathbf{p}^*,(q_i,q_{-i}^*))|}{k-1}\right]= (p^*)^{k-1}(1-(1-\delta \iota(q_i,q^*))^{\alpha n+o(n)})((\alpha(n-1))^{k-1}+o(n^{k-1})).$$
In particular, to solve for firm $i$'s choice of $q_i$ to first order, we need only consider technologies consisting of $i$'s private idea and $(k-1)$ ideas learned by the giant component. The probability that such a technology faces competition is
$$(1-\delta \cdot \iota(q_i,q^*) -(1-\delta)\cdot \iota(q_i,q^*) \cdot h(\alpha))^{n-1}+o(1),$$
where $h(\alpha)$ is the fraction of firms $j$ such that  all of $j$'s ideas are learned by some firm that learns all ideas known to the giant component. The term $\delta \cdot \iota(q_i,q^*)$ corresponds to the possibility of a firm $j$ indirectly learning all of firm $i$'s ideas. The term $(1-\delta)\cdot \iota(q_i,q^*) \cdot h(\alpha)$ corresponds to the possibility of a firm learning firm $i$'s idea via a firm $j$ directly learning (but not indirectly learning) from $i$ and indirectly learning the ideas learned by the giant component.

Thus, we are looking for $q_i$ maximizing
\begin{equation}\label{eq:tomax}\mathbb{E}\left[\binom{|I_i(\mathbf{p}^*,(q_i,q_{-i}^*))|}{k-1}\right](1-\delta \cdot \iota(q_i,q^*) -(1-\delta)\cdot \iota(q_i,q^*) \cdot h(\alpha))^{n-1}\end{equation}
asymptotically. We claim the derivative of expression (\ref{eq:tomax}) in $q_i$ is equal to zero at $q^*$ only if $\iota(q^*,q^*)\delta n \leq 1$. A fortiori, we can instead show this for
$$\mathbb{E}\left[\binom{|I_i(\mathbf{p}^*,(q_i,q_{-i}^*))|}{k-1}\right](1-\delta \cdot \iota(q_i,q^*))^{n-1}.$$
This is because for $n$ large, the derivative of this expression in $q_i$ is positive if the derivative of expression (\ref{eq:tomax}) is positive.
 
The first-order condition for the latter expression at $q_i=q^*$ implies that
$$(p^*)^kn(\alpha n)^{k-1} \delta (1-\delta \iota(q^*,q^*))^{n-1}(\delta \alpha (1-\delta  \iota(q^*,q^*)) (1-\delta  \iota(q^*,q^*))^{\alpha(n-1)-1}- (1-(1-\delta  \iota(q^*,q^*))^{\alpha(n-1)})$$
is $o(n^{k-1})$.
Solving for $\iota(q^*,q^*)$ such that this holds, we obtain
$$\lim_n (1-\delta  \iota(q^*,q^*))^{\alpha(n-1)-1} = \frac{1}{1+\delta \alpha}.$$
Thus, 
\begin{equation}\label{eq:giantcompsize}\lim_n e^{-\delta \alpha  \iota(q^*,q^*) n} = \frac{1}{1+\delta \alpha}.\end{equation}
The left-hand side is the asymptotic probability that a firm does not indirectly learn from a firm which learns all ideas known to all firms in the giant component, and this is $1-\alpha$. So $$1-\alpha = \frac{1}{1+\delta\alpha},$$
or equivalently $\alpha(\alpha \delta + 1 - \delta)=0.$ Since $\delta\leq 1$, our sequence of solutions to equation (\ref{eq:giantcompsize}) must have $\alpha = 0$, and therefore cannot be supercritical.

We have now shown that any sequence of symmetric investment equilibria is critical. We next prove that there exists a symmetric investment equilibrium for $n$ large.

\textbf{Existence}: We first show that given private investment $p>0$, there exists a level of openness $q$ that is a best response when all other firms choose $(p,q)$. We then show that when all firms choose openness $q$, there exists an optimal $p$ for all firms.

Let $BR(q)$ be the set of best responses $q$ when all other firms choose $q_j=q$ and $p_j=p>0$. Note that the set $BR(q)$ does not depend on the value of $p>0$. 

\begin{lemma}\label{lem:uniquebr}
There exists $\kappa>0$ such that for $n$ sufficiently large, if $\delta \iota(q,q)n \in (1-\kappa,1+\kappa)$ then $BR(q)$ is a single-valued correspondence.
\end{lemma}
\begin{proof}[Proof of Lemma \ref{lem:uniquebr}]

Let $\epsilon>0$. Taking $\kappa$ sufficiently small, we first claim that we can choose $M>0$ such that $$\lim_n \mathbb{P}\left[|I_i(\mathbf{p},\mathbf{q})| >M \right]<\epsilon$$ for all $q$ such that $\delta \iota(q,q)n \in (1-\kappa,1+\kappa)$.  For any $q<q'$, the random variable $|I_i(\mathbf{p},\mathbf{q})|$ is first-order stochastically dominated by $|I_i(\mathbf{p},\mathbf{q}')|$. So we can assume $q=1+\kappa$. For $M$ and $n$ sufficiently large, the probability of learning at least $M$ ideas is bounded above by twice the probability of learning all ideas learned by the giant component, which is less than $\epsilon$ for $\kappa$ sufficiently small.

Now let $M'>0$. We next claim that we can choose $\kappa>0$ such that $$\liminf_n \mathbb{E}[{|I_i(\mathbf{p},\mathbf{q})|}] >M'$$ for all $q$ such that $\delta \iota(q,q)n \in (1-\kappa,1+\kappa)$.

Let $x >0$ and define $q(x)$ by $\iota(q(x),q(x))=\frac{1-x}{\delta n}$. For any $q'  < q$, the random variable $|I_i(\mathbf{p},\mathbf{q'})|$ is first-order stochastically dominated by $|I_i(\mathbf{p},\mathbf{q})|$. So it is sufficient to show that  $$\lim_{\lambda \rightarrow 0}\lim_{n \rightarrow \infty}\mathbb{E}[{|I_i(\mathbf{p},\mathbf{q}(x))|}] \rightarrow \infty.$$

We can bound $|I_i(\mathbf{p},\mathbf{q}(x))|$ below by the expected number of firms $j$ with a path from $j$ to $i$ in the indirect learning network. By Theorem 11.6.1 of \cite*{alon2004probabilistic}, the limit of this quantity as $n\rightarrow \infty$ is equal to the number of nodes in a Poisson branching process with parameter $1-x$. As $x\rightarrow 0$, this number of nodes converges to infinity. This proves the claim.

Suppose all other firms choose $q_{-i}=q$ for some $q$ such that $\delta \iota(q,q) n \in (1-\kappa,1+\kappa)$. It is sufficient to show that we can choose $\kappa>0$ such that for $n$ large, there is a unique $q_i$ such that the derivative $\frac{\partial U_i(\mathbf{p},(q_i,{q}_{-i}))}{\partial q_i}$ is equal to zero. By the argument in Lemma~\ref{lem:actionbound}, we can restrict to $q_i$ such that $\iota(q_i,q) \leq \frac{\overline{C}}{\delta }$ for some constant $\overline{C}$.

We can again check this first-order condition in the case that no firm has learned indirectly from $i$, as else payoffs are zero. Fix $\epsilon>0$. Let $v$ be the expected value of a single incoming link. We claim that for $n$ sufficiently large, we can take $\kappa$ so that the derivative in $q_i$ of the expected marginal cost of an outgoing link is greater than $\delta v/2$ for $q_i$ such that $\iota(q_i,q) \leq \frac{\overline{C}}{\delta }$.

The marginal cost of an outgoing link is equal to $\delta \binom{|I_i(\mathbf{p},(q_i,q_{-i}))|}{k-1}+o(\mathbb{E}[\binom{|I_i(\mathbf{p},(q_i,q_{-i}))|}{k-1}])$. Fixing $\epsilon>0$ and $M$ sufficiently large, we can assume the probability that firm $i$ has already learned at least $M$ ideas is at most $\epsilon$. For any $M'>0$,  we have $$\liminf_n \mathbb{E}[{|I_i(\mathbf{p},\mathbf{q})|}] >M'$$for $\kappa$ sufficiently small. Taking $M'$ sufficiently large relative to $M$, we can assume that the change in $\binom{|I_i(\mathbf{p},(q_i,q_{-i}))|}{k-1}$ from an additional outgoing link is greater than $v/2$. This proves the claim.

We also claim that for $\kappa$ small enough and $n$ sufficiently large the derivative in $q_i$ of the expected marginal value of an additional incoming link (conditioning on no firm learning indirectly from $i$) is less than $\delta v/4$. 

The probability that firm $i$ learns from $d$ firms decays exponentially in $d$.  By Rosenthal's inequality \citep*{rosenthal1970subspaces}, the payoffs to learning from ${d}$ firms grow at most at a polynomial rate in $d$. Thus we can choose $\overline{d}$ such that an arbitrarily small share of firm $i$'s expected payoffs are from the event that firm $i$ learns from more than $\overline{d}$ other firms for $n$ large.

We can condition on the event that firm $i$ learns from at most $\overline{d}$ other firms. Fixing $\epsilon>0$ and $M$ sufficiently large, we can assume the probability that firm $i$ has already learned at least $M$ ideas is at most $\epsilon$. Taking $M'$ sufficiently large relative to $M$, we can assume the expected number of ideas learned by each other firm is at least $M'$.

We want to evaluate the change in the marginal value of an additional incoming link when $i$ gains an incoming link. We consider two cases depending on whether firm $i$ learns at least $M$ ideas from each of two or more incoming links.

The probability that $i$ learns at least $M$ ideas from each of two or more incoming links is at most $\epsilon^2 \binom{\overline{d}+2}{2}$ and the expected payoffs conditioning on this event are within a constant multiple of the expected payoffs conditioning on learning at least $M$ ideas. So the payoffs from this event are a vanishing fraction of $v$ asymptotically.

Suppose firm $i$ learns at least $M$ ideas from at most one link. The change in the marginal value of an additional incoming link when $i$ gains an incoming link is equal to the number of choices of $k-1$ ideas learned by firm $i$, at least one of which comes from each of these two new incoming links. We have assumed firm $i$ does not learn more than $M$ ideas from both of these links, and therefore the number of such technologies is at most $M \binom{|I_i(\mathbf{p},(q_i,q_{-i})||}{k-2}|$. Taking $M'$ sufficiently large relative to $M$, we can take this quantity to be an arbitrarily small fraction of $v$ for $n$ large. Combining these arguments shows the claim.

We have shown that conditioning on no firm learning indirectly from $i$, the derivative in $q_i$ of the expected marginal cost of an additional outgoing link is at least $\delta v/2$ while the derivative in $q_i$ of the expected marginal value of an additional incoming link is less than $\delta v/4$. So the derivative in $\iota(q_i,q_{-i})$ of expected payoffs conditional on no firm learning indirectly from $i$ is strictly decreasing, and therefore the first-order condition is satisfied uniquely, for $n$ sufficiently large.
\end{proof}

By the lemma, we can choose $\kappa$ such that $BR(q)$ induces a function on the set of $q$ such that $\delta \iota(q,q)n \in (1-\kappa,1+\kappa)$ for $n$ large. Fix such a $\kappa$. Because payoffs are continuous in $\mathbf{q}$, this function $BR(q)$ is continuous. 

We have shown in our analysis of the subcritical region that when $\iota(q,q)= \frac{1-\kappa}{\delta n}$, any element $q'$ of $BR(q)$ has $\delta\iota(q',q)n$ approximately equal to the expectation of $\tau(t)$ over proprietary technologies.\footnote{We stated this result above at equilibrium, but only used that the firm was choosing a  best response.} Because $\tau(t) \geq 1$ for all $t$, we have $$\liminf_n \iota(BR(q),q) n \geq \frac{1}{\delta }.$$

We have shown in our analysis of that supercritical region that if $\iota(q,q)= \frac{1+\kappa}{\delta n}$, then $$\limsup_n \iota(BR(q),q) n < \frac{1}{\delta }.$$
So for $n$ large, we have $BR(q)>q$  when $\iota(q,q)= \frac{1-\kappa}{\delta n}$ while $BR(q) <q$ when $\iota(q,q)= \frac{1+\kappa}{\delta n}$. By the intermediate value theorem, there is a fixed point of the function $BR(q)$ with  $\delta \iota(q,q)n \in (1-\kappa,1+\kappa)$ for $n$ large.

Choose any such fixed point $q^*$ of $BR(q)$. Fix any firm $i$ and let the potential proprietary technologies $PPT_i(\mathbf{q})$ be the set of technologies $t$ such that firm $i$ will receive monopoly profits for $t$ if all ideas in the technology $t$ are discovered. This is a random object depending on the realizations of interactions but not on the realizations of private investment, and $PPT_i(\mathbf{q}) \cap I = PT_i(\mathbf{p},\mathbf{q})$. We have shown that $\mathbf{q}^*$ is critical, so $\mathbb{E}[|PPT_i(\mathbf{q})|]\rightarrow \infty$.

A symmetric equilibrium corresponds to $p^*$ satisfying
$$p^*=\text{argmax}_{p} p(p^*)^{k-1}\mathbb{E}[|PPT_i(\mathbf{q})|]-c(p).$$
Taking the first-order condition, a symmetric equilibrium corresponds to $p^*$ satisfying
$$c'(p^*)=(p^*)^{k-1}\mathbb{E}[|PPT_i(\mathbf{q})|].$$
Because $c(\cdot)$ is continuously differentiable and convex with $c'(0) \geq 0$ and $c(p)\rightarrow \infty$ as $p\rightarrow 1$, while $\mathbb{E}[|PPT_i(\mathbf{q})|]\rightarrow \infty$, there exists a solution for $n$ sufficiently large. So there exists a symmetric investment equilibrium for $n$ sufficiently large.

\subsection{Arbitrary Investment Equilibria}

To complete the proof of Theorem~\ref{critical}, it remains to extend our characterization from symmetric equilibria to arbitrary equilibria. It is again sufficient to show that we cannot have a supercritical sequence of investment equilibria or a subcritical sequence of investment equilibria, and we treat each case separately.

We first consider the supercritical case, and show that there exists a giant component with the same relevant properties as in our analysis of symmetric equilibria. We then consider the subcritical case, which follows the same basic outline as in our analysis of symmetric equilibria.

\textbf{Supercritical Case}: Because the sequence of actions is supercritical, we can assume that the matrix $(\iota(q_i^*,q_j^*)\delta)_{ij}$ has spectral radius at least $\underline{\lambda}>1$ for all $n$ sufficiently large.

We first claim that there exists $\underline{\alpha}>0$ such that for all $n$, there is a component of the learning network containing at least $\underline{\alpha}n$ firms a.a.s. It is sufficient to show this after decreasing $q_i$ for some $i$, and therefore also $\underline{\lambda}$. By Lemma~\ref{lem:actionbound}, we have $q_i^* \leq \overline{C}/\sqrt{n}$ for each $i$. We can therefore assume without loss of generality that there are at most $K$ choices of $q_i$ for each $n$. Here the number of distinct actions $K$ can depend on the initial upper bound $\underline{\lambda}$. We denote the number of firms choosing $q_i$ by $n(q_i)$.

By Theorem 1 of \cite*{bloznelis2012birth}, the largest component has at least $\underline{\alpha}n + o(n)$ nodes a.a.s., where $1-\underline{\alpha}$ is the extinction probability of the multi-type branching process with types corresponding to choices of $q_i$ and the number of successors of type $q_{j}$ of a node of type $q_i$ distributed as a Poisson random variable with mean $\delta \iota(q_i,q_j) n(q_j)$. By Theorem 2 of Section V.3 of \cite*{athreya1972branchingprocesses}, this extinction probability satisfies $1-\underline{\alpha} > 0$ for $n$ large since $\lambda>\underline{\lambda} > 1$. This proves the claim, and we now return to studying the original actions $\mathbf{q}$.

Because there is a component of the learning network containing at least $\underline{\alpha}n$ firms with probability at least $\epsilon$, we can also choose $\underline{C}$ such that $\underline{q}$ is at most $\frac{\underline{C}}{\sqrt{n}}$ for all $n$. To see this, note that the payoffs to choosing $q_i = \frac{1}{\sqrt{n}}$ are of order $n^{k-1}$. On the other hand, the payoffs to choosing $q_i = \frac{\underline{C}}{\sqrt{n}}$ are bounded above by $2(1-e^{-\underline{C}\overline{C}})n^{k-1}$. So for $\underline{C}$ sufficiently small, the expected payoffs to choosing $q_i=\frac{\underline{C}}{\sqrt{n}}$ conditional on any realization of all links between firms other than $i$ are less than the expected payoffs to choosing $q_i=\frac{1}{\sqrt{n}}$ at equilibrium.

As $n$ grows large, for all $i$ and $j$ distinct the probability that $j \in t$ for a uniformly chosen $t \in PT_i(\mathbf{q}^*)$ approaches zero. We will show that for any sequence of best responses $q_i$ for firm $i$, the expected number of interactions $\sum_{j \neq i} \iota(q_i^*,q_j^*)$ has a unique limit which is independent of $i$.

To show this, we next claim that there is at most one component of linear size a.a.s. To do so, we will use equation (5) of \cite*{bloznelis2012birth}. In the notation of \cite*{bloznelis2012birth}, the type space $S$ will be $S=[\underline{C},\overline{C}]$, and the kernel $\kappa(s,s') = ss'$. We will identify the type of an agent $i$ with $q_i^*\sqrt{n}$.

The space of distributions $\Delta(S)$ over types is compact. Fix such a distribution. Relabelling so that $q_i^*$ are increasing in $i$, we can generate a random network for each $n$ by taking the action $q_i^*$ of agent $i$ to be $s/\sqrt{n}$, where $s$ is the $(i/n)^{th}$ quantile of the distribution. As $n\rightarrow \infty$, by equation (5) of \cite*{bloznelis2012birth}, the largest component of the learning network learns $\alpha n + o(n)$ ideas for some $\alpha \in [0,1]$. It follows from Theorem 1 and the same approximation techniques used in that paper that  the second largest component learns $o(n)$ ideas. Because the space of distributions $\Delta(S)$ is compact, this convergence of component sizes is uniform. So passing to a convergent subsequence if necessary, we can assume that there is a unique giant component learning $\alpha n + o(n)$ ideas a.a.s., where $\alpha > 0$.

The payoffs to choosing $q_i$ are then equal to $(p^*)^k\binom{\alpha n}{k-1}$
times the probability that firm $i$ learns all ideas known to the giant component and no firm $j$ learns $i$'s idea and all ideas known to the giant component, plus a term of order $o(n^{k-1})$. Formally, if $G_1$ is the set of firms whose ideas are learned by some firm that learns all ideas known to the giant component, the action $q_i$ is chosen to maximize
$$\binom{\alpha n}{k-1} \left(1-\prod_{j \in G_1}(1-\delta \iota(q_i,q_j^*))\right)\prod_{j \in G_1}(1-\iota(q_i,q_j^*))\prod_{j\notin G_1}(1-\delta \iota(q_i,q_j^*))+o(n^{k-1}).$$
Taking the first-order condition, we must have
$$\delta \sum_{j \in G_1} q_j^*\sim \left(1-\prod_{j \in G_1}(1-\iota(q_i,q_j^*))\right)(\sum_{j \in G_1} q_j^* + \delta \sum_{j \notin G_1}q_j^*).$$
Since the right-hand side is increasing in $q_i$, the solution has a unique limit $\lim_n \sum_{j \neq i} \iota(q_i^*,q_j^*).$
This limit does not depend on $i$. Passing to a subsequence if necessary, we can assume the limit $\lim_{n \rightarrow \infty}\sum_{j} \iota({q}_i,q_j^*)$ exists and is independent of $i$. Moreover, this limit must be greater than $\frac{1}{\delta}$ for equilibrium to be supercritical. But then the same calculation as in the symmetric case shows that the best response $q_i$ for all firms is at most $\frac{1}{\sqrt{\delta n}}$ asymptotically, which gives a contradiction.

\textbf{Subcritical Case}: The largest component of the learning network has at most $o(n)$ nodes a.a.s. By Lemma~\ref{lem:actionbound}, we can choose $\overline{C}$ such that $\overline{q}$ is at most $\frac{\overline{C}}{\sqrt{n}}$ for all $n$. We now proceed to derive a characterization of equilibrium as in Lemma~\ref{lem:tauFOC}. We will then use this characterization to show the result.

We now let $X_j$ be i.i.d. random variables with distribution given by the number of ideas that firm $i$ would learn from a firm $j$ conditional on learning directly from $j$. That is, $X_j$ is distributed as the sum of a Bernoulli random variable with success probability $p_j^*$ (corresponding to direct learning) and a random variable with the distribution of $|I_j(\mathbf{p}^*,\mathbf{q}^*)|$  with probability $\delta$ and equal to zero otherwise.

Therefore, we can express the expected number of technologies firm $i$ can make if it discovers its own idea as
$$\mathbb{E}\left[\binom{|I_i(\mathbf{p}^*,\mathbf{q}^*)|}{k-1}\right] = \sum_{\gamma \in \Gamma} \sum_{j_1,\hdots,j_{l(\gamma)}\neq i} \prod_{r =1}^{l(\gamma)} q^*_iq^*_{j_r}\mathbb{E}\left[ \prod_{r =1}^{l(\gamma)} \binom{X_{j_r}}{\gamma_r}\right] + o\left(\mathbb{E}\left[\binom{|I_i(\mathbf{p}^*,\mathbf{q}^*)|}{k-1}\right]\right).$$
The second summation is over choices of $l(\gamma)$ distinct firms other than $i$.

Taking the first-order condition in $q_i$,
$$\frac{1}{n} \mathbb{E}\left[ \frac{\partial \binom{I_i(\mathbf{p}^*,(q_i,q_{-i}^*))}{k-1} }{\partial q_i}(q^*)\right]=\frac{1}{n}\sum_{\gamma \in \Gamma} \sum_{j_1,\hdots,j_{l(\gamma)}\neq i} \left(\sum_{r =1}^{l(\gamma)} q_{j_r}^* \prod_{r' \neq r} q^*_iq^*_{j_r}\right)\mathbb{E}\left[ \prod_{r =1}^{l(\gamma)} \binom{X_{j_r}}{\gamma_r}\right] + o\left(\mathbb{E}\left[\binom{|I_i(\mathbf{p}^*,\mathbf{q}^*)|}{k-1}\right]\right).$$

By Lemma~\ref{lem:basicFOC}, we have
$$ \delta (\sum_{j \neq i} q_j^*)\sum_{\gamma \in \Gamma} \sum_{j_1,\hdots,j_{l(\gamma)}\neq i} \prod_{r =1}^{l(\gamma)} q^*_iq^*_{j_r}\mathbb{E}\left[ \prod_{r =1}^{l(\gamma)} \binom{X_{j_r}}{\gamma_r}\right]\sim \sum_{\gamma \in \Gamma} \sum_{j_1,\hdots,j_{l(\gamma)}\neq i} \left(\sum_{r =1}^{l(\gamma)} q_{j_r}^* \prod_{r' \neq r} q^*_iq^*_{j_r}\right)\mathbb{E}\left[ \prod_{r =1}^{l(\gamma)} \binom{X_{j_r}}{\gamma_r}\right].$$
Rearranging,
$$\sum_{\gamma \in \Gamma} \sum_{j_1,\hdots,j_{l(\gamma)}\neq i} \prod_{r =1}^{l(\gamma)} q^*_iq^*_{j_r}\mathbb{E}\left[ \prod_{r =1}^{l(\gamma)} \binom{X_{j_r}}{\gamma_r}\right](\delta q_i^*(\sum_{j \neq i} q_j^*)-l(\gamma) ) \sim 0.$$
In particular, we have
\begin{equation}\label{eq:asymaction}\delta q_i^*(\sum_{j \neq i} q_j^*) \sim \mathbb{E}_{t \sim G_i(\mathbf{p}^*,\mathbf{q}^*)}[\tau(t)]\end{equation}
where the expectation is taken with respect to the appropriate distribution $G_i(\mathbf{p}^*,\mathbf{q}^*)$ over technologies.

As in the symmetric case above, this implies that $\liminf_{n\rightarrow \infty} \delta q_i^*(\sum_{j \neq i} q_j^*)  \geq 1$ for each $i$. So the limit inferior of the row sums of $(\delta \iota(q_i^*,q_j^*))_{i,j}$ is at least one. Thus the spectral radius of this matrix also satisfies $\limsup_n \lambda \geq 1,$ which contradicts our assumption that the sequence of equilibria is subcritical.

We conclude that any sequence of investment equilibria must be critical, which proves Theorem~\ref{critical}.
\newpage
\Large
\begin{center}
\textbf{Online Appendix}
\end{center}
\normalsize
\section{Remaining Proofs}

\begin{proof}[Proof of Corollary~\ref{cor:open}]
Consider a sequence of equilibria $(\mathbf{p}^*,\mathbf{q}^*)$ with non-vanishing investment.

By Theorem~\ref{critical}, each firm learns $o(n)$ ideas at the equilibrium $(\mathbf{p}^*,\mathbf{q}^*)$. Thus $\mathbb{E}[| T_i(\mathbf{p},\mathbf{q})|]$ is $o(n^{k-1})$ for each $i$ and $\mathbb{E}[| \bigcup_i T_i(\mathbf{p},\mathbf{q})|]$ is $o(n^k)$. The discovery rate $D(\mathbf{p}^*,\mathbf{q}^*)\rightarrow 0$.
 
The spectral radius $\lambda \rightarrow 1$ at a sequence of investment equilibria by Theorem~\ref{critical}. Let $\lambda'$ be the spectral radius under actions $(\mathbf{p}^*,(1+\epsilon)\mathbf{q}^*)$ for each $n$. Since $\lambda$ is the spectral radius of the matrix $(\delta q_iq_j)_{ij}$ and $\lambda'$ is the spectral radius of the matrix $(\delta (1+\epsilon)^2q_iq_j)_{ij}$, we have $\lambda'=(1+\epsilon)^2\lambda$. In particular, $\lambda'>1+\epsilon$ for $n$ sufficiently large.

Therefore, as shown in the proof of Theorem~\ref{critical}, when actions are $(\mathbf{p}^*,(1+\epsilon)\mathbf{q}^*)$, there is a giant component of firms learning at least $\widetilde{\alpha}n$ ideas for some $\widetilde{\alpha} > 0$. Any firm in this giant component which discovers an idea will produce at least $\binom{\widetilde{\alpha} n}{k-1}$ technologies. There are a non-vanishing share of such firms, so the discovery rate satisfies $\liminf_n D(\mathbf{p}^*,\mathbf{q}^*)>0$.
\end{proof}

\begin{proof}[Proof of Corollary~\ref{cor:investment}]
Consider a sequence of equilibria $(\mathbf{p}^*,\mathbf{q}^*)$ with non-vanishing investment. We first consider the derivative of $D(\mathbf{p}^*+x\mathbf{1},\mathbf{q}^*)$ in $x$. We want to evaluate the change in the discovery rate $D(\mathbf{p}^*\mathbf{q}^*)$ if an additional idea $i$ (chosen uniformly at random) is discovered, which is equal to $\frac{1}{n} \frac{\partial D(\mathbf{p}^*+x\mathbf{1},\mathbf{q}^*)}{\partial x}$.

By Theorem~\ref{critical}, the sequence of equilibria is critical. So a.a.s., the firm $i$ learns $o(n)$ ideas and the firms' idea $i$ is learned by at most $o(n)$ firms. Thus in expectation there are at most $o(n^{k-1})$ additional technologies produced including firm $i$'s idea, and so the change in $D(\mathbf{p}^*+x\mathbf{1},\mathbf{q}^*)$ is $o(\frac{1}{n})$. Therefore the derivative of $D(\mathbf{p}^*+x\mathbf{1},\mathbf{q}^*)$ in $x$ is $o(1)$.

We now consider the derivative of $D(\mathbf{p}^*+x\mathbf{1},(1+\epsilon)\mathbf{q}^*)$ in $x$. As shown in the proof of Corollary~\ref{cor:open}, given actions $(\mathbf{p}^*,(1+\epsilon)\mathbf{q}^*)$ a.a.s. there exists a unique giant component of firms learning at least $\widetilde{\alpha}n$ ideas for some $\widetilde{\alpha} > 0$ and at least $\alpha n$ firms learn all ideas learned by this giant component, where $\widetilde{\alpha}, \alpha>0$.

We want to evaluate the change in the discovery rate $D(\mathbf{p}^*\mathbf{q}^*)$ if an additional idea $i$ is discovered. If firm $i$ learns all ideas learned by the giant component, then at least $\binom{\widetilde{\alpha}n}{k-1}$ new technologies are produced. So for a random $i$, in expectation there are at least $\alpha \binom{\widetilde{\alpha}n}{k-1}$ additional technologies produced including firm $i$'s idea, and so the change in $D(\mathbf{p}^*+x\mathbf{1},\mathbf{q}^*)$ is at least $C/n$ for some $C>0$. Therefore the derivative of $D(\mathbf{p}^*+x\mathbf{1},\mathbf{q}^*)$ in $x$ is bounded below by a positive constant.
\end{proof}

\begin{proof}[Proof of Proposition~\ref{prop:public}]
Let $b(n)$ be the share of public innovators for each $n$.

We first show that $\liminf_n D(\mathbf{p}^*,\mathbf{q}^*)>0$
along any sequence of equilibria with non-vanishing investment.

It is weakly dominant and strictly preferred at any investment equilibrium for all public innovators to choose $q_i=1$. Therefore, all public innovators are in the same component of the learning network. Private investment $p_i$ by public innovators is non-vanishing, so asymptotically almost surely all firms in this component learn at least $\alpha n$ ideas for some $\alpha > 0$.

Let $\overline{q}$ and $\underline{q}$ be the maximum and minimum levels of openness $q_i$ chosen at equilibrium by private firms, respectively. Because the probability that no firm learns indirectly from $i$ vanishes exponentially in $\iota(q_i^*,1)n$ while payoffs are $O(n^{k-1})$, the quantity $\iota(\overline{q},1)n$ must be bounded at equilibrium.

A consequence is that $\iota(\overline{q},\overline{q})n \rightarrow 0$. Therefore, the expected number of links to a firm $i$ from other firms is a vanishing fraction of the expected number of links to $i$ from public innovators. So a.a.s., a given firm $i$'s links are all with public innovators.

Since learning indirectly from a public innovator implies learning at least $\alpha n$ ideas a.a.s., it follows that $\iota(\underline{q},1)n$ does not vanish asymptotically at equilibrium. Therefore, a positive fraction of firms learn at least $\alpha n$ ideas a.a.s. Any such firm which discovers an idea will produce at least $\binom{\alpha n}{k-1}$ technologies, so $\liminf_n D(\mathbf{p}^*,\mathbf{q}^*)>0$. This proves the characterization of equilibria with non-vanishing investment.

It remains to show there exists a sequence of symmetric equilibria with non-vanishing investment. Recall that we now call an equilibrium symmetric if all public innovators choose the same action and the same holds for all private firms. Suppose that all firms other than $i$ choose $(p,q)$ with $p\geq \frac12$ and $\delta qn \leq C$ for some $C>0$ and all public innovators choose $p_i = p' \geq \frac12$ and $q_i=1$.

We claim that the best response $q_i$ for a firm $i$ is unique. To see this, observe that best response $q_i$ must satisfy the first-order condition after conditioning on no firm learning indirectly from $i$. Up to lower order terms, we can also condition on no firm having learned $i$'s idea and all ideas known to the giant component. Then the expected marginal value of an additional incoming link is strictly decreasing in $q_i$ while the expected marginal cost of an outgoing link is strictly increasing in $q_i$. So the first-order condition is satisfied at a unique value of $q_i$, and thus there is a unique best response.

If $q_i$ is the best response for $i$, then $\lim_n q_in$ exists and is independent of $(\mathbf{p},\mathbf{q})$ (given the restrictions on actions assumed above). This is because the probability of interactions between $i$ and other firms vanishes asymptotically, while the best response does not depend on the number of ideas learned by the unique giant component.

Therefore, we can choose $\epsilon>0$ and $C>0$ such that if $q \in [\frac{\epsilon}{\delta n},\frac{C}{\delta n}],$ then for $n$ large so is any best response $q_i$ for firm $i$. We claim that for $n$ large, given $\mathbf{p}$, there exists $q$  that is a best response to $(\mathbf{p},\mathbf{q}).$ This follows from Kakutani's fixed point theorem as in the proof of Theorem~\ref{critical}. We call this choice of openness $q(\mathbf{p})$.

Given such $(\mathbf{p},\mathbf{q}(\mathbf{p}))$, each firm has a non-vanishing probability of learning a linear number of ideas. Therefore, $\mathbb{E}[|I_i(\mathbf{p},\mathbf{q}(\mathbf{p}))|]\rightarrow \infty.$ So for $n$ large, any best response $p_i$ for each public innovator and each firm $i$ has $p_i \geq \frac12$. By Kakutani's fixed point theorem, there exist symmetric actions $(\mathbf{p},\mathbf{q}(\mathbf{p}))$ such that $p_i \geq \frac12$ is also a best response for each $i$. Thus there exists a sequence of symmetric equilibria with non-vanishing investment.
\end{proof}

\begin{proof}[Proof of Proposition~\ref{prop:welfare}]

(1) $\Rightarrow$ (2): Suppose there exists $\widetilde{\alpha} \in (0,1)$ such that there is a unique component of the indirect-learning network of size at least $\widetilde{\alpha}n$ a.a.s. When this event occurs, a positive fraction of firms learn at least $(\liminf_n \min_i p_i/2)\cdot \widetilde{\alpha}n$ ideas for $n$ sufficiently large. For any such firm $i$, any technology $t$ consisting of $i$ and $k-1$ of these ideas is in $T_i(\mathbf{p},\mathbf{q})$. So a positive fraction of technologies are produced, and therefore $\liminf_n D(\mathbf{p},\mathbf{q})>0$.

(2) $\Rightarrow$ (3): Suppose $\liminf_n D(\mathbf{p},\mathbf{q})>0$, so that there exists $C>0$ such that for $n$ large $\mathbb{E}[\left|T_i(\mathbf{p},\mathbf{q})\right|]>Cn^{k-1}$. Since $T_i(\mathbf{p},\mathbf{q})=PT_i(\mathbf{p},\mathbf{q}) \bigcup CT_i(\mathbf{p},\mathbf{q})$, either there exists $C>0$ such that for $n$ large $\mathbb{E}[\left|PT_i(\mathbf{p},\mathbf{q})\right|]>Cn^{k-1}$ or there exists $C>0$ such that for $n$ large $\mathbb{E}[\left|CT_i(\mathbf{p},\mathbf{q})\right|]>Cn^{k-1}$. In either case, expected consumer surplus $$w_{PT}\mathbb{E}[\left|PT_i(\mathbf{p},\mathbf{q})\right|]+w_{CT} \mathbb{E}[\left|CT_i(\mathbf{p},\mathbf{q})\right|]$$ is at least $C'n^{k-1}$ for some $C'>0$ and $n$ large. Since expected firm profits are non-negative, we conclude expected welfare $\mathbb{E}[W(\mathbf{p},\mathbf{q})]$ is at least $C'n^{k-1}$ for $n$ large.

(3) $\Rightarrow$ (1): Suppose there exists $C>0$ such that $\mathbb{E}[W(\mathbf{p},\mathbf{q})]>Cn^{k-1}$ for $n$ large. If (1) does not hold, we have shown in the proof of Theorem~\ref{critical} that each firm learns at most $o(n)$ ideas a.a.s. Then $\mathbb{E}[| T_i(\mathbf{p},\mathbf{q})|]$ is $o(n^{k-1})$ for each $i$. So $\mathbb{E}[| \bigcup_i T_i(\mathbf{p},\mathbf{q})|]$ is $o(n^k)$, and therefore welfare is $o(n^k)$. This gives a contradiction.
\end{proof}

\begin{proof}[Proof of Theorem~\ref{secrecy}]
We first characterize equilibria by showing we cannot have a supercritical and then subcritical sequence of investment equilibria. We then show there exists an investment equilibrium for $n$ large.

\textbf{Supercritical Case}: Suppose there is a supercritical sequence of investment equilibria propensities to learn $\beta_i$ for each $i$.

Passing to a subsequence if necessary, we can assume that all firms in the giant component learn $\widetilde{\alpha} n + o(n)$ ideas for some $\widetilde{\alpha}$ and that the number of firms learning all ideas learned by the giant component is $\alpha n + o(n)$ for some $\alpha$.\footnote{Because link probabilities are no longer symmetric within pairs, we do not assume that $\alpha =\widetilde{\alpha}$.} The argument is the same as in the proof of Theorem~\ref{critical}.

For each $i$, let $\alpha_i$ be the probability that firm $i$ learns all ideas learned by all firms in the giant component. Finally, we let $$\overline{\beta}=\frac{\sum_j \beta_j q_j^*}{\sum_j q_j^*}.$$ 
As $n\rightarrow \infty$, this converges to the derivative of the number of firms that learn from firm $i$ in $q_i$ divided by $\sum_{j\neq i} q_j^*$.

The first-order condition for firm $i$ then implies:
\begin{equation}\label{eq:hetbetaFOC}\delta \alpha_i\overline{\beta} \leq (1-\alpha_i) \beta_i \alpha + o(1).\end{equation}
To see this, suppose we increase $q_i/\sum_{j\neq i} q_j^*$ infinitessimally. We can condition on the event that no firm has learned $i$'s idea and all ideas learned by the giant component. The left-hand side is the probability that firm $i$ has learned indirectly from the giant component times the probability that a firm learns indirectly from $i$ after this increase. The right-hand side is the probability that firm $i$ has not learned indirectly from the giant component times the probability that firm $i$ learns indirectly from the giant component after this increase.

We have $\alpha_i = 1-e^{-\alpha \beta_i \delta \sum_{j \neq i} \iota(q_i^*,q_j^*)}.$ Substituting into equation (\ref{eq:hetbetaFOC}),
$$(1-e^{-\alpha \beta_i \delta \sum_{j \neq i} \iota(q_i^*,q_j^*)}) \overline{\beta} \leq e^{-\alpha \beta_i \delta \sum_{j \neq i} \iota(q_i^*,q_j^*)} \beta_i \alpha + o(1).$$
Therefore,
\begin{equation}\label{eq:totalinteractionineq}\delta \sum_{j \neq i} \iota(q_i^*,q_j^*) \leq \frac{\log(1+\alpha\beta_i/{\overline{\beta}})}{\alpha \beta_i}+o(1).\end{equation}

The expected number of firms from which firm $i$ learns indirectly is $\beta_i \delta \sum_{j\neq i}\iota(q^*_i,q^*_j)$. By equation (\ref{eq:totalinteractionineq}), this probability is bounded above $ \frac{\log(1+\alpha\beta_i/{\overline{\beta}})}{\alpha }+o(1)$. By the standard elementary inequality $\log(1+x)<x$, this is bounded above by $\beta_i/\overline{\beta}+o(1)$.

The expected number of firms that learn indirectly from $i$ is $$\sum_{j\neq i}\beta_j\iota(q^*_i,q^*_j)=\frac{\overline{\beta}}{\beta_i}\sum_{j\neq i}\beta_i\iota(q^*_i,q^*_j).$$
The right-hand side is the product of $\frac{\overline{\beta}}{\beta_i}$ and the expected number of firms from which firm $i$ learns indirectly, and therefore is at most
$$\frac{\overline{\beta}}{\beta_i} \cdot (\beta_i/\overline{\beta}+o(1))=1+o(1).$$
Since firm $i$ is arbitrary, this implies that each column sum of the matrix $(\delta \iota(q_i^*,q_j^*))$ is at most $1+o(1)$. Therefore, the spectral radius of this matrix is at most $1+o(1)$. But this contradicts our assumption that the sequence of equilibria is supercritical.

\textbf{Subcritical Case}: Suppose there is a subcritical sequence of investment equilibria. Introducing choices of secrecy, Lemma~\ref{lem:basicFOC} states that
$$\delta (\sum_{j\neq i} q_j^*\beta_j)\mathbb{E}\left[\binom{|I_i(\mathbf{p}^*,\mathbf{q}^*)|}{k-1}\right]= \frac{1}{n} \mathbb{E}\left[ \frac{\partial \binom{I_i(\mathbf{p}^*,(q_i,q_{-i}^*))}{k-1} }{\partial q_i}(q^*)\right]+o(\mathbb{E}\left[\binom{|I_i(\mathbf{p}^*,\mathbf{q}^*)|}{k-1}\right]).$$
The proof of the lemma remains the same.

We now let $X_j$ be i.i.d. random variables with distribution given by the number of ideas that firm $i$ would learn from a firm $j$ conditional on learning directly from $j$. That is, $X_j$ is distributed as the sum of a Bernoulli random variable with success probability $p_j^*$ (corresponding to direct learning) and a random variable with the distribution of $|I_j(\mathbf{p}^*,\mathbf{q}^*)|$ with probability $\delta$ and equal to zero otherwise.

By the same arguments as in the proof of Theorem~\ref{critical}, we have
$$\mathbb{E}\left[\binom{|I_i(\mathbf{p}^*,\mathbf{q}^*)|}{k-1}\right] = \sum_{\gamma \in \Gamma} \sum_{j_1,\hdots,j_{l(\gamma)}\neq i} \prod_{r =1}^{l(\gamma)} 
\beta_i q^*_iq^*_{j_r}\mathbb{E}\left[ \prod_{r =1}^{l(\gamma)} \binom{X_{j_r}}{\gamma_r}\right] + o(\mathbb{E}\left[\binom{|I_i(\mathbf{p}^*,\mathbf{q}^*)|}{k-1}\right]).$$
and also
$$\frac{1}{n} \mathbb{E}\left[ \frac{\partial \binom{I_i(\mathbf{p}^*,(q_i,q_{-i}^*))}{k-1} }{\partial q_i}(q^*)\right]=\frac{1}{n}\sum_{\gamma \in \Gamma} \sum_{j_1,\hdots,j_{l(\gamma)}\neq i} \left(\sum_{r =1}^{l(\gamma)}\beta_i q_{j_r}^* \prod_{r' \neq r}\beta_i q^*_iq^*_{j_r}\right)\mathbb{E}\left[ \prod_{r =1}^{l(\gamma)} \binom{X_{j_r}}{\gamma_r}\right] + o(\mathbb{E}\left[\binom{|I_i(\mathbf{p}^*,\mathbf{q}^*)|}{k-1}\right].$$

Therefore, $$\delta (\sum_{j \neq i} q_j^*\beta_j)\sum_{\gamma \in \Gamma} \sum_{j_1,\hdots,j_{l(\gamma)}\neq i} \prod_{r =1}^{l(\gamma)} q^*_iq^*_{j_r}\mathbb{E}\left[ \prod_{r =1}^{l(\gamma)} \binom{X_{j_r}}{\gamma_r}\right]\sim \sum_{\gamma \in \Gamma} \sum_{j_1,\hdots,j_{l(\gamma)}\neq i} \left(\sum_{r =1}^{l(\gamma)} q_{j_r}^* \prod_{r' \neq r} q^*_iq^*_{j_r}\right)\mathbb{E}\left[ \prod_{r =1}^{l(\gamma)} \binom{X_{j_r}}{\gamma_r}\right].$$
Note that the $\beta_i$ terms cancel. Rearranging,
$$\sum_{\gamma \in \Gamma} \sum_{j_1,\hdots,j_{l(\gamma)}\neq i} \prod_{r =1}^{l(\gamma)} q^*_iq^*_{j_r}\mathbb{E}\left[ \prod_{r =1}^{l(\gamma)} \binom{X_{j_r}}{\gamma_r}\right](\delta q_i^*(\sum_{j \neq i} q_j^*\beta_j)-l(\gamma) ) \sim 0.$$
In particular, we have
$$\delta q_i^*(\sum_{j \neq i} q_j^*\beta_j) \sim \mathbb{E}_{t \sim G_i(\mathbf{p}^*,\mathbf{q}^*)}[\tau(t)]$$
where the expectation is taken with respect to the appropriate distribution $G_i(\mathbf{p}^*,\mathbf{q}^*)$ over technologies. Because $\tau(t) \geq 1$ for all $t$, the limit superior of the expected number of firms that learn from $i$ is at least one.

So each column sum of the matrix $(\delta \iota(q_i^*,q_j^*))$ is at least $1+o(1)$. Therefore, the spectral radius of this matrix is at least $1+o(1)$. This contradicts our assumption that the sequence of equilibria is subcritical.

\textbf{Existence}: We have finitely many types $\beta_i$. Let $n(\beta_i)$ be the number of firms of type $\beta_i$.

We first fix levels of private investment $\mathbf{p}$ for each $n$ with $p_i \geq \frac12$ for all $i$. We will show that there exist levels of openness $\mathbf{q}$ such that  for all $i$ the choice $q_i$ is optimal given $\mathbf{p}$ and $q_{-i}$. Let $BR_i(\mathbf{p},q_{-i})$ denote the set of optimal $q_i$.\footnote{Note that unlike in the proof of Theorem~\ref{critical}, since the equilibrium is no longer symmetric, the best response $q_i$ can depend on others' levels of private investment.} We will restrict to $(\mathbf{p},q_{-i})$ such that each firm's action depends only on its type $\beta_j$.

We now show that $BR_i(\mathbf{p},q_{-i})$ is single-valued by treating several cases, beginning with a generalization of Lemma~\ref{lem:uniquebr}:
\begin{lemma}\label{lem:uniquebrgeneral}
There exists $\kappa>0$ such that for $n$ sufficiently large, for $(\mathbf{p},q_{-i})$ such that $(p_j,q_j)$ depend only on $j$'s type $\beta_j$ and the spectral radius $\lambda$ is in $(1-\kappa,1+\kappa)$, the correspondence $BR_i(\mathbf{p},q_{-i})$ is single-valued.
\end{lemma}
\begin{proof}[Proof of Lemma \ref{lem:uniquebrgeneral}]
Let $\epsilon>0$. We first claim that $\kappa$ sufficiently small and $M>0$ sufficiently large, we have $$\lim_n \mathbb{P}\left[|I_i(\mathbf{p},\mathbf{q})| >M \right]<\epsilon$$ for all $\mathbf{q}$ such that $\lambda \in (1-\kappa,1+\kappa)$. For any $\mathbf{q}'$ with $\lambda' <1+\kappa$, the random variable $|I_i(\mathbf{p},\mathbf{q}')|$ is first-order stochastically dominated by $|I_i(\mathbf{p},\mathbf{q}'\cdot \frac{1+\kappa}{\lambda })|$. So we can assume $\lambda=1+\kappa$.

We can bound the number of firms $j$ with a path from $j$ to $i$ in the indirect learning network above by the multi-type branching process with types corresponding to types $\beta_i$ and the number of successors of type $\beta_{j}$ of a node of type $\beta_i$ distributed as a Poisson random variable with mean $\beta_i \delta \iota(q_i,q_{j}) n(\beta_{j})$. By Theorem 2 of Section V.3 of \cite*{athreya1972branchingprocesses} and continuity of the extinction probability, we can take the extinction probability of this process to be at least $1-\epsilon/2$ by taking $\kappa$ sufficiently small.

Conditional on non-extinction, we can choose $M_0$ such that the number of nodes in the branching process is greater than $M_0$ with probability at most $\epsilon/4$. We can then choose $M$ such that when there are at most $M_0$ nodes in this branching process, we have $|I_i(\mathbf{p},\mathbf{q})| > M$ with probability at most $\epsilon/4$. Combining these bounds, we have $$\lim_n \mathbb{P}\left[|I_i(\mathbf{p},\mathbf{q})| >M \right]<\epsilon.$$

Now let $M'>0$. We next claim that we can choose $\kappa>0$ such that $$\liminf_n \mathbb{E}[{|I_i(\mathbf{p},\mathbf{q})|}] >M'$$ for all $q$ such that $\delta \iota(q,q)n \in (1-\kappa,1+\kappa)$

Fix any link probabilities $\mathbf{q}$ with $\lambda=1-\kappa$. For any $\mathbf{q}'$ with $\lambda' >1-\kappa$, the random variable $|I_i(\mathbf{p},\mathbf{q}')|$ first-order stochastically dominates $|I_i(\mathbf{p},\mathbf{q}'\cdot \frac{1-\kappa}{\lambda })|$. So it is sufficient to show that  $$\lim_{\lambda \rightarrow 0}\lim_{n \rightarrow \infty}\mathbb{E}[{|I_i(\mathbf{p},\mathbf{q})|}] \rightarrow \infty.$$

We can bound $|I_i(\mathbf{p},\mathbf{q})|$ below by the expected number of firms $j$ with a path from $j$ to $i$ in the indirect learning network. By the same argument as Theorem 11.6.1 of \cite*{alon2004probabilistic}, the limit of this quantity as $n\rightarrow \infty$ is equal to the number of nodes in a multi-type Poisson branching process with the same parameters. By Theorem 1 of Section V.4 of \cite*{athreya1972branchingprocesses}, this number of nodes converges to infinity as $\kappa \rightarrow 0$. This proves the claim.

Given these claims, the remainder of the proof of the lemma proceeds as in the proof of Lemma~\ref{lem:uniquebr}.\end{proof}

Next, we claim that when $n$ is sufficiently large, the best response $BR_i(\mathbf{p},q_{-i})$ for a firm $i$ is single-valued for $(\mathbf{p},\mathbf{q})$ such that $\lambda \geq 1+\kappa$. By compactness, it is sufficient to check this along any sequence of such $(\mathbf{p},\mathbf{q})$. Because $\lambda \geq 1+\kappa$, there exists a giant component of firms learning at least $\alpha n+o(n)$ ideas for some $\alpha>0$. We can assume without loss of generality that the giant component size converges.

Any best response $q_i$ will satisfy $\delta \iota(q_i,q_{-i})n < C$ for some $C$ independent of $n$. Any best response $q_i$ must satisfy the first-order condition after conditioning on no firm learning indirectly from $i$. Up to lower order terms, we can also condition on no firm having learned $i$'s idea and all ideas known to the giant component. Then the expected marginal value of an additional incoming link is strictly decreasing in $q_i$ while the expected marginal cost of an outgoing link is strictly increasing in $q_i$. So the first-order condition is satisfied at a unique value of $q_i$, and thus there is a unique best response.

Finally, we claim that for any $\underline{\lambda}>0$, the best response $BR_i(\mathbf{p},q_{-i})$ for a firm $i$ is single-valued for $n$ sufficiently large for $(\mathbf{p},\mathbf{q})$ such that $\underline{\lambda}<\lambda \leq 1-\kappa$. We can again check this along any sequence of such $(\mathbf{p},\mathbf{q})$. We want to show that the first-order condition for $q_i$ is uniquely satisfied, and we can check this conditioning on the event that no firm has learned indirectly from $i$.

Given action $q_i$, firm $i$ learns $I_i(\mathbf{p},(q_i,q_{-i})$ ideas. The expected marginal cost of an outgoing link is
\begin{equation}\label{eq:costoutgoing}\delta \mathbb{E}[\binom{|I_i(\mathbf{p},(q_i,q_{-i})|}{k-1}]+o(\mathbb{E}[\binom{|I_i(\mathbf{p},(q_i,q_{-i})|}{k-1}]).\end{equation}
As in the proof of Theorem~\ref{critical}, let $X$ be the random variable equal to the number of ideas that firm $i$ would learn from $j$ conditional on learning directly from $j$. The expected marginal value of an incoming link is
\begin{equation}\label{eq:valueincoming}\sum_{j=1}^{k-2}\mathbb{E}[\binom{|I_i(\mathbf{p},(q_i,q_{-i})|}{j} \binom{X}{k-j}]+o(\mathbb{E}[\binom{|I_i(\mathbf{p},(q_i,q_{-i})|}{k-1}]).\end{equation}

We want to show that expressions (\ref{eq:costoutgoing}) and (\ref{eq:valueincoming}) are equal for a unique $q_i$ for $n$ sufficiently large. The cardinality $|I_i(\mathbf{p},(q_i,q_{-i})|$ is first-order stochastically dominated by $|I_i(\mathbf{p},(q'_i,q_{-i})|$ when $q_i<q_i'$. 
 So it is sufficient to show that 
$$\frac{ \sum_{j=1}^{k-2}\mathbb{E}[\binom{y}{j} \binom{X}{k-j}]}{\delta \mathbb{E}[\binom{y}{k-1}]} >\frac{ \sum_{j=1}^{k-2}\mathbb{E}[\binom{y'}{j} \binom{X}{k-j}]}{\delta \mathbb{E}[\binom{y'}{k-1}]} $$
when $y'$ first-order stochastically dominates $y$.\footnote{Here the lower-order terms are well-behaved by the same arguments as in the proof of Theorem~\ref{critical}.}

The numerator is a linear combination of terms $\mathbb{E}[\binom{y}{j}]$ for $j\leq k-2$ with coefficients independent of $y$, so it is sufficient to show that $$ \frac{\mathbb{E}[\binom{y}{j}]}{\mathbb{E}[{\binom{y}{k-1}}]} >\frac{\mathbb{E}[\binom{y'}{j}]}{\mathbb{E}[{\binom{y'}{k-1}}]}$$
for $j<k-1$. This is equivalent to 
$$\mathbb{E}\left[\binom{y}{j}\right]\mathbb{E}\left[{\binom{y'}{k-1}}\right] >\mathbb{E}\left[\binom{y'}{j}\right]\mathbb{E}\left[{\binom{y}{k-1}}\right],$$
which holds because the binomial coefficients are strictly convex in $y \geq k$. We have proven our claim.
 
We have verified there is a unique optimal choice of $q_i$ when $\lambda > \underline{\lambda}>0$ for $n$ sufficiently large, and can now proceed to apply a fixed-point theorem. To do so, we first fix a sequence of $\mathbf{p}$ with $\liminf_n \min_i p_i >0$ for all $i$.

First suppose that a sequence of opponents' actions $(p_{-i}, q_{-i})$ is subcritical.\footnote{We extend our definition of criticality to the restriction of the random network to agents other than $i$.} Our analysis above showed that for $n$ large, the best response $BR(\mathbf{p}, q_{-i})$ has \begin{equation}\label{eq:BRbound}q_i(\sum_{j \neq i} q_j\beta_j) \in \left[\frac{1}{2\delta n}, \frac{k}{\delta n}\right],\end{equation}
where the upper bound follows from the fact that $\tau(t) \leq k-1$. So we can choose a uniform lower bound $\underline{\lambda}<1$ and upper bound $\overline{\lambda}>1$ on the spectral radius corresponding to actions $(BR(\mathbf{p}, q_{-i}))_i$.

Next, suppose that along a sequence of opponents' actions $(p_{-i}, q_{-i})$, the matrix of link probabilities for firms other than $i$ has spectral radius $\frac12 \leq \lambda \leq \overline{\lambda}$. Then firm $i$ can achieve a positive payoff by choosing $q_i$ such that $\beta_i\sum_{j\neq i} \iota(q_i,q_j)=1$, which we can bound below uniformly. On the other hand expected payoffs to firm $i$ vanish or are negative along any sequence such that $\beta_iq_i\rightarrow 0$. Thus we can choose $\epsilon>0$ such that $BR_i(\mathbf{p},q_{-i}) > \epsilon$. So lowering $\underline{\lambda}$ if necessary, we obtain a uniform lower bound $\underline{\lambda} < \frac12$ on the spectral radius corresponding to actions $(BR(\mathbf{p}, q_{-i}))_i$.

Increasing $\overline{\lambda}>1$ if necessary, we also have a uniform upper bound. Expression (\ref{eq:BRbound}) implies that we can increase $\overline{\lambda}>1$ without changing the lower bound.

We can conclude that for $n$ large, whenever the spectral radius of indirect learning probabilities with actions $\mathbf{q}$ is in $[\underline{\lambda},\overline{\lambda}],$ so is the spectral radius of indirect learning probabilities with actions $(BR_i(\mathbf{p},q_{-i})_i$. So by Brouwer's fixed-point theorem, for $n$ sufficiently large there exists a fixed point of $\mathbf{q} \mapsto (BR_i(\mathbf{p},q_{-i}))_i$. We will call this fixed point $\mathbf{q}(\mathbf{p}))$ to indicate the dependence on $\mathbf{p}$. It remains to show that there exists $\mathbf{p}$ such that $p_i$ is a best response under actions $(\mathbf{p},\mathbf{q}(\mathbf{p}))$ for all $i$.

Suppose that $\mathbf{p}_i\geq \frac12$ for all $i$ and all $n$. Our analysis of the subcritical and supercritical regions above extends immediately to this fixed point, as we did not rely on $\mathbf{p}$ being chosen optimally. Therefore, the sequence of outcomes $\mathbf{q}(\mathbf{p})$ must be critical. In particular, expected payoffs at $(\mathbf{p},\mathbf{q}(\mathbf{p}))$ converge to $\infty$ for all such sequences of $\mathbf{p}$.

The best response $p_i$ maximizes
$$p_i \mathbb{E}\left[ \sum_{t \in PPT_i(\mathbf{q}(\mathbf{p}))}\prod_{\substack{j \in t\\ j\neq i}} p_j\right]-c(p_i)$$
and therefore satisfies
\begin{equation}\label{eq:pfoc} c'(p_i) = \mathbb{E} \left[\sum_{t \in PPT_i(\mathbf{q}(\mathbf{p}))}\prod_{\substack{j \in t\\ j\neq i}} p_j\right].\end{equation}
Because $c(p_i)$ is strictly increasing and convex with $c'(0)\geq 0$ and $c(p)\rightarrow \infty$ as $p\rightarrow 1$, the solution set is convex and non-empty. Since $p_j\geq \frac12$ for all $j$ and the sequence $\mathbf{q}(\mathbf{p})$ is critical, the right-hand side of equation (\ref{eq:pfoc}) converges to infinity. So for $n$ large any optimal $p_i \geq \frac12$.

By Kakutani's fixed point theorem, for $n$ large there exists $\mathbf{p}\in [\frac12,1]^n$ such that $p_i$ is optimal for all $i$ under actions $(\mathbf{p},\mathbf{q}(\mathbf{p}))$. This is an investment equilibrium.
\end{proof}

\begin{proof}[Proof of Proposition~\ref{prop:concavity}]

We first show there is no sequence of supercritical symmetric investment equilibria for any $\rho>0$. To do so, we consider firm $i$'s choice of $q_i$ in the supercritical region. As in the proof of Theorem~\ref{critical},
$$\mathbb{E}\left[\binom{|I_i(\mathbf{p}^*,(q_i,q_{-i}^*))|}{k-1}^{\rho}\right]= (p^*)^k(1-(1-\delta \iota(q_i,q^*))^{\alpha(n-1)+o(n)})(\alpha(n-1))^{(k-1)\rho}+o(n^{(k-1)\rho}.$$

In particular, to solve for firm $i$'s choice of $q_i$ to first order, we need only consider technologies consisting of $i$'s private idea and $(k-1)$ ideas learned by the giant component. The probability that such a technology faces competition is
$$(1-\delta \cdot \iota(q_i,q^*) -(1-\delta)\cdot \iota(q_i,q^*) \cdot h(\alpha))^{n-1}+o(1),$$
where again $h(\alpha)$  is the fraction of firms $j$ such that all of $j$'s ideas are learned by some firm that learns all ideas known to the giant component. The term $\delta \cdot \iota(q_i,q^*)$ corresponds to the possibility of a firm $j$ indirectly learning all of firm $i$'s ideas. The term $(1-\delta)\cdot \iota(q_i,q^*) \cdot \alpha$ corresponds to the possibility of a firm learning $i$'s idea via a firm $j$ directly learning (but not indirectly learning) from $i$ and indirectly learning the ideas learned by the giant component.

Thus, we are looking for $q_i$ maximizing:
$$\mathbb{E}\left[\binom{|I_i(\mathbf{p}^*,(q_i,q_{-i}^*))|}{k-1}^{\rho}\right](1-\delta \cdot \iota(q_i,q^*) -(1-\delta)\cdot \iota(q_i,q^*) \cdot h(\alpha))^{n-1}.$$
This expression is equal to
$$\left((p^*)^k(1-(1-\delta \iota(q_i,q^*))^{\alpha(n-1)+o(n)})((\alpha(n-1))^{k-1}+o(n^{k-1}))\right)^{\rho}(1-\delta \cdot \iota(q_i,q^*) -(1-\delta)\cdot \iota(q_i,q^*) \cdot h(\alpha))^{n-1}.$$
Therefore, asymptotically the optimal $q_i$ will be a maximizer of
$$(p^*)^{\rho(k-1)+1}(1-\delta \iota(q_i,q^*))^{\alpha(n-1)+o(n)})((\alpha(n-1))^{k-1}+o(n^{k-1}))^{\rho}(1-\delta \cdot \iota(q_i,q^*) -(1-\delta)\cdot \iota(q_i,q^*) \cdot h(\alpha))^{n-1}.$$
The terms containing $q_i$ do not depend on $\rho$ to first order. Therefore, the optimization problem is the same as in Theorem~\ref{critical}, and the same argument shows there is no supercritical sequence of symmetric investment equilibria.

It remains to define $\underline{\rho}$ suitably and show there is no subcritical sequence of symmetric investment equilibria when for $\rho \geq \underline{\rho}$. We will choose $\underline{\rho}$ to satisfy the conditions in the following lemma:

\begin{lemma}\label{lem:concavity}
If $k=2$ and $\rho \geq 1$, then $\binom{y}{k-1}^{\rho}$ is convex in $y >0$. If $k>2$, there exists $\underline{\rho} <1$ such that $\binom{y}{k-1}^{\rho}$ is convex in $y > 0$ for all $\rho \geq \underline{\rho}.$
\end{lemma}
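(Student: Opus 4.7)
Set $g(y):=\binom{y}{k-1}=\frac{1}{(k-1)!}\prod_{j=0}^{k-2}(y-j)$ and $f_\rho(y):=g(y)^\rho$. The expression $f_\rho$ is unambiguously well-defined and non-negative on $[k-1,\infty)$ as a real function, and at non-negative integer $y$ (where $f_\rho(y)=0$ for $y<k-1$); the substantive content of the lemma is thus convexity on $[k-1,\infty)$, which in turn delivers discrete convexity at all non-negative integers once the single join condition $f_\rho(k)+f_\rho(k-2)\geq 2f_\rho(k-1)$ (equivalently $k^\rho\geq 2$, hence $\rho\geq\log_k 2<1$) is checked.

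The case $k=2$ is immediate: $g(y)=y$ and $f_\rho''(y)=\rho(\rho-1)y^{\rho-2}\geq 0$ on $y>0$ whenever $\rho\geq 1$.

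For $k>2$ I proceed in three steps. \emph{Step 1 (convexity of $g$).} Prove $g''>0$ on $[k-1,\infty)$ by induction on $k$. Writing $g_k(y)=(y-(k-2))\,g_{k-1}(y)$, differentiation yields
\begin{equation*}
g_k''(y)=2g_{k-1}'(y)+(y-(k-2))\,g_{k-1}''(y),
\end{equation*}
with both summands non-negative (in fact the first strictly positive) on $y\geq k-1$: the first because $g_{k-1}$ is a product of positive strictly increasing affine factors on $[k-2,\infty)$, the second by the inductive hypothesis. The base case $k=3$ gives $g_3''\equiv 1$. \emph{Step 2 (second derivative of $f_\rho$).} A direct calculation gives
\begin{equation*}
f_\rho''(y)=\rho\,g(y)^{\rho-2}\bigl[(\rho-1)(g'(y))^2+g(y)g''(y)\bigr].
\end{equation*}
On $[k-1,\infty)$ the prefactor $\rho g^{\rho-2}$ is positive, so $f_\rho''\geq 0$ is equivalent to $\rho\geq 1-H(y)$, where $H(y):=g(y)g''(y)/(g'(y))^2$ is continuous and pointwise positive by Step~1. \emph{Step 3 (choosing $\underline\rho$).} Set $\underline\rho:=1-\inf_{y\geq k-1}H(y)$; it then suffices to show $\inf H>0$. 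Matching leading terms in $g,g',g''$ gives $\lim_{y\to\infty}H(y)=(k-2)/(k-1)>0$, and combining continuity of $H$ with pointwise positivity on a compact interval $[k-1,M]$ for sufficiently large $M$ forces $\inf H>0$.

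The main obstacle will be strict positivity of $H$ at the left endpoint $y=k-1$, where the log-derivative identities $g'/g=\sum_{j=0}^{k-2}(y-j)^{-1}$ and $g''/g=(g'/g)^2-\sum_{j=0}^{k-2}(y-j)^{-2}$ yield
\begin{equation*}
H(y)=1-\frac{\sum_{j=0}^{k-2}(y-j)^{-2}}{\bigl(\sum_{j=0}^{k-2}(y-j)^{-1}\bigr)^2}.
\end{equation*}
Because $k>2$, the sum in the denominator has at least two strictly positive terms, so expanding the square produces strictly positive cross terms and therefore exceeds the sum of squares in the numerator; evaluating at $y=k-1$ gives $H(k-1)=1-H_{k-1}^{(2)}/H_{k-1}^{2}>0$, closing the argument.
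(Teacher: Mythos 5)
Your proof is correct, and it follows the same key decomposition as the paper — namely that $f_\rho'' \geq 0$ reduces to the positivity of $(\rho-1)(g')^2 + g\,g''$ — but it organizes the argument differently. The paper manipulates this expression directly as a pair of degree-$(2k-2)$ polynomials, showing positivity for large $y$ by degree comparison and then handling the remaining compact range by continuity; you instead introduce the ratio $H(y) = g(y)g''(y)/g'(y)^2$ and prove $\inf_{y\geq k-1} H > 0$, so that $\underline{\rho} = 1 - \inf H$ is a clean, explicit threshold. Your log-derivative identity $H = 1 - \bigl(\sum_j (y-j)^{-2}\bigr)/\bigl(\sum_j (y-j)^{-1}\bigr)^2$ is a nice touch: it shows $H>0$ pointwise for $k>2$ immediately (cross terms in the squared sum), and in fact makes your Step~1 induction on $g''>0$ redundant — the identity alone gives $H>0$ on all of $[k-1,\infty)$, and with the limit $H(y)\to (k-2)/(k-1)$ plus continuity you get $\inf H>0$ directly. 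You also address the discrete join condition $f_\rho(k) + f_\rho(k-2) \geq 2 f_\rho(k-1)$ (i.e.\ $\rho \geq \log_k 2$), which the paper elides by simply restricting attention to $y \geq k-1$; since the function is applied to integer counts and is zero below $k-1$, this is a genuine point of care that strengthens the argument. Overall your version is tighter and more self-contained than the paper's, though the engine of the proof is the same.
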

\begin{proof}[Proof of Lemma \ref{lem:concavity}]We can assume $y \geq k-1$. We have $\binom{y}{k-1}=\frac{ \prod_{j=0}^{k-2}(y-j)}{(k-1)!},$ and the right-hand side is defined for all real $y>0$. We will determine the sign of:
$$
 \frac{d^2 }{dy^2} \left(\left(\frac{ \prod_{j=0}^{k-2}(y-j)}{(k-1)!}\right)^{\rho}\right).$$
This expression has the same sign as
$$\frac{d}{dy} \left( \rho \left( \prod_{j=0}^{k-2}(y-j)\right)^{\rho-1} \sum_{i=0}^{k-2} \prod_{j \neq i} (y-j)  \right).$$
This derivative is equal to
\begin{equation}\label{eq:binomsecondderiv}\rho (\rho-1)\left( \prod_{j=0}^{k-2}(y-j)\right)^{\rho-2}\left(\sum_{i=0}^{k-2} \prod_{j \neq i} (y-j)  \right)^2 +\rho \left( \prod_{j=0}^{k-2}(y-j)\right)^{\rho-1} \sum_{i = 0}^{k-2} \sum_{i' \neq i} \prod_{j \neq i,i'} (y-j).\end{equation}
If $\rho \geq 1$, both the first and second term are non-negative for $y \geq k-1$, so expression (\ref{eq:binomsecondderiv}) is non-negative as well.

Suppose $k>2$. Expression (\ref{eq:binomsecondderiv}) has the same sign as \begin{equation}\label{eq:binomsecondderivalt} (\rho-1)\left(\sum_{i=0}^{k-2} \prod_{j \neq i} (y-j)  \right)^2 + \left( \prod_{j=0}^{k-2}(y-j)\right) \sum_{i = 0}^{k-2} \sum_{i' \neq i} \prod_{j \neq i,i'} (y-j).\end{equation}
The first term may be negative if $\rho < 1$, while the second term is positive for $y \geq k-1$. Both are polynomials of degree $2k-2$ in $y$. Therefore, we can choose $\underline{y}$ and $\underline{\rho} < 1$ sufficiently close to $1$ such that expression (\ref{eq:binomsecondderivalt}) is positive for $\rho > \underline{\rho}$ and $y > \underline{y}$.

We want the expression to be positive for $k-1 \leq y \leq \underline{y}$. There are finitely many values, and for each expression (\ref{eq:binomsecondderivalt})  is positive when $\rho$ is sufficiently close to one or at least one. Therefore, increasing $\underline{\rho}$ if needed, we find that expression (\ref{eq:binomsecondderivalt})  is positive for $\rho > \underline{\rho}$ and $y \geq k-1$. This proves the lemma.
\end{proof}

Let $\rho \geq \underline{\rho}$, where $\underline{\rho}=1$ when $k=2$ and $\underline{\rho}$ is chosen as in Lemma~\ref{lem:concavity} for $k>2$.

Suppose there exists a sequence of symmetric investment equilibria with $\limsup_n \delta  \iota({q}^*,{q}^*) n < 1$. Passing to a subsequence if necessary, we can assume that $\delta  \iota({q}^*,{q}^*) n$ converges.

We claim that for $n$ sufficiently large
\begin{equation}\label{eq:rhobigineq} \delta\frac{\partial \iota(q_i,q^*)}{\partial q_i}(q^*)\cdot \mathbb{E}\left[\binom{|I_i(\mathbf{p}^*,\mathbf{q}^*)|}{k-1}^{\rho}\right] < \frac{1}{n-1}\mathbb{E}\left[\frac{\partial \binom{|I_i(\mathbf{p}^*,(q_i,q^*)|}{k-1}^{\rho}}{\partial q_i}(q^*)\right]\end{equation}
for all $i$. Both sides of the inequality converge because $\delta  \iota({q}^*,{q}^*) n$ converges to a limit less than one.

Let $X_1$ be a random variable equal to $|I_i(\mathbf{p},\mathbf{q})|$ with probability $\delta$ and $0$ with probability $1-\delta$. Then the left-hand side of equation~(\ref{eq:rhobigineq}) is equal to
$\mathbb{E}\left[\binom{X_1}{k-1}^{\rho}\right]$
asymptotically.

Let $X_2$ be the random variable with distribution equal to the change in $|I_i(\mathbf{p},\mathbf{q})|$ if firm $i$ learned from an additional firm $j$ chosen uniformly at random. Then the right-hand side of equation~(\ref{eq:rhobigineq}) is equal to $$\mathbb{E}\left[\binom{|I_i(\mathbf{p},\mathbf{q})|+X_2}{k-1}^{\rho}-\binom{|I_i(\mathbf{p},\mathbf{q})|}{k-1}^{\rho}\right]$$
asymptotically.

In this case, with probability $1-\delta$, the firm $i$ only learns directly from firm $j$. With probability $\delta$, firm $i$ learns indirectly through firm $j$, and then learns
$$| I_j(\mathbf{p},\mathbf{q})| - |I_i(\mathbf{p},\mathbf{q}) \cap I_j(\mathbf{p},\mathbf{q})|$$
additional ideas.

The expected cardinality $|I_i(\mathbf{p},\mathbf{q}) \cap I_j(\mathbf{p},\mathbf{q})|$ is $o(1)$, by the same independence argument given in the proof of Lemma~\ref{lem:basicFOC}. Therefore, we can ignore the intersection term in computing the limit of the right-hand side of equation~(\ref{eq:rhobigineq}). Let $\widetilde{X}_2$ be the random variable with distribution equal to the number of ideas firm $i$ learns from firm $j$, including any ideas firm $i$ already knows, i.e., $X_2$ without this intersection term.

Then $\widetilde{X}_2$ first-order stochastically dominates $X_1$, and is one higher with non-vanishing probability. By Lemma~\ref{lem:concavity}, this implies
$$\mathbb{E}\left[\binom{X_1}{k-1}^{\rho}\right]<\mathbb{E}\left[\binom{|I_i(\mathbf{p},\mathbf{q})|+\widetilde{X}_2}{k-1}^{\rho}-\binom{|I_i(\mathbf{p},\mathbf{q})|}{k-1}^{\rho}\right]$$
for $n$ large. It follows that the same inequality holds for $n$ large with $X_2$ replacing $\widetilde{X}_2$, which proves the claim.

So along any sequence of symmetric investment equilibria with $\limsup \delta \iota(q^*,q^*)n < 1$, for $n$ sufficiently large
$$\delta \iota(q^*,q^*)n > \mathbb{E}_{t \in PT_i(\mathbf{p}^*,\mathbf{q}^*)}[\tau(t)]$$
for all $i$. The proof is the same as the proof of Lemma~\ref{lem:tauFOC}, with the approximate equality from Lemma~\ref{lem:basicFOC} replaced by the inequality from equation (\ref{eq:rhobigineq}).

In particular, $\delta \iota(q^*,q^*)n > 1$ for $n$ large, which contradicts the assumption of subcriticality. So any sequence of symmetric investment equilibria is critical.
\end{proof}

\begin{proof}[Proof of Proposition~\ref{prop:convexgeneral}]
The proof follows the same basic outline as the proof of Proposition~\ref{prop:concavity}, with the function $\binom{|I_i(\mathbf{p},\mathbf{q})|}{k-1}^{\rho}$ replaced by $\phi(|I_i(\mathbf{p},\mathbf{q})|)$.

We first show there is no sequence of supercritical symmetric equilibria with $\liminf_n p^*/n > 0$. To do so, we consider firm $i$'s choice of $q_i$ in the supercritical region. As in the proof of Theorem~\ref{critical}, the payoffs to firm $i$ are
$$\mathbb{E}\left[U_i(\mathbf{p}^*,(q_i,q_{-i}))\right]= (1-(1- \iota(q_i,q^*))^{\alpha(n-1)+o(n)})(1-\iota(q_i,q^*))^{n-1}p^*\phi\left(p^*(\alpha(n-1)+o(n^{k-1}))\right)-c(p^*)$$
when the giant component has size $\alpha n + o(n).$

We will bound $\phi(p^*(\alpha (n-1) + y)),$ where $y$ is $o(n)$. This expression is less than or equal to
$$\phi(p^*\alpha (n-1))+p^*y\phi'(p^*(\alpha (n-1) + y)).$$
By the assumption of non-vanishing investment, we have $\lim_n p^*>0$. By our assumption that $\frac{\phi(x_j)}{\phi(x_j')}\rightarrow 1$ when $x_j/x'_j \rightarrow 1$,
we can conclude
$$\phi(p^*(\alpha (n-1) + y))=\phi(p^*\alpha (n-1))+o(\phi(p^*\alpha (n-1))).$$

Therefore, $q_i$ is chosen to maximize 
$$(1-(1- \iota(q_i,q^*))^{\alpha(n-1)+o(n)})(1-\iota(q_i,q^*))^{n-1}+o(1).$$
The maximization is the same as in Theorem~\ref{critical} with $\delta=1$, and the same calculation shows there is no supercritical sequence of symmetric investment equilibria.

The proof that there is no the subcritical sequence of symmetric equilibria with $\liminf_n p^*/n > 0$ is the same as in Proposition~\ref{prop:concavity}, with $\binom{|I_i(\mathbf{p},\mathbf{q}|}{k-1}^{\rho}$ replaced by $\phi(|I_i(\mathbf{p},\mathbf{q}|)$. We no longer need to prove Lemma~\ref{lem:concavity}, as we assume that $\phi(\cdot)$ is convex.
\end{proof}

\begin{proof}[Proof of Proposition~\ref{prop:competition}]

(i): We begin with a lemma, which we will use to show there cannot be a critical sequence of investment equilibria.
\begin{lemma}\label{lem:criticaltau}
For any critical sequence of symmetric actions with $p>0$,
$$\lim_{n \rightarrow \infty} \mathbb{E}_{t \in PT_i(\mathbf{p},\mathbf{q})}[\tau(t)]=1.$$
\end{lemma}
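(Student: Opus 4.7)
The plan is to decompose the proprietary technologies by $\tau$ and show, via critical branching-process moment estimates, that all but a vanishing fraction have $\tau=1$. Condition on $i\in I$ (probability $p>0$) and enumerate $i$'s direct in-neighbors $N^-(i)$. For each $j\in N^-(i)$, let $T_j$ denote the set of non-$i$ ideas that reach $i$ through the edge $j\to i$: if the edge is direct-only, $T_j=\{j\}\cap I$; if also indirect, $T_j=(\{j\}\cup I_j)\cap I$. Up to subtree-overlap events of lower order, $I_i=\bigsqcup_{j\in N^-(i)}T_j$, and $\tau(t)=|\{j\in N^-(i):t\cap T_j\neq\emptyset\}|$ for each $t\in PT_i$. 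Let $N_r$ be the number of proprietary technologies of $i$ with $\tau(t)=r$. Then $\mathbb{E}_{t\in PT_i}[\tau(t)]=\mathbb{E}\bigl[\sum_r rN_r\bigr]/\mathbb{E}\bigl[\sum_r N_r\bigr]$, so, since $\tau\geq 1$, the claim reduces to $\mathbb{E}[N_r]/\mathbb{E}[N_1]\to 0$ for each $r\in\{2,\ldots,k-1\}$.

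The analytic heart of the argument is sharp moment asymptotics for $X_j:=|T_j|$. Conditional on $j\in N^-(i)$ being an indirect link, $T_j$ is the $p$-thinned total progeny of a Galton--Watson branching process with offspring mean $\delta\iota(q,q)(n-1)\to 1$, truncated by the network size. Standard critical branching-process asymptotics (via Otter's formula or Aldous's Brownian-excursion limit) yield $\mathbb{E}[X_j^m]=\Theta(n^{(2m-1)/3})$ for each integer $m\geq 1$, so for any composition $(s_1,\ldots,s_r)$ of $k-1$ with $r\geq 2$ parts of size $s_\ell\geq 1$,
\[
\frac{\mathbb{E}[X^{s_1}]\cdots\mathbb{E}[X^{s_r}]}{\mathbb{E}[X^{k-1}]}=\Theta\bigl(n^{(1-r)/3}\bigr)\to 0.
\]
This is the precise way in which a single subtree dominates any partition of its total mass at criticality. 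Since $|N^-(i)|$ is Binomial with bounded mean, $\mathbb{E}[|N^-(i)|^r]=\Theta(1)$, and near-independence of the $T_j$'s across $j$ yields $\mathbb{E}[N_r]/\mathbb{E}[N_1]=O\bigl(n^{(1-r)/3}\bigr)\cdot(\pi_r/\pi_1)$, where $\pi_r\in(0,1]$ is the asymptotic probability that a generic technology with the chosen $\tau$-configuration is proprietary (no other firm knows all its ideas).

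The main obstacle is controlling $\pi_r/\pi_1$, for which the rough bound $\pi_r/\pi_1\leq 1/\pi_1$ suffices provided $\pi_1$ is bounded below. At criticality the competition probability is non-degenerate: since $\lim_n\iota(q,q)\delta n=1$, the random set of potential competitors (firms with indirect paths downstream of $i$ and upstream of each selected idea) has bounded expected size, so the chance that none of them knows every idea of $t$ is bounded away from $0$. The delicate case is when some $X_j$ reaches the upper end of its critical window ($\sim n^{2/3}$), because then $i$ may sit inside the emerging large component and $\pi_1$ need not be bounded below. I would dispose of this case by splitting: on the good event $\max_j X_j\leq n^{1/3}$, the moment estimates above apply cleanly and $\pi_r/\pi_1=O(1)$ uniformly; on the residual event, the tail bound $\mathbb{P}[\max_j X_j>n^{1/3}]\to 0$ combined with the crude deterministic bound $N_r\leq\binom{|I_i|}{k-1}$ and the tail moment estimates show the contribution is absorbed into $o(\mathbb{E}[N_1])$. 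Assembling the two cases gives $\mathbb{E}[N_r]/\mathbb{E}[N_1]\to 0$ for every $r\geq 2$, which yields the lemma.
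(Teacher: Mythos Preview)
Your moment-ratio approach is genuinely different from the paper's, and the central observation---that at criticality $\prod_\ell\mathbb{E}[X^{s_\ell}]=o(\mathbb{E}[X^{k-1}])$ for any composition with $r\geq 2$ parts---is correct and captures why one subtree dominates. The paper instead fixes a threshold $\omega(n)\to\infty$ with $\omega(n)=o(\mathbb{E}[|I_i|])$, then shows (i) ideas from subtrees of size at most $\omega(n)$ account for a vanishing share of proprietary technologies, and (ii) the event that two subtrees exceed $\omega(n)$ has vanishing probability and increases the count only by a bounded factor; the remaining technologies have $\tau=1$. Your inequality is the same phenomenon expressed through moments rather than a threshold.

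There are two gaps. The minor one: the estimate $\mathbb{E}[X_j^m]=\Theta(n^{(2m-1)/3})$ is specific to the critical window $|\delta\iota(q,q)n-1|=O(n^{-1/3})$, whereas the lemma covers any sequence with $\delta\iota(q,q)n\to 1$. The ratio conclusion does hold uniformly (for instance, in the barely subcritical regime $1-\epsilon_n$ with $\epsilon_n n^{1/3}\to\infty$ one gets $\mathbb{E}[X^m]\asymp\epsilon_n^{-(2m-1)}$ and the ratio is $\Theta(\epsilon_n^{r-1})$), but this must be argued across regimes, not read off the Otter/Aldous exponent.

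The serious gap is the $n^{1/3}$ split. Because $\mathbb{P}[X_j>x]\asymp x^{-1/2}$ up to the cutoff, $\mathbb{E}[X_j^{k-1}]$ and hence $\mathbb{E}[N_1]$ are dominated by the upper tail: your ``residual event'' $\{\max_j X_j>n^{1/3}\}$ carries essentially all of $\mathbb{E}[N_1]$, not a negligible piece. On that event the crude bound $N_r\leq\binom{|I_i|}{k-1}$ is the same order as $N_1$ (one subtree dominates $|I_i|$), so you only obtain $\mathbb{E}[N_r\,\mathbf{1}_{\text{bad}}]=O(\mathbb{E}[N_1])$, not $o(\mathbb{E}[N_1])$. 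Closing the argument on the tail requires showing that, conditional on one large subtree, the others are small and a second large subtree is rare---which is exactly the paper's threshold dichotomy. Incidentally, the $\pi_r/\pi_1$ concern that motivated your split is benign: at criticality the dominant competition event is that some firm learns indirectly from $i$, which is (to first order) independent of $i$'s in-tree, so $\pi_r\sim\pi_1$ and no split is needed on that account.
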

\begin{proof}[Proof of Lemma~\ref{lem:criticaltau}]
We can assume without loss of generality that $p$ is bounded away from zero, because $\mathbb{E}_{t \in PT_i(\mathbf{p},\mathbf{q})}[\tau(t)]$ does not depend on the value of $p$ as long as $p$ is non-zero.

Let $\epsilon>0$. The probability that firm $i$ learns from $d$ firms decays exponentially in $d$.  By Rosenthal's inequality \citep*{rosenthal1970subspaces}, the payoffs to learning from ${d}$ firms grow at most at a polynomial rate in $d$. Thus we can choose $\overline{d}$ such that the contribution to $\mathbb{E}_{t \in PT_i(\mathbf{p},\mathbf{q})}[\tau(t)]$ from the event that firm $i$ learns from more than $\overline{d}$ other firms is at most $\epsilon$ for $n$ large. Since $\epsilon$ is arbitrary, we can restrict our analysis to the event that firm $i$ learns from at most $\overline{d}$ other firms.

We claim that as $n \rightarrow \infty$, we have $\mathbb{E}[{|I_i(\mathbf{p},\mathbf{q})|}] \rightarrow \infty.$ Let $x >0$ and define $q(x)$ by $\iota(q(x),q(x))=\frac{1-x}{\delta n}$. For any $q  < q'$, the random variable $|I_i(\mathbf{p},\mathbf{q})|$ is first-order stochastically dominated by $|I_i(\mathbf{p},\mathbf{q}')|$. So it is sufficient to show that  $$\lim_{x \rightarrow 0}\lim_{n \rightarrow \infty}\mathbb{E}[{|I_i(\mathbf{p},\mathbf{q}(x))|}] \rightarrow \infty.$$

We can bound $|I_i(\mathbf{p},\mathbf{q})|$ below by the expected number of firms $j$ with a path from $j$ to $i$ in the indirect learning network. By Theorem 11.6.1 of \cite*{alon2004probabilistic}, the limit of this quantity as $n\rightarrow \infty$ is equal to the number of nodes in a Poisson branching process with parameter $1-x$. As $x\rightarrow 0$, this number of nodes converges to infinity. This proves the claim.

The proof of Lemma~\ref{lem:criticaltau} will also use the following lemma, which states that learning a large number of ideas at a critical sequence of actions is rare for $n$ large:
\begin{lemma}\label{lem:rareevents}
Let $\omega(n) \rightarrow \infty$. For any critical sequence of symmetric actions with $p>0$, $$\mathbb{P}\left[|I_i(\mathbf{p},\mathbf{q})| > \omega(n)\right] \rightarrow 0$$
as $n\rightarrow \infty$.
\end{lemma}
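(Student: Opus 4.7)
Proof proposal. The plan is to bound $|I_i(\mathbf{p},\mathbf{q})|$ in two stages, reusing the stochastic domination developed in the proof of Lemma~\ref{lem:expdecay}: $|ID_i(\mathbf{q})|$ is dominated by the total progeny $T_n$ of a Poisson branching process with parameter $c_n := \iota(q,q)\delta n$, and conditional on $|ID_i(\mathbf{q})|$, the quantity $|I_i(\mathbf{p},\mathbf{q})|$ is dominated by a sum of $|ID_i(\mathbf{q})|+1$ i.i.d.\ $\text{Bin}(n,\iota(q,q))$ variables $Y_r$. At criticality $c_n \to 1$, so the exponential-decay argument from Lemma~\ref{lem:expdecay} no longer applies; the idea is to approach criticality strictly from above via a supercritical comparison.

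First I would show $\mathbb{P}[|ID_i(\mathbf{q})| > M(n)] \to 0$ for any $M(n) \to \infty$. Fix $\lambda > 0$. For $n$ large, $c_n \leq 1+\lambda$, so by the standard monotonicity of branching-process total progeny in the offspring parameter, $T_n$ is dominated by the total progeny $\tilde T$ of a $\mathrm{Poisson}(1+\lambda)$ branching process. The law of $\tilde T$ does not depend on $n$, and $\mathbb{P}[\tilde T = \infty] = \rho_\lambda$, the survival probability, which satisfies $\rho_\lambda \downarrow 0$ as $\lambda \downarrow 0$. Since $M(n) \to \infty$,
\[
\limsup_{n\to\infty} \mathbb{P}[T_n > M(n)] \;\leq\; \lim_{n\to\infty} \mathbb{P}[\tilde T > M(n)] \;=\; \mathbb{P}[\tilde T = \infty] \;=\; \rho_\lambda,
\]
and letting $\lambda \downarrow 0$ gives the claim for $|ID_i(\mathbf{q})|$.

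Next I would combine this bound with a second-moment argument on the direct-learning layer. At criticality each $Y_r$ has mean $n\iota(q,q) \sim 1/\delta$ and $O(1)$ variance. Taking $M(n) = \lfloor \omega(n)^{1/2}\rfloor$, on the event $\{|ID_i(\mathbf{q})| \leq M(n)\}$ the dominating sum $\sum_{r=1}^{M(n)+1} Y_r$ has mean and variance of order $\omega(n)^{1/2}$, so Chebyshev's inequality bounds the probability it exceeds $\omega(n)$ by $O(\omega(n)^{-3/2})$. Combining,
\[
\mathbb{P}\bigl[|I_i(\mathbf{p},\mathbf{q})| > \omega(n)\bigr] \;\leq\; \mathbb{P}\bigl[|ID_i(\mathbf{q})| > M(n)\bigr] + \mathbb{P}\Bigl[\textstyle\sum_{r=1}^{M(n)+1} Y_r > \omega(n)\Bigr] \longrightarrow 0.
\]

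The main obstacle is the branching-process step. Because the offspring mean sits exactly at the phase transition, exponential tail decay as in Lemma~\ref{lem:expdecay} is unavailable. The remedy of dominating by a fixed supercritical branching process works because its survival probability $\rho_\lambda$ can be made arbitrarily small \emph{before} sending $n\to\infty$; the direct-learning step is then routine since at criticality each summand has $O(1)$ mean.
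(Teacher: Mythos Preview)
Your proof is correct and follows essentially the same approach as the paper's: both dominate the critical branching process by a fixed $\mathrm{Poisson}(1+\lambda)$ branching process whose survival probability $\rho_\lambda$ can be made arbitrarily small, and then handle the direct-learning layer separately. The only cosmetic difference is that the paper fixes finite thresholds $\bar y,\bar y'$ (independent of $n$) and uses that $\omega(n)$ eventually exceeds them, whereas you let the intermediate threshold $M(n)=\lfloor\omega(n)^{1/2}\rfloor$ grow and close with Chebyshev; the key idea---approaching criticality from the supercritical side so that the extinction-probability slack does the work---is identical.
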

\begin{proof}[Proof of Lemma \ref{lem:rareevents}]
Let $\epsilon>0$. We want to prove that $\mathbb{P}\left[|I_i(\mathbf{p},\mathbf{q})| > \omega(n)\right] <\epsilon$ for $n$ large.  This holds by the same argument as in the proof of Lemma~\ref{lem:uniquebr}.
\end{proof}

We now return to the proof of Lemma~\ref{lem:criticaltau}. Choose $\omega(n)\rightarrow \infty$ such that $$\frac{\omega(n)}{\mathbb{E}[{|I_i(\mathbf{p},\mathbf{q})|}]} \rightarrow  0.$$

We have assumed that $i$ learns from at most $\overline{d}$ other firms. We can order these firms from $1$ to $\overline{d}$. For each of these firms $j$, let the additional ideas $AI_{j}$ be the set of ideas that firm $i$ learns from firm $j$ and has not learned from any previous firm $1,\hdots,j-1$ in our ordering.

We claim that as $n \rightarrow \infty$, a vanishing share of proprietary technologies include ideas that firm $i$ learns only from firms $j$ with $AI_j\leq \omega(n).$
The number of such ideas is bounded above by $\omega(n) \overline{d}$. So the number of proprietary technologies including at least one such idea is bounded above by
\begin{equation}\label{eq:notonlyrare}\mathbb{E}\left[\binom{|I_i(\mathbf{p},\mathbf{q})|+\overline{d}\omega(n)}{k-2}\cdot (\overline{d}\omega(n))\right]=\overline{d}\omega(n)\mathbb{E}\left[\binom{|I_i(\mathbf{p},\mathbf{q})|+\overline{d}\omega(n)}{k-2}\right] ,\end{equation}
while the total number of proprietary technologies is on the same order as
\begin{equation}\label{eq:onlyrare}\mathbb{E}\left[\binom{|I_i(\mathbf{p},\mathbf{q})|}{k-1 }\right]\geq\mathbb{E}\left[\binom{|I_i(\mathbf{p},\mathbf{q})|}{k-2 }\right]\frac{\mathbb{E}\left[|I_i(\mathbf{p},\mathbf{q})|\right]}{k-1}.\end{equation}
Since $$\frac{\omega(n)}{\mathbb{E}[{|I_i(\mathbf{p},\mathbf{q})|}]} \rightarrow  0,$$
the quotient of expression (\ref{eq:notonlyrare}) divided by expression (\ref{eq:onlyrare}) vanishes as $n\rightarrow \infty$.

Let $\epsilon>0$. For $n$ sufficiently large, Lemma~\ref{lem:rareevents} implies that the probability that $AI_j > \omega(n)$ is at most $\epsilon$. We will show that the contribution to $\mathbb{E}_{t \in PT_i(\mathbf{p},\mathbf{q})}[\tau(t)]$ from the event that $AI_j > \omega(n)$ for more than one $j$ can be taken to be small.

We condition on the event that $AI_j > \omega(n)$ for at least one $j$. Then the probability that $AI_j > \omega(n)$ for at least one other $j$ is bounded above by $\overline{d}\epsilon$, while the expected number of proprietary technologies increases by at most a constant multiplicative factor (depending on $\overline{d}$) in this case. Since $\epsilon$ can be taken to be arbitrarily small, it follows that the contribution to $\mathbb{E}_{t \in PT_i(\mathbf{p}^*,\mathbf{q}^*)}[\tau(t)]$ from the event that $AI_j > \omega(n)$ for more than one $j$ vanishes as $n\rightarrow \infty$.

The remaining technologies in $PT_i(\mathbf{p},\mathbf{q})$ consist of firm $i$'s private idea and $k-1$ ideas learned from a single firm $j$. We thus have $\tau(t)=1$ for each of the remaining technologies $t \in PT_i(\mathbf{p},\mathbf{q}).$ This shows that $ \mathbb{E}_{t \in PT_i(\mathbf{p},\mathbf{q})}[\tau(t)] \rightarrow 1$, which proves the lemma.
\end{proof}

We will show that if $\liminf_n p>0$ and $\limsup_n \iota(q,q) n \delta\leq 1,$ then given symmetric actions $(\mathbf{p},\mathbf{q})$,
$$\frac{\partial U_i(\mathbf{p},\mathbf{q})}{\partial q_i}(q)>0$$
for $n$ sufficiently large. In words, given symmetric actions in the subcritical or critical region, increasing $q_i$ would increase payoffs. We can assume without loss of generality that $p$ is bounded away from zero, because the sign of this derivative is independent of $p$.

Recall $D_i(\mathbf{q})$ is the set of firms $j$ such that there is a path from $i$ to $j$ in the indirect-learning network. We claim that when $\limsup_n \iota(q,q) \delta n \leq 1$, a.a.s. a random technology $t \in T_i(\mathbf{p},\mathbf{q})$ is known by $|D_i(\mathbf{q})|$ other firms.
 
We can write $\iota(q,q) \delta n=1+\epsilon$, where $\limsup_n \epsilon \leq 0$. Let $\overline{\epsilon} = \max(\epsilon, n^{-1/3}\log n ).$ By the `No Middle Ground' claim from p. 210-211 of \cite*{alon2004probabilistic}, the probability that a given node in an undirected random network with link probability $\frac{1+\epsilon}{n}$ is contained in a component of cardinality at least $\overline{\epsilon}n$ is at most $n^{-2k-1}$ for $n$ large.

A standard correspondence states that the size of the component containing a given node in an undirected random graph first-order stochastically dominates the number of nodes reachable by a path from that node in a directed random graph with the same link probability (see for example \citealp*{luczak1990phase}). So the probability that a given idea is learned indirectly by more than  $\overline{\epsilon}n$ firms is at most $n^{-2k-1}$ for $n$ large. Thus, we can choose a constant such that the probability that any idea is learned indirectly by more than  $\overline{\epsilon}n$ firms is at most $n^{-2k}$ for $n$ large. Since there are $\binom{n}{k}$ potential technologies, we can condition on the event that no idea is learned indirectly by more than   $\overline{\epsilon}n$ firms.

Now, choose $t \in T_i(\mathbf{p},\mathbf{q})$ and let $j \in T_i(\mathbf{p},\mathbf{q})$. The number of firms that learn idea $j$ from each firm that indirectly learns $j$ is bounded above by a Poisson random variable with parameter $1/\delta$ (by the same argument as in the proof of Lemma~\ref{lem:basicFOC}). So we can assume that at most $C \overline{\epsilon}n $ firms learn the idea for some constant $C>0$. For a firm $j' \notin D_i(\mathbf{q})$ to know all ideas in $t$, that firm must learn $j$ and directly learn $i$. Each of the $C \overline{\epsilon} n$ firms that learn $j$ will directly learn $i$ with probability at most $\frac{1+\epsilon}{\delta n}$, so the probability that any of these firms directly learns $i$ vanishes asymptotically. This proves the claim.

Thus, the payoff to firm $i$ is
$$U_i(\mathbf{p},\mathbf{q}) =  \mathbb{E}\left[f(|D_i(\mathbf{q})|)|T_i(\mathbf{p},\mathbf{q})|\right]-c(p_i)+o(\mathbb{E}\left[|PT_i(\mathbf{p},\mathbf{q})\right]).$$

Thus,
$$\frac{\partial U_i(\mathbf{p},\mathbf{q})}{\partial q_i}(q)=  \mathbb{E}\left[\frac{\partial f(|D_i(\mathbf{q})|)|}{\partial q_i}(q)|T_i(\mathbf{p},\mathbf{q})|\right]+ \mathbb{E}\left[\frac{\partial |T_i(\mathbf{p},\mathbf{q})|}{\partial q_i}(q)f(|D_i(\mathbf{q})|)\right]+o(\mathbb{E}\left[|PT_i(\mathbf{p},\mathbf{q})\right]).$$

We claim this is equal to
\begin{equation}\label{eq:payoffderiv}  \mathbb{E}\left[\frac{\partial f(|D_i(\mathbf{q})|)|}{\partial q_i}(q)\right]\mathbb{E}\left[|T_i(\mathbf{p},\mathbf{q})|\right]+ \mathbb{E}\left[\frac{\partial |T_i(\mathbf{p},\mathbf{q})|}{\partial q_i}(q)\right]\mathbb{E}\left[f(|D_i(\mathbf{q})|)\right]+o(\mathbb{E}\left[|PT_i(\mathbf{p},\mathbf{q})\right]).\end{equation}
The relevant random variables are independent conditional on the event that $$I_i(\mathbf{p},\mathbf{q})\cap D_i(\mathbf{q}) = \emptyset$$
and this intersection remains empty after adding an additional incoming or outgoing link.
Because $\limsup_n \iota(q,q) n \delta\leq 1,$
this occurs asymptotically almost surely. We must show the contributions to $T_i(\mathbf{p},\mathbf{q})$ from the vanishing probability event that this intersection is non-empty vanish as $n \rightarrow \infty.$ This follows from the bounds on the probability of this event in the `No Middle Ground' claim from p. 210-211 of \cite*{alon2004probabilistic}, and the argument is the same as above.

Because $f$ is non-negative with $f(0)=1$ and $f(1) > 0$, we can choose $\epsilon > 0$ such that
$$-\frac{1}{q n}\cdot\mathbb{E}\left[\frac{\partial f(|D_i(\mathbf{q})|)|}{\partial q_i}(q)\right] < \mathbb{E}\left[f(|D_i(\mathbf{q})|)\right] - \epsilon$$
for all $n$. Here, the left-hand side is equal to the decrease in $f(|D_i(\mathbf{q})|)$ when an additional firm learns from firm $i$, which is at most $f(|D_i(\mathbf{q})|)$ and will be smaller with non-vanishing probability.

At a subcritical sequence of actions, we have
$$\limsup_n \iota(q,q)  \delta n \leq \liminf_n \mathbb{E}_{t \in PT_i(\mathbf{p},\mathbf{q})}[\tau(t)].$$
By the same argument used to prove Lemma~\ref{lem:tauFOC}, this implies that
$$\mathbb{E}\left[|T_i(\mathbf{p},\mathbf{q})|\right]< \frac{1}{qn}\cdot\mathbb{E}\left[\frac{\partial |T_i(\mathbf{p},\mathbf{q})|}{\partial q_i}(q)\right]$$
for all $n$ sufficiently large. At a critical sequence of actions, Lemma~\ref{lem:criticaltau} shows that
$$\lim_{n \rightarrow \infty} \mathbb{E}_{t \in PT_i(\mathbf{p},\mathbf{q})}[\tau(t)]=1.$$
So at a critical sequence of actions, it follows from Lemma~\ref{lem:criticaltau} and the definition of $\tau(t)$ that
$$\mathbb{E}\left[|T_i(\mathbf{p},\mathbf{q})|\right]\sim  \frac{1}{qn}\cdot\mathbb{E}\left[\frac{\partial |T_i(\mathbf{p},\mathbf{q})|}{\partial q_i}(q)\right].$$

In either case, substituting into expression (\ref{eq:payoffderiv}) we obtain
$$\frac{\partial U_i(\mathbf{p},\mathbf{q})}{\partial q_i}(q)>0$$
in the subcritical or critical region for $n$ sufficiently large. This proves the claim, so any sequence of symmetric investment equilibria is supercritical.

(ii): Suppose there exists a critical or supercritical sequence of symmetric investment equilibria. We claim that 
$$\frac{\partial U_i(\mathbf{p}^*,(q_i,{q}^*_{-i}))}{\partial q_i}(q^*)<0,$$
which will give a contradiction. We can again assume without loss of generality that $p$ is bounded away from zero, because the sign of this derivative is independent of $p$.

First suppose there exists a critical sequence of investment equilibria. Lemma~\ref{lem:criticaltau} shows that
$$\lim_{n \rightarrow \infty} \mathbb{E}_{t \in PT_i(\mathbf{p}^*,\mathbf{q}^*)}[\tau(t)]=1.$$
As a consequence,
$$\mathbb{E}\left[\binom{|I_i(\mathbf{p}^*,\mathbf{q^*})|}{k-1}\right]\sim \frac{1}{qn}\cdot\frac{\partial \mathbb{E}\left[\binom{|I_i(\mathbf{p}^*,{(q_i,q_{-i}^*)})|}{k-1}\right]}{\partial q_i}(q^*).$$
Therefore,
$$\frac{\partial \mathbb{E}\left[|PT_i(\mathbf{p}^*,{(q_i,q_{-i}^*)})|\right]}{\partial q_i}(q^*)\sim0.$$
On the other hand,
$$\frac{\partial \mathbb{E}\left[|T_i(\mathbf{p}^*,{(q_i,q_{-i}^*)})|\right]}{\partial q_i}(q^*)>0$$
for $n$ large, and increasing $q_i$ weakly increases the number of firms $f(m)$ who know each technology. Since firm $i$ receives negative profits from each $t \in T_i(\mathbf{p}^*,{(q_i,q_{-i}^*)})$ that is not proprietary, firm $i$'s profits are decreasing in $q_i$ at $q_i=q^*$.

We next suppose there exists a supercritical sequence of equilibria. Let $m$ be the number of firms other than $i$ who learn idea $i$ and all ideas known to all firms that learn from the giant component. The expected profits from choosing $q_i$ are
$$(p^*)^k\binom{\alpha n}{k-1}(1- (1-\delta \iota(q_i,q^*))^{\alpha n})f(\mathbb{E}_{t \in T_i(\mathbf{p}^*,{(q_i,q_{-i}^*)})}[m])+o(n^{k-1}),$$
where $\alpha$ is the share of ideas learned by all firms in the giant component.

Conditional on the event that $m>0$, increasing $q_i$ will weakly decrease expected payoffs because increasing $m$ and increasing $|I_i(\mathbf{p}^*,(q_i,q^*_{-i}))|$ both weakly decrease payoffs. Moreover, this increase is strict, because the probability of learning the ideas in the giant component is strictly higher under higher $q_i$.

Now consider the event that $m=0$. We showed in the proof of Theorem~\ref{critical} that when $f(m)=0$ for all $m>0$, the increase in expected payoffs from an additional incoming link is less than the decrease in expected payoffs from an additional outgoing link. Since decreasing $f(m)$ for $m>0$ does not change the effect of an additional incoming link but decreases the expected payoffs from an additional outgoing link, it follows that when $q_i=q^*$, increasing $q_i$ will weakly decrease expected payoffs conditional on the event $m=0$.

Therefore, when $q_i=q^*$, increasing $q_i$ will weakly decrease expected payoffs unconditionally, which gives our contradiction. We have checked the critical and subcritical cases, so this completes the proof of Proposition~\ref{prop:competition}.
\end{proof}

\begin{proof}[Proof of Corollary~\ref{cor:competition}]
Recall that the discovery rate $D_n(\mathbf{p}^*,\mathbf{q}^*)\rightarrow 0$ at any subcritical sequence of equilibria while $\liminf_n D_n(\mathbf{p}^*,\mathbf{q}^*)> 0$ at any supercritical sequence of equilibria with non-vanishing investment. Therefore the corollary follows from Proposition~\ref{prop:competition}.
\end{proof}

\begin{proof}[Proof of Proposition~\ref{prop:competitionquant}]
(i): We first show that we can choose $x^*$ such that for $x<x^*$, an investment equilibrium exists for $n$ large.

We first show that for $n$ sufficiently large and $x$ sufficiently small, there exists a unique $q^*$ such that the first-order condition for the level of openness $q_i$ is satisfied when all firms choose $q_j=q^*$.
Given a technology $t$ and a firm $i$, let $m_i(t)$ be the number of other firms that know all ideas in technology $t$. Let $\Delta(\mathbb{E}_{t \in T_i(\mathbf{p}^*,\mathbf{q}^*)}[f_x(m_i(t))])$ be the expected decrease in $\mathbb{E}_{t \in T_i(\mathbf{p}^*,\mathbf{q}^*)}[f_x(m_i(t))]$ from an additional outgoing link from firm $i$.

Firm $i$ learns $\alpha n +o(n)$ ideas if $i$ learns from the giant component and $o(n)$ ideas otherwise. Asymptotically, the first-order condition implies
$$ \delta \alpha (1-\alpha) \mathbb{E}_{t \in T_i(\mathbf{p}^*,\mathbf{q}^*)}[f_x(m_i(t))]  \binom{\alpha n}{k-1} \sim \alpha  \Delta (\mathbb{E}_{t \in T_i(\mathbf{p}^*,\mathbf{q}^*)}[f_x(m_i(t))])\binom{\alpha n}{k-1}  .$$
On the left-hand side, the term $ \delta \alpha (1-\alpha) $ is the probability that firm $i$ has not yet learned the ideas learned by the giant component, but does with an additional incoming link. On the right-hand side, the term $\alpha$ is the probability that firm $i$ has learned the ideas learned by the giant component.

For $\alpha>0$, we obtain
$$ \delta  (1-\alpha) \mathbb{E}_{t \in T_i(\mathbf{p}^*,\mathbf{q}^*)}[f_x(m_i(t))]  \sim   \Delta(\mathbb{E}_{t \in T_i(\mathbf{p}^*,\mathbf{q}^*)}[f_x(m_i(t))]).$$
The contribution to the right-hand side from an outgoing link with only direct learning goes to zero as $\alpha \rightarrow 0$. The decrease in $f_x(m_i(t))$ is at most $f_x(m_i(t))$, and when $x>0$ is less than $f_x(m_i(t))$ with positive probability. So for $n$ sufficiently large, the right-hand side is less than the left-hand side when $x>0$ and $\alpha$ is sufficiently small.

By the assumption that $g(m) \rightarrow 0$, we can choose $\epsilon>0$ such that the marginal value of increasing $q_i$ at a symmetric action $q^*$ will be negative for $\alpha \geq 1-\epsilon$ and $n$ sufficiently large. By the intermediate value theorem, when $n$ is sufficiently large the first-order condition for $q_i$ is satisfied when all firms choose $q_j$ equal to some $q^*$. We next show this $q^*$ is unique.

We claim that for $x$ sufficiently small, the derivative of \begin{equation}\label{eq:margvaluex} \Delta (\mathbb{E}_{t \in T_i(\mathbf{p}^*,\mathbf{q}^*)}[f_x(m_i(t))])- \delta  (1-\alpha) \mathbb{E}_{t \in T_i(\mathbf{p}^*,\mathbf{q}^*)}[f_x(m_i(t))]  \end{equation}
in $\alpha$ is at least $C$ for some $C>0$ when $\alpha<1-\epsilon$ and $n$ is sufficiently large. When $x=0$, expression~(\ref{eq:margvaluex}) is at least
$$( \delta - \delta(1-\alpha)) e^{-\iota(q^*,q^*)h(\alpha)n},$$
where $h(\alpha)$ is the share of firms whose idea is known to some firm that has learned all ideas known to the giant component. So the claim holds for $x=0$ for with $C=\delta\epsilon e^{-\iota(q^*,q^*)h(1-\epsilon)n}$, where $q^*$ is the action such that $\alpha=1-\epsilon$. By continuity, the claim holds for $x$ sufficiently small. Since expression~(\ref{eq:margvaluex}) must be equal to zero (up to lower order terms) whenever the first-order condition holds, the first-order condition is uniquely satisfied for $n$ sufficiently large.

To show the choice of openness $q^*$ satisfying the first-order condition is optimal, we must show that $q^*$ is a best response for $x>0$ sufficiently small when $n$ is sufficiently large. There is not a positive boundary solution for $x>0$ by our assumption that $g(m)\rightarrow 0$, so it is sufficient to show that the first-order condition is uniquely satisfied. It is sufficient to analyze $q_i \leq \overline{q}$ for some $\overline{q}>0$.

Suppose other firms choose $q_{-i}$ and the giant component size is $\alpha$. The marginal value of an additional incoming link is
$$ \delta \alpha e^{- \delta \iota(q_i,q_{-i}) \alpha n} \mathbb{E}_{t \in T_i(\mathbf{p}^*,(q_i,q_{-i}))}[f_x(m_i(t))]  \binom{\alpha n}{k-1} + o(n^{k-1}).$$
The marginal cost of an additional outgoing link is 
$$(1-e^{- \delta \iota(q_i,q_{-i}) \alpha n}) \mathbb{E}_{t \in T_i(\mathbf{p}^*,(q_i,q_{-i}))}[f_x(m_i(t))]  \binom{\alpha n}{k-1} + o(n^{k-1}).$$

We claim that for $x$ sufficiently small, these expressions are equal at a unique $q_i$ when $n$ is sufficiently large. We showed in the proof of Theorem~\ref{critical} that the derivative of their difference is bounded away from zero when $x=0$, and the same is true for $x$ sufficiently small by continuity. This proves the claim and establishes existence.

(ii): The analysis in part (i) showed that asymptotically, the giant component size $\alpha$ is characterized by $$ \delta  (1-\alpha) \mathbb{E}_{t \in T_i(\mathbf{p}^*,\mathbf{q}^*)}[f_x(m_i(t))]  \sim   \Delta(\mathbb{E}_{t \in T_i(\mathbf{p}^*,\mathbf{q}^*)}[f_x(m_i(t))]).$$
So $\alpha$ converges along any sequence of equilibria with non-vanishing investment. Because $p^*\rightarrow 1$ at a sequence of equilibria with non-vanishing investment, the limit of the discovery rate $\lim_n D(\mathbf{p}^*,\mathbf{q}^*)$ exists and is a strictly increasing function of $\lim_n \alpha$.

(iii): We will show that $\lim_n \alpha$ is increasing in $x$ for $x<x^*$, where $x^*$ is chosen so that the properties above hold. We define $\widetilde{f}_x(m) = f_x(m)/x$, so that $\widetilde{f}_x(0)$ varies in $x$ while $\widetilde{f}_x(m)$ is constant in $x$ for $m>0$.\footnote{Note that we no longer require the normalization $f(0)=1$ on payoffs.} The analysis under payoffs $f_x(m)$ and $\widetilde{f}_x(m)$ is the same up to rescaling the cost function $c( \cdot )$, which does not effect $\lim_n \alpha$. So we can show $\lim_n \alpha$ is increasing in $x$ under payoffs $\widetilde{f}_x(m)$.

Fix $x<x^*$ and let $\alpha$ be the giant component size with payoffs $\widetilde{f}_x(m)$, so that $\alpha$ satisfies
$$ \delta  (1-\alpha) \mathbb{E}_{t \in T_i(\mathbf{p}^*,\mathbf{q}^*)}[\widetilde{f}_x(m_i(t))]  \sim   \Delta (\mathbb{E}_{t \in T_i(\mathbf{p}^*,\mathbf{q}^*)}[\widetilde{f}_x(m_i(t))]),$$
Increasing $x$ strictly increases the left-hand side and strictly decreases the right-hand side (because only the monopoly payoffs are affected), so for $n$ sufficiently large and $x<x'<x^*$ the difference
$$ \delta  (1-\alpha) \mathbb{E}_{t \in T_i(\mathbf{p}^*,\mathbf{q}^*)}[\widetilde{f}_{x'}(m_i(t))]  -  \Delta (\mathbb{E}_{t \in T_i(\mathbf{p}^*,\mathbf{q}^*)}[\widetilde{f}_{x'}(m_i(t))])$$
is non-vanishing. At the equilibrium action, this difference must vanish.

Since we showed in part (i) that this difference is decreasing in the equilibrium action for $x<x^*$ and $n$ sufficiently large, the limit of the giant component size at the equilibrium action under payoffs $\widetilde{f}_{x'}(m)$ must be greater than $\alpha$. Thus $\lim_n \alpha$ is increasing in $x$ when $x<x^*$.

When the giant component learns $\alpha+o(n)$ ideas, the discovery rate is $(p^*)^k\alpha^{k}+o(1)$. Since $p^*\rightarrow 1$, the limit of the discovery rate $\lim_n D(\mathbf{p}^*,\mathbf{q}^*)$ is a strictly increasing function of $\lim_n \alpha$. So the limit of the discovery rate is also increasing in $x$.
\end{proof}

\begin{proof}[Proof of Proposition~\ref{prop:patent}]
(i): Suppose there exists a sequence of investment equilibrium with $n \rightarrow \infty$.

We first consider the first-order condition for $q_i(1)$. The expected payoff to firm $i$ with a patent when $k=2$ and link costs are $\epsilon$ is
$$p_ip^* q^*(0)q_i(1) (1-b)(n-1) - \epsilon (q_i(1)(q^*(0)(1-b)+q^*(1)b)(n-1)-c(p_i).$$
The first term now does not depend on whether other firms learn the ideas involved in the technologies that firm $i$ can produce. The second term is the expected link cost.

The expected payoff is linear in $q_i(1),$ so the coefficient of $q_i(1)$ must be non-negative at any investment equilibrium. Therefore
$$(p^*)^2q^*(0)(1-b) \geq \epsilon(q^*(0)(1-b)+q^*(1)b) .$$
If equality holds, then firms with patents are indifferent to all choices of interaction rates. But then firms without patents would not choose positive interaction rates, which they must at any investment equilibrium. So the inequality is strict.

Because payoffs are strictly increasing in $q_i(1)$ on $[0,1]$, we have $q^*(1)=1$. Thus the inequality
$$(p^*)^2q^*(0)(1-b) > \epsilon(q^*(0)(1-b)+q^*(1)b)$$
implies that $\liminf_n q^*(0)$ is positive. But if all interaction rates are bounded below by constants, the probability that a firm $j$ without a patent receives monopoly profits from a given technology $t$ decays exponentially. Since firm $j$'s link costs are linear in $n$, firm $j$'s expected payoff is negative. This cannot occur at equilibrium, so we have a contradiction.


(ii): It is weakly dominant for all firms to choose $q^*(1)=1$, and is strictly optimal at any investment equilibrium. So for any $\delta > 0$, almost surely all firms with patents are in the same component of the indirect-learning network. All firms in this component learn $\alpha n$ ideas for some $\alpha > 0$.

Suppose that $q^*(0)n \rightarrow \infty$ and consider firm $i$ without a patent. The probability that no firm with a patent learns indirectly from firm $i$ decays exponentially in $q^*(0)n$, and so profits must be $o(n^{k-1})$. But firm $i$ could receive higher profits by deviating to choose $q_i(0)=\frac{1}{\delta n}$, as this would give profits of order $n^{k-1}$. So it must be the case that $q^*(0)$ is $O(1/n)$, and thus $$\iota(q^*(0),q^*(0))=(q^*(0))^2 = O(1/n^2)$$
as desired.
\end{proof}

\section{Direct Learning}\label{sec:direct}

We now analyze the model from Section \ref{sec:model} in the case $\delta=0$. Then firms can only learn directly from other firms, and not indirectly.
\begin{proposition}\label{prop:direct}
When $\delta = 0$, there exists a symmetric investment equilibrium for $n$ large, and at any sequence of symmetric investment equilibria $$\lim_n \iota(q^*,q^*)n^{\frac1k}= (k-1)^{\frac1k}.$$ 
\end{proposition}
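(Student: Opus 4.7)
The plan is to compute firm $i$'s expected payoff explicitly in closed form when $\delta=0$ (which is possible because ideas spread at most one step, so there are no branching-process issues), take the first-order condition in $q_i$ at a symmetric profile, and read off the scaling of $q^*$. Existence then follows by a Kakutani argument closely paralleling the one in the proof of Theorem~\ref{critical}.

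Step 1: Write the payoff. Fix symmetric opponents with $p_j=p>0$ and $q_j=q$ and let firm $i$ deviate to $q_i$. For a candidate technology $t=\{i,j_1,\ldots,j_{k-1}\}$, independence of learning realizations gives $\mathbb{P}[t\in T_i\mid i\in I]=(pq_iq)^{k-1}$, and competition from another firm $i'\neq i$ requires $i'$ to learn each idea in $t\setminus\{i'\}$ directly. The probability that a firm $i'\in t$ knows all of $t$ is $q_iq^{2k-3}$ (one link to $i$ plus $k-2$ links to the other members of $t$), while for $i'\notin t$ it is $q_iq^{2k-1}$. Combining,
\[
U_i(\mathbf{p},(q_i,q_{-i})) \;=\; p\binom{n-1}{k-1}(pq_iq)^{k-1}(1-q_iq^{2k-3})^{k-1}(1-q_iq^{2k-1})^{n-k} - c(p_i).
\]
The log-derivative in $q_i$ at $q_i=q^*$, multiplied by $q^*$, yields the symmetric FOC
\[
k-1 \;=\; \frac{(k-1)q^{*2(k-1)}}{1-q^{*2(k-1)}}+\frac{(n-k)q^{*2k}}{1-q^{*2k}}.
\]

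Step 2: Solve for the scaling. I will show $q^*\to 0$ with $q^{*2k}n$ of constant order. First, I rule out $q^*$ bounded away from zero (competition becomes overwhelming, making the right-hand side blow up) and $q^{*2k}n\to 0$ (then the right-hand side vanishes, contradicting the equation). Given $q^*\to 0$ and $q^{*2k}n$ bounded, the terms $q^{*2(k-1)}$ and $q^{*2k}$ both vanish, so the denominators tend to $1$; and since $k\geq 2$, $q^{*2(k-1)}$ is of smaller order than $nq^{*2k}$, so the first term on the right is $o(1)$. The equation then collapses to $(n-k)q^{*2k}\to k-1$, which is precisely $\iota(q^*,q^*)n^{1/k}\to (k-1)^{1/k}$.

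Step 3: Existence. For $n$ large, define $BR(q)$ to be the set of best responses $q_i$ when others play $(p,q)$ with $p>0$; the above FOC analysis (applied for arbitrary $q$, not only at equilibrium) shows that $BR(q)$ is contained in a compact interval centered around $((k-1)/n)^{1/(2k)}$ and bounded away from $0$ and $1$. Kakutani's theorem yields a symmetric fixed point $q^*$. Given $q^*$, the expected number of potential proprietary technologies $\mathbb{E}[|PPT_i(\mathbf{q^*})|]$ is a polynomial of positive order in $n$, so the FOC $c'(p^*)=(p^*)^{k-1}\mathbb{E}[|PPT_i(\mathbf{q^*})|]$ has a solution $p^*\in(0,1)$ by the assumptions on $c$, giving a symmetric investment equilibrium.

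The main obstacle is Step 2: justifying that at any \emph{investment} equilibrium $q^*$ really does lie in the regime where the two Taylor approximations $(1-q^{*2(k-1)})^{-1}\approx 1$ and $(1-q^{*2k})^{n-k}\approx e^{-nq^{*2k}}$ are simultaneously valid. This requires two-sided bounds on $q^*$ before the FOC is applied, obtained by noting that at $q$ much larger than $n^{-1/(2k)}$ the competition exponent $(n-k)q^{2k-1}$ blows up so firm $i$ strictly gains by decreasing $q_i$, while at $q$ much smaller than this scale the marginal learning benefit $(k-1)/q$ dwarfs the competition term so firm $i$ strictly gains by increasing $q_i$. Once $q^*$ is pinned to the correct scale, the first-term-is-lower-order argument completes the proof.
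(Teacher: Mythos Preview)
Your proof is correct and follows essentially the same approach as the paper: write the expected payoff explicitly, take the first-order condition in $q_i$ at the symmetric profile, and solve for the scaling of $\iota(q^*,q^*)$. The only difference is cosmetic: the paper approximates the no-competition probability upfront as $(1-\iota(q_i,q^*)\iota(q^*,q^*)^{k-1})^{n-1}+o(1)$, lumping the $k-1$ in-technology competitors with the $n-k$ others, whereas you keep the exact factor $(1-q_iq^{2k-3})^{k-1}(1-q_iq^{2k-1})^{n-k}$ and then show the first term is lower order in the FOC; both routes land on the same asymptotic equation. Your existence sketch via Kakutani and the $c'(p^*)=(p^*)^{k-1}\mathbb{E}[|PPT_i|]$ first-order condition also parallels the existence argument used for Theorem~\ref{critical}.
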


Interaction rates are much higher than in the indirect-learning case, because without indirect learning much more interaction is needed for competition to be a substantial force. The expected number of ideas learned by each firm is now $O(n^{\frac{k-1}{k}})$, which is still asymptotically lower than in the supercritical case with indirect learning (where the expected number of ideas learned is linear in $n$).

\begin{proof}
For $n$ large, the expected number of potential technologies that firm $i$ produces and which include firm $i$'s private idea is
$$|T_i(\mathbf{p}^*,(q_i,{q}^*_{-i}))| \sim \frac{(p^*)^k}{(k-1)!}(\iota(q_i,q^*)(n-1))^{k-1}$$
The probability that no other firm produces any such technology is
$$(1-\iota(q_i,q^*)\iota(q^*,q^*)^{k-1})^{n-1}+o(1).$$
So $q_i$ is chosen to maximize the number of potential proprietary technologies for $i$:
\begin{equation}\label{eq:directmonopolies}|PT_i(\mathbf{p}^*,(q_i,{q}^*_{-i}))| \sim \frac{(p^*)^k}{(k-1)!}(\iota(q_i,q^*)(n-1))^{k-1}(1-\iota(q_i,q^*)\iota(q^*,q^*)^{k-1})^{n-1}.\end{equation}
The first-order condition for $q_i$ is
$$(k-1)\left(\frac{\partial \iota(q,q^*)}{\partial q}(q_i)\right)(1-\iota(q_i,q^*)\iota(q^*,q^*)^{k-1}) \sim (n-1)\iota(q_i,q^*)\left(\frac{\partial \iota(q,q^*)}{\partial q}(q_i)\right)\iota(q^*,q^*)^{k-1}.$$

We claim that we must have $\iota(q^*,q^*) \rightarrow 0$ at any sequence of equilibria. Else, expected payoffs would vanish asymptotically, but firms could achieve non-vanishing profits by choosing any $q_i$ such that the interaction rate $\iota(q_i,q^*)$ is proportional to $\frac1n$. Thus, the first-order condition implies:
$$\lim_n (n-1)\iota(q_i,q^*)\iota(q^*,q^*)^{k-1} = k-1.$$
At equilibrium, this implies
$$\iota(q^*,q^*) \sim (\frac{k-1}{n-1})^{\frac1k}$$
as desired.
\end{proof}

We can observe from the proof that adding a constant link cost $\epsilon>0$ would not change the result of Proposition~\ref{prop:direct}. This contrasts with Proposition~\ref{prop:patent}(i), where there is no investment equilibrium for any positive $\epsilon$. We next discuss direct learning with patent rights when $k>2$.

\subsection{Patents}\label{sec:patentapp}

Proposition~\ref{prop:patent} considers granting patent rights to some types of ideas when $\delta=0$ and $k=2$. We found that adverse selection interactions prevents the emergence of an investment equilibrium.

We now extend the analysis to $k>2$. There is now an investment equilibrium, but the same adverse selection effect implies that firms with patents choose much lower levels of openness than without patents.

Suppose that a fraction $b\in(0,1)$ of firms receive patents, as in Proposition~\ref{prop:patent}. We maintain the assumption that $\delta=0$. 

\begin{proposition}Suppose a fraction $b\in(0,1)$ of firms receive patents and $\delta=0$. For $k>2$, at any sequence of symmetric invsetment equilibria
$$\lim_n q^*(0) n^{1/k} = \left(\frac{k-1}{b} \right)^{1/k}.$$
\end{proposition}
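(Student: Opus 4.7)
The plan is to adapt the direct-learning analysis of Proposition~\ref{prop:direct} to the setting with patent rights, exploiting the fact that patent firms choose $q^*(1)=1$ and therefore act as the dominant channel through which non-patent firms face competition. As a first step, at any investment equilibrium every patent firm's expected payoff is strictly increasing in its openness: a patent firm is paid for a technology whenever it knows all ideas and its own idea is in the technology, and raising interaction only increases the stock of ideas it has learned, with no offsetting competitive loss since the patent itself protects the payoff. So $q^*(1)=1$ at any investment equilibrium.

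Fix a non-patent firm $i$ deviating to $q_i(0)$ while all other non-patent firms play $q^*(0)$ and all patent firms play $1$. A proprietary technology for $i$ must consist of $i$'s idea together with $(k-1)$ non-patented ideas learned directly (each appearing at rate $q_i(0)q^*(0)p^*$ across $(1-b)(n-1)$ non-patent sources), so by the same binomial-concentration argument used in Proposition~\ref{prop:direct} the expected number of potential such technologies is asymptotic to
$$\frac{\bigl((1-b)(n-1)\,q_i(0)\,q^*(0)\,p^*\bigr)^{k-1}}{(k-1)!}.$$
Competition on a fixed such $t$ comes from two sources: (a) a non-patent firm $j\neq i$ that directly learns $i$'s idea and each of the other $k-1$ ideas, contributing probability $\sim q_i(0)\,q^*(0)\cdot (q^*(0))^{2(k-1)}=q_i(0)\,(q^*(0))^{2k-1}$, and (b) a patent firm $j'$ that directly learns all $k$ ideas in $t$, contributing probability $\sim q_i(0)\,(q^*(0))^{k-1}$, since each link involving a patent firm carries rate $q^*(0)$ (or $q_i(0)$) rather than $(q^*(0))^2$. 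Summing,
$$\mathbb{E}[\text{competitors on }t]\ \sim\ bn\,q_i(0)\,(q^*(0))^{k-1}+(1-b)n\,q_i(0)\,(q^*(0))^{2k-1},$$
so whenever $q^*(0)\to 0$ the patent channel strictly dominates, by a factor $(q^*(0))^k\to 0$.

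Using the Poisson approximation $\mathbb{P}[\text{no competition on }t]\sim\exp(-bn\,q_i(0)\,(q^*(0))^{k-1})$, the first-order condition in $q_i(0)$ evaluated at $q_i(0)=q^*(0)$ reduces to
$$\frac{k-1}{q^*(0)}\ \sim\ bn\,(q^*(0))^{k-1},$$
which rearranges to $(q^*(0))^k \sim (k-1)/(bn)$ and hence $\lim_n q^*(0)\,n^{1/k}=((k-1)/b)^{1/k}$. To close the argument I would verify self-consistency by ruling out alternative scales: if $q^*(0)$ is bounded away from $0$ then $bn(q^*(0))^{k-1}$ blows up, no-competition probability decays exponentially, and payoffs vanish, so a deviation to $q_i(0)=\Theta(n^{-1/k})$ would be strictly profitable; conversely if $q^*(0)=o(n^{-1/k})$ then the marginal benefit of an additional link (of order $(k-1)/q^*(0)$ times current profits) strictly exceeds the marginal competition cost $bn(q^*(0))^{k-1}$ times current profits, so firms would deviate upward.

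The main obstacle I anticipate is keeping the lower-order corrections under control well enough to pin down the leading constant $((k-1)/b)^{1/k}$ exactly: one must verify that the $(q^*(0))^{2k-1}$ non-patent competition term, the corrections in passing from $\binom{X}{k-1}$ to $X^{k-1}/(k-1)!$ for $X=|I_i(\mathbf{p}^*,\mathbf{q}^*)|$ restricted to non-patent ideas, and the dependence between the competitor events across distinct firms $j,j'$, all contribute only $(1+o(1))$ multiplicative corrections uniformly in $q_i(0)$ over a neighborhood of $q^*(0)$. These corrections can be handled by the same independence and concentration arguments used in Proposition~\ref{prop:direct}, once the dominant patent-firm competition channel has been isolated.
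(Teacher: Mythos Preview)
Your approach is correct and essentially mirrors the paper's: both establish $q^*(1)=1$, write the expected number of proprietary technologies for a non-patent firm as (a constant times) $(q_i(0)q^*(0)(1-b)n)^{k-1}$ multiplied by a no-competition probability dominated by the patent-firm channel---the paper writes it as $(1-q_i(0)q^*(0)^{k-1})^{bn}$, which is your Poisson limit $\exp(-bn\,q_i(0)\,(q^*(0))^{k-1})$---and then solve the resulting first-order condition $(k-1)/q^*(0)\sim bn\,(q^*(0))^{k-1}$. Your explicit separation of the two competition channels and your self-consistency check on the scale of $q^*(0)$ are a bit more detailed than the paper's presentation, but the substance is the same.
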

\begin{proof}
It is weakly dominant for patent rights choose $q^*(1)=1$, and this action is the best response at an investment equilibrium. We claim that at any sequence of symmetric investment equilibria, $q^*(0)n\rightarrow \infty.$
The probability that all ideas in a given technology $t$ including $i$ are known to another firm is
$$(1-q^*(0)^{k})^{bn}+o(1).$$
This converges to zero whenever $q^*(0) n^{1/k} \rightarrow 0$, so if $q^*(0)n$ were bounded along any subsequence then firm $i$ could profitably deviate by increasing $q_i$ when $n$ is large.

Therefore, by the law of large numbers, for a firm $i$ without patents choosing $q_i(0)$ when other firms choose equilibrium actions $p^*_{-i}$ and $q^*_{-i}$, the number of ideas learned from firms without patents is $$|I_i((p_i,{p}^*_{-i}),(q_i,{q}^*_{-i}))| \sim p^*(0) q_iq^*(0)(1-b)n.$$ So the expected number of proprietary technologies for a firm without patents choosing $p_i$ and  $q_i(0)$ is
$$\mathbb{E}[|PT_i((p_i,{p}^*_{-i}),(q_i,{q}^*_{-i}))| ] \sim p_i(0)(p^*(0))^{k-1}(1-q_i(0)q^*(0)^{k-1})^{bn} \binom{q_iq^*(0)(1-b)n}{k-1}.$$
Note that we use our explicit formula $\iota(q_i,q_j) = q_iq_j$  for the interaction rate here.

We can approximate the binomial coefficient with its highest-order term, so taking the first-order condition and cancelling terms gives:
$$bn q^*(0)^{k-1} (q_iq^*(0)(1-b)n)\sim (k-1)(1-b)nq^*(0)(1-q_i(0)q^*(0)^{k-1}) .$$
Taking $q_i=q^*(0)$ and solving, $$q^*(0) \sim  \left(\frac{1}{b}\cdot \frac{k-1}{n}\right)^{1/k}$$
as claimed above.
\end{proof}

As a result, the interaction rate between firms without patents is 
$$\iota(q^*(0),q^*(0)) \sim \left(\frac{1}{b}\cdot \frac{k-1}{n}\right)^{2/k},$$
which is lower order than the interaction rate in Proposition~\ref{prop:direct}. The payoffs to firms without patents are therefore of lower order than in Proposition~\ref{prop:direct}, where no patent rights are granted.

The interaction rate between a firm with a patent and a firm without a patent is
$$\iota(q^*(0),q^*(1)) \sim  \left(\frac{1}{b}\cdot \frac{k-1}{n}\right)^{1/k},$$
which is the same order as the interaction rate in Proposition~\ref{prop:direct}. The payoffs to firms with patents are therefore of the same order as in Proposition~\ref{prop:direct}, where no patent rights are granted.

\section{Firm Size}\label{sec:size}

The baseline model assumed that each firm can discover a single idea. In this section, we consider firms that can instead discover $1 < \sigma < k$ private ideas.\footnote{The assumption that $\sigma<k$ is not essential, but rules out investment equilibria with little or no interaction, e.g., all firms choose $q_i=0$ but private investment $p_i>0$.}

A firm with size $\sigma>1$ can frictionlessly share ideas internally without fear of competition. Similarly, we can interpret a firm of size $\sigma>1$ as the entity created by a licensing agreement between $\sigma$ small firms.\footnote{Decisions about private investment may be different in these two cases, but this will not affect our analysis.}
 
More formally, we continue to let the set of firms be $\{1,\hdots,n\}$ but now allow multiple ideas for each firm. Each of the $\sigma$ ideas corresponding to firm $i$ is discovered independently with probability $p_i$. A firm learning $i$ directly from $j$ will learn all private ideas discovered by firm $j$. The analysis from Sections \ref{sec:equilibrium} and \ref{sec:generalprofits}, including Theorem~\ref{critical} and Propositions~\ref{prop:concavity} and \ref{prop:competition}, extends easily to any firm size $\sigma < k$.

We will now compare the payoffs of firms of two different sizes $\sigma$ and $\sigma'$. Suppose there are fixed positive shares of firms of each size. We can think of the exercise as measuring the value of increasing firm size.

\begin{proposition}\label{prop:firmsize}
If firm $i$ can discover $\sigma$ ideas and firm $i'$ can discover $\sigma'$ ideas, then at any sequence of investment equilibria
$$ \lim_{n \rightarrow \infty} \frac{U_i(\mathbf{p}^*,\mathbf{q}^*)}{U_{i'}(\mathbf{p}^*,\mathbf{q}^*)}=\frac{\sigma}{\sigma'}.$$
\end{proposition}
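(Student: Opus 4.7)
The plan is to show that at any sequence of investment equilibria, a firm's expected profits scale linearly in its size $\sigma$ to leading order, so that the ratio $U_i/U_{i'}$ tends to $\sigma/\sigma'$. By the claimed extension of Theorem~\ref{critical} to firms of size $\sigma<k$, any such sequence is critical, and in particular $\mathbb{E}[|I_i(\mathbf{p}^*,\mathbf{q}^*)|]\to\infty$ for every firm $i$. I would first argue, from the analogue of the first-order condition in Lemma~\ref{lem:basicFOC} for firm $i$ of size $\sigma$, that the overall factor $\sigma$ appearing in both the expected marginal cost of an outgoing link and the expected marginal benefit of an incoming link cancels, so the equation pinning down $q_i^*$ is independent of $\sigma$ and all firms share a common asymptotic limit for $q_i^*$. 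The first-order condition for $p_i$ then forces $c'(p_i^*)\to\infty$ (since $\mathbb{E}[\binom{|I_i|}{k-1}]\to\infty$), and hence $p_i^*\to 1$, so that $p_i^*/p_{i'}^*\to 1$ and $q_i^*/q_{i'}^*\to 1$.

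Next I would decompose the count of proprietary technologies. Let $D_i$ be the random number of $i$'s private ideas that are discovered (Binomial with parameters $\sigma$ and $p_i^*$) and $Y_i=|I_i(\mathbf{p}^*,\mathbf{q}^*)|$. On the event that no firm has learned indirectly from $i$ --- which, as in the proof of Lemma~\ref{lem:basicFOC}, carries the only asymptotically relevant contribution to competition at subcritical or critical interaction rates --- the number of technologies that $i$ produces as a monopolist and that include at least one of $i$'s discovered ideas equals $\binom{D_i+Y_i}{k}-\binom{Y_i}{k}$. Vandermonde's identity and the independence of $D_i$ from $Y_i$ (since $D_i$ depends only on $i$'s private coin flips, while $Y_i$ depends on those of other firms and on learning realizations) give
\[
\mathbb{E}\!\left[\binom{D_i+Y_i}{k}-\binom{Y_i}{k}\right]
=\sigma\, p_i^*\,\mathbb{E}\!\left[\binom{Y_i}{k-1}\right]
+O\!\left(\mathbb{E}\!\left[\binom{Y_i}{k-2}\right]\right).
\]
Multiplying by the probability $\Pi_i^*$ that no firm learns indirectly from $i$, and noting that $c(p_i^*)$ remains bounded while the expected revenue diverges, yields
\[
U_i(\mathbf{p}^*,\mathbf{q}^*)=\sigma\,p_i^*\,\Pi_i^*\,\mathbb{E}\!\left[\binom{Y_i}{k-1}\right](1+o(1)).
\]

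To finish I would form the ratio $U_i/U_{i'}$: each of the factors $p_i^*/p_{i'}^*$, $\Pi_i^*/\Pi_{i'}^*$, and $\mathbb{E}[\binom{Y_i}{k-1}]/\mathbb{E}[\binom{Y_{i'}}{k-1}]$ depends on the firm only through $q_i^*$ (the other firms' actions are common to both ratios), and therefore each tends to $1$ as $q_i^*/q_{i'}^*\to 1$. This delivers $U_i/U_{i'}\to \sigma/\sigma'$ as required.

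The main obstacle will be justifying that $\mathbb{E}[\binom{Y_i}{k-2}]=o(\mathbb{E}[\binom{Y_i}{k-1}])$ at the critical threshold, so that the contributions from proprietary technologies built using two or more of a firm's own private ideas are genuinely lower order than those built using exactly one. I would establish this by coupling $Y_i$ from above and below by multi-type Poisson branching processes, as in the proofs of Lemmas~\ref{lem:expbound} and~\ref{lem:criticaltau}, to show that the typical value of $Y_i$ grows without bound at the critical value of $q_i^*$, which strictly separates the orders of successive factorial moments and thus validates the asymptotic expansion.
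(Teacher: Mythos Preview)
Your overall strategy is right and shares the paper's central idea: once equilibrium is not subcritical the number of ideas learned diverges, so technologies using two or more of a firm's own private ideas contribute a vanishing share, and profits scale linearly in $\sigma$. Your Vandermonde decomposition in step~4 is a clean way to isolate the leading $\sigma\,p_i^*\,\Pi_i^*\,\mathbb{E}\bigl[\binom{Y_i}{k-1}\bigr]$ term, and the obstacle you flag---$\mathbb{E}\bigl[\binom{Y_i}{k-2}\bigr]=o\bigl(\mathbb{E}\bigl[\binom{Y_i}{k-1}\bigr]\bigr)$---is exactly the paper's key step. The paper handles it by truncation rather than branching-process coupling: since $U_i\to\infty$, for any $\epsilon>0$ there is $y$ such that at least $(1-\epsilon)$ of $\mathbb{E}[|PT_i|]$ comes from the event $\{Y_i>y\}$, and on that event the fraction of technologies using two or more private ideas is $O(1/y)$ pointwise.

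The gap is step~2. You observe that $\sigma$ cancels from the first-order condition for $q_i$ and conclude that ``all firms share a common asymptotic limit for $q_i^*$.'' The cancellation is correct, but the conclusion about maximizers does not follow: a common first-order condition can have multiple solutions, and the paper explicitly allows asymmetric investment equilibria in which firms facing identical leading-order optimization problems select different $q_i^*$. Without $q_i^*/q_{i'}^*\to1$, the ratios $\Pi_i^*/\Pi_{i'}^*$ and $\mathbb{E}\bigl[\binom{Y_i}{k-1}\bigr]\big/\mathbb{E}\bigl[\binom{Y_{i'}}{k-1}\bigr]$ in step~5 are not controlled.

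The paper avoids this by a mimicking (deviation) argument. Let the size-$\sigma$ firm $i$ play $(p_{i'}^*,q_{i'}^*)$. Under this deviation $i$ faces the same $Y$, $\Pi$, and $p$ as $i'$ (up to a single-firm perturbation of the environment), so your own step-4 formula gives deviation payoff $(\sigma/\sigma')\,U_{i'}\,(1+o(1))$. Optimality of $i$'s equilibrium action then yields $U_i\ge(\sigma/\sigma')\,U_{i'}\,(1+o(1))$; reversing roles gives the matching inequality. This compares maximal \emph{values} rather than maximizers and works uniformly across critical and supercritical equilibria (the paper's proof covers both). Substituting this deviation bound for your step~2 closes the gap, and the rest of your argument goes through unchanged.
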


The proposition says that when payoffs are such that equilibrium is critical or supercritical, then small and large firms obtain the same payoffs per idea asymptotically. An implication is that merging two separate firms would increase their profits by very little for $n$ large.

We can give intuition in the case $\sigma=1$ and $\sigma'=2$. Two separate firms of size one can each potentially produce technologies by combining their private idea with $n-1$ ideas learned from others. A firm of size two can also potentially produce technologies by combining either of its private ideas with $n-1$ ideas learned from others. In addition, the firm of size two can produce technologies by combining both of its private ideas with $n-2$ ideas learned from others. These additional technologies generate any excess profits for the larger firm over the two smaller firms. Because $\binom{y}{k-1}$ is much larger than $\binom{y}{k-2}$ for $y$ large, the additional technologies have a small impact on profits in large markets.

When equilibrium is subcritical, as under the payoff structure of Proposition~\ref{prop:competition}(ii), we have
$$ \lim_{n \rightarrow \infty} \frac{U_i(\mathbf{p}^*,\mathbf{q}^*)}{U_{i'}(\mathbf{p}^*,\mathbf{q}^*)}>\frac{\sigma}{\sigma'}.$$
In this case, because firm profits are bounded asymptotically, technologies using multiple private ideas will generate a non-vanishing share of a firm's profits.

An important assumption in Proposition~\ref{prop:firmsize} is that $\sigma$ and $\sigma'$ do not depend on $n$, so that firms are still small relative to the overall market. A firm that can discover a non-vanishing fraction of all ideas can obtain much higher payoffs per idea than small firms, as such a firm will obtain payoffs of order $n^{k-1}$ even without interacting with other firms.

\begin{proof}[Proof of Proposition~\ref{prop:firmsize}]
We can show that equilibrium is critical or supercritical by a modification of the argument used to prove Theorem~\ref{critical} and Proposition~\ref{prop:competition}, which we now describe.

A version of Lemma~\ref{lem:basicFOC} still applies at any subcritical equilibrium. The statement and proof must be modified, as in the proof of the subcritical asymmetric case of Theorem~\ref{critical}, to accommodate heterogeneity in firms. Because $\sigma <k$, we must have $\tau(t) \geq 1$ for all $t$. Because $\delta q^*n$ is equal to the expectation of $\tau(t)$ with respect to a suitable distribution over technologies $t$, we cannot have a subcritical equilibrium.

Therefore, we have $\lim_{n \rightarrow \infty} U_i(\mathbf{p}^*,\mathbf{q}^*) = \infty$ for firms $i$ of either size.

So for any integer $y>0$ and any $\epsilon>0$, we have $$\mathbb{E}\left[|PT_i(\mathbf{p}^*,\mathbf{q}^*)| \mathbf{1}_{|I_i(\mathbf{p}^*,\mathbf{q}^*)|>y}\right] \geq (1-\epsilon)\mathbb{E}\left[|PT_i(\mathbf{p}^*,\mathbf{q}^*)| \right]$$
for $n$ sufficiently large, where $\mathbf{1}$ is the indicator function. That is, almost all of the profits of firm $i$ are generated in the event that firm $i$ learns at least $y$ ideas.

Because $$\lim_{y \rightarrow \infty} \frac{\binom{y}{k-l}}{\binom{y}{k-1}} = 0$$
for all $l>1$, in expectation at least a share $1-\epsilon$ of technologies in $PT_i(\mathbf{p}^*,\mathbf{q}^*)$ include only one private idea developed by firm $i$ (of either size).

Suppose $\sigma < \sigma'$ and fix firms $i$ of size $\sigma$ and $i'$ of size $\sigma'$. The preceding facts imply that by choosing $(p_i,q_i) = (p_{i'}^*,q_{i'}^*)$, for any $\epsilon>0$ the firm $i$ can guarantee
$$\mathbb{E}\left[|PT_i((p_i,p^*_{-i}),(q_i,q^*_{-i})| \right]\geq (1-\epsilon)^3\mathbb{E}\left[|PT_{i'}(\mathbf{p}^*,\mathbf{q}^*)|\right]$$
for $n$ sufficiently large. Here the first two factors of $(1-\epsilon)$ correspond to the share of proprietary technologies including only one private idea developed by firm $i'$, while we introduce the third because at least a share $1-\epsilon$ of proprietary technologies for firm $i'$ do not include idea $i$. This implies the result.
\end{proof}

This section introduced heterogeneity in firm size. Our results can similarly accommodate heterogeneity in other parameters, such as the complexity $k$ of products produced by a firm or the private investment cost $c(\cdot)$.

\section{Public Innovators and Directed Interaction}\label{sec:publicdirected}

We now show that the result of Proposition~\ref{prop:public} continues to apply if firms can direct their interactions toward private firms or public innovators.

As in Section~\ref{sec:public}, public innovator $i$ pays investment cost $c(p_i)$ and receives a payoff of one for each technology $t$ such that: (1) $i \in t$ and (2) $j \in \{i\} \cup I_i(\mathbf{p},\mathbf{q}) $ for all $j \in t$.

All firms have the same incentives as in the baseline model. Public innovators and firms can now choose two interaction rates $q_{i0}$ and $q_{i1}$, where $q_{i0}$ is the interaction rate with public innovators and $q_{i1}$ is the interaction rate with private firms.

We show payoffs again grow at the same rate as in the supercritical region, up to a constant factor:
\begin{proposition}\label{prop:publicdirected}
Suppose a non-vanishing share of agents are public innovators. Then there exists a sequence of symmetric equilibria with non-vanishing investment, and at any sequence of equilibria with non-vanishing investment
$$\liminf_n \frac{U_i(\mathbf{p}^*,\mathbf{q}^*)}{\binom{n-1}{k-1}}>0$$
for all firms $i$.
\end{proposition}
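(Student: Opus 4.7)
The plan is to mirror the proof of Proposition~\ref{prop:public}, exploiting the fact that private firms' richer action space makes it strictly easier to construct a profitable deviation. Two observations do most of the work. First, public innovators still choose $q_{j0}=q_{j1}=1$ at any investment equilibrium, since they face no outgoing-link cost (weakly dominant, and strictly preferred under non-vanishing investment). Second, with these choices the pairwise interaction rate among public innovators is $1$, so the restricted indirect-learning subgraph is, for each ordered pair, an independent Bernoulli$(\delta)$ edge on $bn$ nodes. This is a.a.s.\ a single strongly connected component, and combined with non-vanishing investment each public innovator learns $\alpha n + o(n)$ ideas a.a.s.\ for some $\alpha>0$.

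To establish the payoff lower bound I would fix a private firm $i$ and analyze the explicit deviation $(p_i,q_{i0},q_{i1})=(p^*,c/n,0)$, with $c>0$ a free constant to be optimized at the end. The model's independence assumptions imply that, for each public innovator $j$, the events ``$i$ learns indirectly through $j$'' (probability $\delta c/n$) and ``$j$ learns directly from $i$'' (probability $c/n$) are independent, and are mutually independent across $j$. Hence the event $A$ that $i$ learns indirectly from at least one public innovator has asymptotic probability $1-e^{-\delta cb}$ and implies that $i$ knows the $\alpha n + o(n)$ ideas of the giant component, while the event $B$ that no public innovator learns directly from $i$ has asymptotic probability $e^{-cb}$ and implies that no public innovator knows $i$'s private idea. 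Because $q_{i1}=0$, no private firm has any direct interaction with $i$, so on $B$ no agent in the economy other than $i$ knows $i$'s private idea, precluding competition on any technology containing it. On $A\cap B$, firm $i$ is thus the monopolist on each of the roughly $\binom{\alpha n}{k-1}$ technologies combining its private idea with $k-1$ ideas from the giant component, giving expected proprietary-technology count at least $p^*(1-e^{-\delta cb})e^{-cb}\binom{\alpha n}{k-1}(1+o(1))=\Theta(n^{k-1})$. Any equilibrium strategy for $i$ must weakly beat this deviation, yielding the claimed $\liminf$. Existence of a symmetric equilibrium then follows from Kakutani's fixed-point theorem on the action space $[0,1]^3$, exactly as in the proof of Proposition~\ref{prop:public}: the deviation above shows that best responses never collapse to zero interaction with public innovators, and the analogue of Lemma~\ref{lem:actionbound} keeps actions bounded away from values that would cause guaranteed competition.

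I expect the main subtlety to lie in verifying that no competition can arise on $A\cap B$ once one accounts for indirect chains of learning that might route $i$'s private idea through a public innovator to a competing private firm. Under the deviation $q_{i1}=0$ no private firm has a direct link to or from $i$, so the only path by which $i$'s private idea could leak is a direct outgoing link to some public innovator; conditioning on $B$ kills every such link simultaneously, leaving nothing for the indirect-learning cascade to propagate. This clean factorization is precisely what delivers the independence of $A$ and $B$, and is the feature that makes the directed-interaction model no harder to handle than the undirected case of Proposition~\ref{prop:public}.
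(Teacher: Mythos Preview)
Your argument is essentially the paper's own proof. The paper likewise notes that public innovators set $q_{i0}=q_{i1}=1$, that they form a giant component learning $\Theta(n)$ ideas, and then exhibits the deviation $q_{i0}=1/n$, $q_{i1}=0$ for a private firm to establish the $\Theta(n^{k-1})$ lower bound; your version with the free constant $c$ and the explicit $A$/$B$ decomposition is a cleaner write-up of the same idea. One small point: for existence the paper does not quite proceed ``exactly as in Proposition~\ref{prop:public}'' but rather constructs a symmetric equilibrium in which all private firms set $q_{1}=0$ (each is then indifferent over $q_{i1}$, so $q_{i1}=0$ is a best response), reducing the fixed-point problem to the single coordinate $q_0$; you should make this reduction explicit rather than appealing to a fixed point on all of $[0,1]^3$.
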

\begin{proof}

Let $b(n)$ be the share of public innovators for each $n$.

We first show that $$\liminf_n \frac{U_i(\mathbf{p}^*,\mathbf{q}^*)}{\binom{n-1}{k-1}}>0$$
for all $i$ at any sequence of equilibria with non-vanishing investment.

It is weakly dominant and strictly preferred at any investment equilibrium for public innovators to choose $q^*_{i0}=q^*_{i1}=1.$
 Therefore, all public innovators are in the same component of the learning network. Private investment $p_i$ by public innovators is non-vanishing, so asymptotically almost surely all firms in this component learn at least $\alpha n$ ideas for some $\alpha > 0$.

Each private firm can obtain expected payoffs $O(n^{k-1})$ by choosing $q_{i0}=\frac{1}{n}$ and $q_{i1}=0$. This is because then the probability that firm $i$ learns indirectly from a public innovator and no firm $j$ learns from $i$ is non-vanishing, and the payoffs from this event are $O(n^{k-1})$. This shows the desired bound on $\frac{U_i(\mathbf{p}^*,\mathbf{q}^*)}{\binom{n-1}{k-1}}$.

It remains to show there exists a sequence of symmetric equilibria with non-vanishing investment. Suppose that all public innovators choose $p_0\geq \frac12$ and $q_{i0}=q_{i1}=1$ and all firms other than $i$ choose $(p_1,q_0,q_1)$ with $p_1 \geq \frac12,$ $\delta q_{0}n \leq 1$ and $q_1=0$. If $q_{i0}$ is the best response for $i$, then $\lim_n q_{i0}n$ exists and is independent of $p_0$ and $(p_1,q_0,q_1)$ given the constraints in the previous sentence. This is because the probability of interactions between $i$ and other firms vanishes asymptotically, while the best response does not depend on the number of ideas learned by the unique giant component.

Therefore, we can choose $\epsilon>0$ such that if $q_0 \in [\frac{\epsilon}{\delta n},1],$ then so is an arbitrary firm $i$'s best response $q_{i0}$. Because all other private firms choose $q_{j1}=0$, firm $i$ is indifferent to all choices of $q_{i1}$ and in particular $q_{i1}=0$ is a best response. We claim that for $n$ large, given $\mathbf{p}$, there exists $\mathbf{q}$ such that each each $q_i$ is a best response to $(p_0,p_1,q_0,q_1).$ This follows from Kakutani's fixed point theorem as in the proof of Theorem~\ref{critical}. We call this choices of openness $q_0(p_0,p_1)$.

Given such $(p_0,p_1,q_0(p_0,p_1))$, each firm has a non-vanishing probability of learning a linear number of ideas. Therefore, $\mathbb{E}[|I_i(\mathbf{p},\mathbf{q})|]\rightarrow \infty.$ So any best response $p_i$ for each public innovator has $p_i \geq \frac12$, and the same holds for each firm. By Kakutani's fixed point theorem, there exists $(p_0,p_1,q_0(p_0,p_1))$ such that $p_0 \geq \frac12$ and $p_1 \geq \frac12$ are also best responses. Thus there exists a sequence of equilibria with non-vanishing investment.
\end{proof}

\end{document}